\documentclass[a4paper,english,11pt]{article}
\usepackage[utf8]{inputenc}
\usepackage[T1]{fontenc}
\usepackage{babel}
\usepackage{graphicx}
\usepackage[left=80pt,
	    textwidth=440pt,
	    textheight=680pt,
	    top=90pt,
	    footskip=40pt]{geometry}

\usepackage{color}
\usepackage{amsmath}
\usepackage{amssymb}
\usepackage{amsthm}
\newtheorem{theorem}{Theorem}[section]
\newtheorem{lemma}[theorem]{Lemma}
\newtheorem{proposition}[theorem]{Proposition}

\theoremstyle{remark}

\theoremstyle{definition}
\newtheorem{definition}[theorem]{Definition}
\newtheorem{remark}[theorem]{Remark}
\newtheorem{notation}[theorem]{Notation}

\usepackage{multirow}
\usepackage{enumitem}
\usepackage{setspace}

\newcommand{\R}{\mathbb{R}}
\newcommand{\E}{\mathbb{E}}

\renewcommand{\vec}[1]{\boldsymbol{#1}}

\newcommand{\LEPTmu}{\ensuremath{\textrm{LEPT}_{\mathcal{F}}}}
\newcommand{\FLEPT}{\ensuremath{\textrm{LEPT}_{\mathcal{F}}}}
\newcommand{\LEPTLS}{\textrm{LEPT}_{\mathcal{P}}}

\newcommand{\OPT}{\ensuremath{OPT_\mathcal{P}}}

\newcommand{\PoFA}{\operatorname{PoFA}}

\newcommand{\PoNS}{\operatorname{PoNS}}

\def\clap#1{\hbox to 0pt{\hss#1\hss}}

\title{Stochastic Extensible Bin Packing\footnotemark[3]}

\author{Guillaume Sagnol\footnotemark[1] \ and
Daniel Schmidt genannt Waldschmidt\footnotemark[1]\\[1em]
{\small Technische Universität Berlin, Fakultät II, Institut für Mathematik,}\\
{\small MA 5-2, Straße des 17. Juni 136, 10623 Berlin, Germany.}\\
\texttt{ \{sagnol,dschmidt\}@math.tu-berlin.de}}
\date{}

\begin{document}

\maketitle

\renewcommand*{\thefootnote}{\fnsymbol{footnote}}
\footnotetext[1]{This research was funded by the Deutsche Forschungsgemeinschaft (DFG, German Research Foundation) under Germany's Excellence Strategy – The Berlin Mathematics Research Center MATH+ (EXC-2046/1, project ID: 390685689).}

\footnotetext[2]{ Some results of this article have been presented in a conference~\cite{SST18}, and we extend them in different directions. In particular, we prove a result which was only conjectured in the conclusion of~\cite{SST18}, cf.\ Section~\ref{sec:long}, and we give improved
performance guarantees for instances with 
processing times coming from given families of probability distributions, cf.\ Section~\ref{sec:stochastic_dominance}.}

\begin{abstract}
 We consider the \emph{stochastic extensible bin packing problem} (SEBP) in which $n$ items of stochastic size are packed into $m$ bins of unit capacity. In contrast to the classical bin packing problem, the number of bins is fixed and they can be extended at extra cost. This problem plays an important role in stochastic environments such as in surgery scheduling: Patients must be assigned to operating rooms beforehand, such that the regular capacity is fully utilized while the amount of overtime is as small as possible.
 
 This paper focuses on essential ratios between different classes of policies: First, we consider the price of non-splittability, in which we compare the optimal non-anticipatory policy against the optimal fractional assignment policy. We show that this ratio has a tight upper bound of $2$. Moreover, we develop an analysis of a fixed assignment variant of the LEPT rule yielding a tight approximation ratio of $(1+e^{-1}) \approx 1.368$. 
 Furthermore, we prove that the price of fixed assignments, related to the benefit of adaptivity, which describes the loss when restricting to fixed assignment policies, is within the same factor. This shows that in some sense, LEPT is the best fixed assignment policy we can hope for. We also provide a lower bound on the performance of this policy comparing against an optimal fixed assignment policy. Finally, we obtain improved bounds for the case where the processing times are drawn from a particular family of distributions, with either a bounded Pietra index or when the
 familly is stochastically dominated at the second order.
 
\end{abstract}

\noindent\textbf{Keywords:} Approximation Algorithms; Stochastic Scheduling; Extensible Bin Packing

\section{Stochastic Extensible Bin Packing}

In the \emph{extensible bin packing problem} (EBP), we must put $n$ items of size $(p_1, \ldots,p_n)$ in $m$ bins, where the bins
can be extended to hold more than the regular unit capacity. The cost of a bin is its regular capacity together with its extension costs: Specifically,
a bin holding the items $I\subseteq\{1,\ldots,n\}$ has a cost of $\max\bigl(\sum_{i\in I} p_i,1\bigr)$.
The goal is to minimize the total cost of the $m$ bins.

The model of extensible bin packing naturally arises in scheduling problems with 
machines available for some amount of time at a fixed cost, and an additional cost for extra-time.
So we stick to the scheduling terminology in this article (bins are machines, items are jobs, and
item sizes are processing times).
Recently, the model of EBP was adopted 
to handle surgery scheduling problems~\cite{BD17,DMBH10,S+18}: here, the machines are operating rooms,
and the jobs are operations to be performed on elective patients. The extension of the regular working time of a machine
corresponds to overtime for the medical staff.
This application to surgery scheduling motivates the present paper: in practice, the duration of a surgical operation on a given patient
is not known with certainty. Therefore, we want to study the stochastic counterpart of the extensible bin packing problem, in which the processing durations $p_j$'s are only known probabilistically, and the
expected cost of the machines is to be minimized.

\medskip
\noindent\textbf{Related work.}
EBP is closely related to another scheduling problem,
where each job $j$ has a due date $d_j$ and the goal is to minimize the \emph{total tardiness} $\sum_j T_j$,
where $T_j$ is the positive part of the difference of its completion time and its due date.
This problem can not be approximated within any constant factor in polynomial time,
unless $P=NP$~\cite{KW02}. It relies on the fact that an approximation algorithm could differentiate YES and NO instances of PARTITION, since for YES instances the objective is equal to $0$. Therefore, several articles
studied approximation algorithms for a modified tardiness criterion,
$\sum T_j + d_j$;
see~\cite{KS07,LXCZ09}. The situation is very similar for extensible bin packing:
the problem of minimizing the amount by which bins have to be extended
is not approximable, and the criterion of EBP is obtained by adding the constant $m$ to the objective.

The (deterministic version of) EBP was introduced by~\cite{DKST98}, who showed that the problem is strongly NP-hard, by reducing from 3--PARTITION; cf.~\cite{GJ79}. Moreover, they prove that the 
\emph{longest processing time first} (LPT) algorithm
--which considers the jobs sorted in nonincreasing order of their processing time and assigns them sequentially to the machine with the largest remaining capacity--
is a $\frac{13}{12}-$approximation algorithm. There is also an FPTAAS (fully polynomial asymptotic approximation scheme) for EBP~\cite{CL01}. For equal bins, LPT can also be interpreted as iteratively assigning the jobs to the machine with the currently smallest load.
In~\cite{DS99} the LPT algorithm was shown to be a $2(2-\sqrt{2})\simeq1.1716-$approximation algorithm for the case of unequal bin sizes.
In a more general framework, Alon et al.\ present a polynomial time approximation scheme~\cite{AAWY98}.

The online version of the problem also attracted attention. Here, the
jobs arrive one at a time and they must be assigned to a machine irrevocably.
The list scheduling algorithm LS that assigns an incoming job to the machine with the largest remaining capacity was shown to have a competitive ratio
of $\frac{5}{4}$ for equal bin sizes in~\cite{DS99},
under the assumption that each job fits in one bin,
and was generalized in~\cite{ST99} for the case with unequal bin sizes.
Furthermore, it was proven
that no algorithm can achieve a performance of $\frac{7}{6}$ or smaller compared to the offline optimum.
An improved online algorithm with a competitive ratio of $1.228$ was also presented in~\cite{ST99}.

In the context of surgery scheduling, a slightly more general framework has been introduced in~\cite{DMBH10}:
the decision maker also chooses the number of bins of size $S$ to open, at a fixed cost $c^f$, and there
is a variable cost $c^v$ for each minute of overtime. It is 
observed in~\cite{BD17} that every $(1+\rho)-$approximation
algorithm for EBP yields a $(1+\rho \frac{Sc^v}{c^f})$-approximation algorithm in this more general
setting. They also consider a two-stage stochastic
variant of the problem, in which emergency patients should be allocated to operating rooms
with pre-allocated elective patients. For this problem (in the case $S=c^v=c^f=1$),
a particular fixed assignment policy was shown to be a $\frac{5\theta}{4}$-approximation algorithm,
when each job has a duration with bounded support $P_j\in[0,p_j^{\max}]$ such that $p_j^{\max}\leq\theta\E[P_j]$.
To the best of our knowledge, this has been the only attempt to consider stochastic jobs in the literature on EBP.

When considering stochastic optimization problems adaptive and non-adaptive policies are the solution concepts of matter. Especially, the greatest ratio between the cost of an optimal non-adaptive and the cost of an optimal adaptive policy over all instances is a quantity of interest. This so-called \emph{benefit of adaptivity} or \emph{adaptivity gap} has drawn attention dating back to the work in \cite{DGV08} and is getting popular, see e.g. \cite{BN15,DGV05,GNS16}. In this work, we will work with another slightly different ratio closely related to it, since in the field of stochastic scheduling we are concerned with non-anticipatory policies that can make time-dependent decisions, such as idling. This can make a difference in the setting of parallel machines. 

In stochastic scheduling problems various notions of stochastic dominance have been considered to obtain optimal policies for specific classes of processing time distributions; see e.g. the book by Pinedo~\cite{Pin16} and the references therein. 
We use the notions of second-order stochastic dominance and Lorenz dominance in this work.
In addition, several approximative policies have been designed where the performance guarantee is parameterized by some coefficient measuring the dispersion of the random processing times: For instance, Uetz~\cite{Uet01} used the coefficient of variation of a random variable and Megow, Uetz and Vredeveld~\cite{MUV06} introduced the notion of $\delta$-NBUE processing times. In our work we consider the Pietra index as well as the Gini index. To the best of our knowledge, this is the first work to obtain approximative policies using these indices or these notions of stochastic dominance in this context.

\medskip
In the remaining of this section, we 
introduce the \emph{stochastic extensible bin packing problem} (SEBP).
Throughout, we consider the (offline) problem of scheduling $n$ stochastic jobs on $m$ parallel identical machines non-preemptively, where $n>m$ as the problem is trivial otherwise. 
We will assume that the distribution of the processing times are given beforehand and that their expectation is finite and computable\footnote{
We do not specify how the processing time distributions should be represented in the input of the problem,
as the policies we study only require the expected value of the processing times. In fact, we could even assume
a setting
in which the input consists only of the mean processing times $\mu_j=\E[P_j]$ ($\forall j\in\mathcal{J}$), and an adversary
chooses some distributions of the $P_j$'s matching the vector $\vec{\mu}$ of first moments.
}.\label{footnote:DRO}
The set of machines and jobs are denoted by $\mathcal{M}=\{1,\ldots,m\}$  and $\mathcal{J}=\{1,\ldots,n\}$, respectively. 

\medskip
\noindent \textbf{Stochastic Scheduling.}
Now, we want to give the intuition and main ideas of the required background in the field of stochastic scheduling. Precise definitions are given in~\cite{MRW84}. The processing times are represented by a vector
$\vec{P}=(P_1,\ldots,P_n)$ of random variables. We denote by $\vec{p}=(p_1,\ldots,p_n)\in\R^n_{\geq 0}$ 
a particular realization of $\vec{P}$. We assume that the $P_j$'s are mutually independent,
and that each processing time has a finite expected value.
Unlike the deterministic case, a scheduling strategy can take more general forms
than just an allocation of jobs to machines, as information is gained during the execution
of the schedule. Indeed, job durations become known upon completion, and
adaptive policies can react to the processing times observed so far.

We define a \emph{schedule} as
a pair $S=(\vec{s},\vec{a})\in \R_{\geq 0}^n \times \mathcal{M}^n,$
where $s_j\geq 0$ is the starting time of job $j$ and $a_j\in\mathcal{M}$ is the machine to which job $j$ is assigned.
A schedule $S$ is said to be \emph{feasible} for the realization $\vec{p}$
if each machine processes at most one job at a time:
$$
\forall i\in\mathcal{M},\ \forall t\geq 0,\quad \Big|\{ j\in\mathcal{J}:\  a_j=i,\ s_j\leq t < s_j + p_j \}\Big| \leq 1. 
$$
We denote by $\mathcal{S}(\vec{p})$ the set of all feasible schedules for the realization $\vec{p}$.
A \emph{planning rule} is a function $\Pi$ that maps a vector $\vec{p}\in\R^n_{\geq 0}$ 
of processing times to a schedule $S\in\mathcal{S}(\vec{p})$. A planning rule
is called a \emph{scheduling policy} if it is \emph{non-anticipatory}, which intuitively means that decisions taken at time $t$ (if any) may only depend
on the observed durations of jobs completed before $t$, and the probability distribution of the other processing times (conditioned by
the knowledge that ongoing jobs have not completed before $t$).

\medskip
\noindent \textbf{Stochastic Extensible Bin Packing (SEBP).}
For a scheduling policy $\Pi$, we denote by
$S_j^\Pi$ and $A_j^\Pi$ the random variables for the starting time
of job $j$, and the machine to which $j$ is assigned, respectively.
The completion time of job~$j$ is 
$C_j^\Pi=S_j^\Pi+P_j$.
We further introduce the random variable $X_i^\Pi$ for the workload of machine $i$,
which is defined as the latest completion time of a job on machine $i$:
$$
X_i^\Pi := \max \{C_j^\Pi|\ j\in\mathcal{J},\ A_j^\Pi = i\}. 
$$
It is easy to see that when $\Pi$ is \emph{non-idling}, i.e., if the starting time
of any job is either $0$ or equal to the completion time of the previous job assigned to the same machine, then
$$X_i^\Pi = \sum_{\{j\in\mathcal{J}:\ A_j^\Pi=i\}} P_j.$$
The realizations of
the random vectors $S^\Pi,A^\Pi,C^\Pi$ and $X^\Pi$
for a vector of processing times $\vec{p}$ are denoted by appending $\vec{p}$
as an argument. For example, the workload of machine $i$ for a non-idling policy $\Pi$
in the scenario $\vec{p}\in\R^n_{\geq 0}$ is
$$X_i^\Pi(\vec{p}) = \sum_{j\xrightarrow{\Pi(\vec{p})} i} p_j,$$
where $j\xrightarrow{\Pi(\vec{p})} i$ means that $\Pi(\vec{p})$
assigns job $j$ to machine $i$,
i.e., we sum over indices $\sloppy{\{j\in\mathcal{J}: A_j^\Pi(\vec{p})=i\}}$.

\smallskip
We assume that jobs are scheduled on machines with an extendable working time,
each machine having a unit regular working time.
The cost incurred on machine $i$ is equal to
$\max(X_i^\Pi,1)$,
which accounts for the fixed costs, plus the amount by which the regular working time has to be extended.
We are interested in strategies that
minimize the expected value of the total costs
\begin{equation*}
\Phi(\Pi) := \E\Big[\, \sum_{i\in \mathcal{M}} \max(X_i^\Pi,1)\, \Big]. 
\end{equation*}
The objective can also be defined realization-wise 
$\phi\big(\Pi,\vec{p}\big):=\sum_{i\in\mathcal{M}} \max(X_i^\Pi(\vec{p}),1)$,
so that $\Phi(\Pi) = \E[\phi\big(\Pi,\vec{P}\big)]$.

\begin{remark}
	Approximation results for the Stochastic Extensible Bin Packing Problem can easily be extended to the more general two-stage problem introduced in~\cite{DMBH10} where we additionally have to decide beforehand on the number of machines to use. For this purpose, we use for each integer $m\leq n$ our policy and select the number of bins for which we obtain the smallest objective value obtaining the same performance guarantee.
\end{remark}

\bigskip
\noindent \textbf{Classes of scheduling policies.}
We define the following classes of scheduling policies:
\begin{itemize}
 \item $\mathcal{P}$ denotes the class of all scheduling policies (non-anticipatory planning rules).
 \item $\mathcal{F}$ denotes the set of all \emph{non-idling fixed-assignment policies}.
 Such policies are characterized by a vector of job-to-machine assignments $\vec{a}\in\mathcal{M}^n$, so
 that $A^\Pi(\vec{p})=\vec{a}$ does not depend on the realization of processing times. 
 For such a policy $\Pi$, it holds
 \begin{equation*}
 \Phi(\Pi)=\sum_{i\in\mathcal{M}} \mathbb{E}\Big[\max(\sum_{j\xrightarrow{\Pi} i} P_j,1)\Big], 
 \end{equation*}
 where the sum indexed by ``\ $j\xrightarrow{\Pi} i$\ '' goes over all jobs $j$ such that $A_j^\Pi=i$. 
\end{itemize}

The distinction between fixed assignment policies and other, more sophisticated adaptive policies plays a central role in this article.
Indeed, in the context of surgery scheduling, committing to a fixed assignment policy is a common practice~\cite{BD17,DMBH10,S+18},
because fixed assignments yield simple schedules, that are easier to apprehend for both the medical staff and the patients.
Hence, they cause less stress
and are better suited to handle
the human resources of an operating theatre~\cite{DT02}.
Nonetheless, there is currently active research on the use of reactive policies for operating room scheduling~\cite{ZXG14}.
As ``\emph{fully adaptive scheduling models and policies are infeasible in operating room scheduling practice}'',
the focus is now on hybrid scheduling policies with a large amount of static decisions, and a limited amount of adaptivity~\cite{XJDK18}.
While more flexible policies could arguably lead to an important gain of efficiency over static policies, there are still
many obstacles for their introduction in the operating theatre. In particular, it must be ensured
that adaptive policies do not harm the quality of health care~\cite{ZYLLSLY16}, and computer-assisted scheduling techniques need to gain acceptance
among practitioners~\cite{ISM10}.
In this context, one goal of the present paper is to study the gap between fixed assignment and adaptive policies from a theoretical perspective.

\bigskip
In addition, we define the following class of fractional policies, which is related to scheduling problems concerning moldable work preserving tasks (see \cite{Leu04}). It cannot be considered as non-anticipatory planning
rules, but will be useful to derive bounds:
\begin{itemize}
 \item $\mathcal{R}$ denotes the class of fractional assignment policies, 
 in which a fraction $a_{ij} \in [0,1]$ of job~$j$ is to be executed on machine $i$, with $\sum_{i\in \mathcal{M}} a_{ij} = 1$, for all $j\in\mathcal{J}$. 
 For a ``policy'' $\Pi\in\mathcal{R}$, the different fractions of a job can be executed simultaneously on different machines, so
 \begin{equation*}
   \Phi(\Pi)=\sum_{i\in\mathcal{M}} \mathbb{E}\Big[\max(\sum_{j\in \mathcal{J}} a_{ij}^{\Pi}\, P_j,1)\Big].
 \end{equation*}
\end{itemize}

\noindent \textbf{LEPT policies.}
There is no unique way to generalize the LPT algorithm used in the deterministic case.
We distinguish two variants
of the ``longest expected processing time first'' (LEPT) policy.
The policy $\LEPTmu$ is the fixed assignment policy that 
results in the same assignments as the LPT algorithm for the deterministic processing times $p_j=\mathbb{E}[P_j]$.
In other words, job to machine assignments are precomputed offline, as follows:
jobs are considered in decreasing order of $\mathbb{E}[P_j]$, and sequentially assigned to the least loaded machine (in expectation). An example of $\LEPTmu$ is depicted in Figure~\ref{figure}.
The second policy, which we denote by $\LEPTLS$, is the static list policy
which considers jobs in the order of decreasing $\mathbb{E}[P_j]$'s,
and start them (in this order) as early as possible. Unlike $\LEPTmu$, the job to machine assignments
of the list policy $\LEPTLS$ 
depend on the realization $\vec{p}$ of the processing times. By~\cite{ST99} it immediately follows that $\LEPTLS$ is a $\frac{5}{4}$-approximation with respect to $OPT_\mathcal{P}$ for the case of short jobs only (cf. Definition~\ref{def:short}), since in every realization the schedule produced by $\LEPTLS$ is obtained by list scheduling.

As discussed earlier, given the prominence of fixed assignment policies in the context of surgery scheduling, 
we focus on the policy $\LEPTmu$ in the remaining of this article.

\begin{figure}
 \includegraphics[scale=0.8]{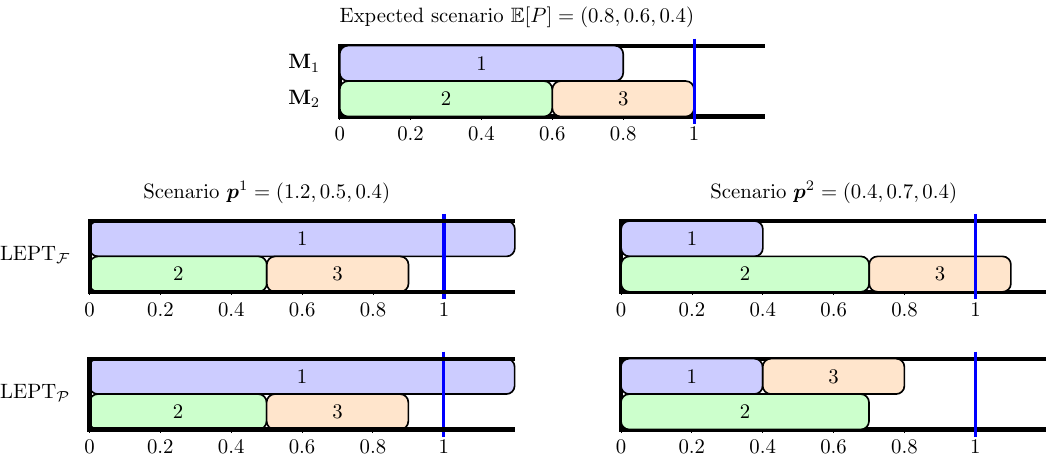} 
 \caption{{\small Example of a fixed assignment policy: assume machines $\mathcal M = \{1,2\}$ and jobs $\mathcal J = \{1,2,3\}$ with processing time distributions $p_1 \in \{0.4,1.2\}$, $p_2 \in \{0.5,0.7\}$, $p_3=0.4$ where the duration of each stochastic job is attained with probability $\frac{1}{2}$. Since $\mathbb{E}[P_1] = 0.8 \geq \mathbb{E}[P_2] = 0.6 \geq p_3 = 0.4$,
 $\LEPTmu$ assigns the jobs in order $1 \rightarrow 2 \rightarrow 3$ to the machines before their realization is known.
 The figure on the top depicts the resulting job to machine assignments with the average durations. 
 For the realization $\vec{p}_1 = (1.2,0.5,0.4)$ (lower left), $\LEPTmu$ is optimal with cost $2.2$.
 For the realization $\vec{p}_2 = (0.4,0.7,0.4)$ (lower right), $\LEPTmu$ yields cost $2.1$.
 In contrast, note that $\LEPTLS$ would have started job~$3$ on the first machine after completion of job~$1$, giving a cost of $2$.\label{figure}}}
\end{figure}

\medskip
\noindent \textbf{Performance ratios.}
For a given instance $I=(\vec{P},m)$ of the SEBP,
we denote the optimum value in the class $\mathcal{C}$ of scheduling policies 
by
\begin{equation*}
 OPT_{\mathcal{C}}(I) = \inf_{\Pi\in\mathcal{C}}\ \Phi(\Pi).
\end{equation*}
Note that $\Phi$ is continuous, in particular lower semi-continuous, and hence by Theorem 4.2.6 of \cite{MRW84} there exists an optimal policy. Whenever the instance is clear from the context, 
or when $I=(\vec{P},m)$ is an arbitrary instance, we will drop $I$ from the argument, so we simply write $OPT_\mathcal{C}$.
We also denote by $OPT(\vec{p})$ the optimal objective value for the deterministic problem with processing times $\vec{p}$.
In this case, it is clear that we can restrict our attention to fixed assignment policies $\Pi\in\mathcal{F}$:
\begin{equation*}
OPT(\vec{p}) = \inf_{\Pi\in\mathcal{F}}\ \phi(\Pi,\vec{p}). 
\end{equation*}
For notational convenience, we will abuse notation and write $OPT_{\mathcal{C}}$, $\LEPTmu$, $\Pi$ and $OPT(\vec{p})$ to denote both the policy as well as the objective value obtained by the policies. Furthermore, we denote the expected value of an optimal anticipative policy by $\E[OPT(\vec{P})]:=\E[\phi(OPT(\vec{P}),\vec{P})]$. 
Let us now define various performance ratios.
We say that $\Pi\in\mathcal{C}$ is a \emph{$\gamma$-approximation in the class $\mathcal{C}$} if
the inequality
$ \Phi(\Pi) \leq \gamma\, OPT_\mathcal{C}$
holds for all instances of SEBP.
The \emph{price of fixed assignments} and the \emph{price of non-splittability} are respectively defined by
\begin{equation*}
\PoFA = \sup_I \frac{OPT_{\mathcal{F}}(I)}{OPT_{\mathcal{P}}(I)}\qquad\quad \text{and} \quad \qquad  \PoNS = \sup_I \frac{OPT_{\mathcal{P}}(I)}{OPT_{\mathcal{R}}(I)}, 
\end{equation*}
where the suprema go over all instances $I$ of SEBP.

The first ratio ($\PoFA$) describes the loss if we restrict our attention to fixed assignment policies. In other words, it is a measure of 
what can be gained by allowing the use of more flexible, adaptive policies. This quantity gained attention in classical scheduling problems, e.g., in~\cite{MUV06} and~\cite{SSU16}, whereby the latter shows that it can be arbitrarily large for the objective of minimizing the expected sum of completion times on parallel identical machines as the coefficient of variation grows.

The second ratio ($\PoNS$) 
is related to the power of preemption,
see e.g.~\cite{CI98,CSV12,SS02,SS14}, but should not be mixed up with it, 
because the class $\mathcal{R}$ allows different parts of a job to be processed simultaneously
on several machines for fractional assignment policies.
However, this quantity has a simple interpretation in the context of surgery scheduling.
Consider a hospital that assigns patients to a particular day until the total expected duration
of the booked surgeries exceeds a certain threshold, but ignores the actual allocation of patients to operating rooms.
The precise assignment of patients to operating rooms is deferred to a later stage, typically one week to one day
before the day of surgery, when the set of all elective patients will be known.
In fact, this simplification amounts to assuming that jobs of a particular day are placed in a single bin of size $m$ (rather than in $m$ bins of unit size).
We will see in Proposition~\ref{prop:order} that this can be interpreted as splitting the patient durations arbitrarily,
and hence, evaluating the costs within this simplified one-bin model can yield a multiplicative error of up to $\PoNS$.

\medskip
\noindent\textbf{Organization and Main results.}
Our paper is organized as follows. Section~\ref{sec:pons} deals with the price of non-splittability. We show that the expected cost
of an optimal
non-anticipatory policy is at most twice the expected cost of an optimal fractional assignment policy.
Moreover, we present instances that achieve a lower bound arbitrarily close to $2$, showing that $\PoNS = 2$. In Section~\ref{sec:1+e-1},
we consider the case of short jobs ($P_j \in [0,1]$ almost surely) 
and we obtain a performance guarantee of $1+e^{-1}$ for $\LEPTmu$ compared to the stochastic optimum.\footnote{We decided to include this special case as it is not as technical as the general case, it is a common and reasonable assumption on the distribution of the processing times~\cite{DS99,ST99}, and we reuse some results of this part in later sections.}
This result is generalized in the next section for long jobs.
In Section~\ref{sec:pofa} we show 
that the price of fixed assignments is at most $1+e^{-1}$.
We also give a family of instances where this bound is attained at the limit,
which proves that $\PoFA = 1+e^{-1}$.
This shows that $\LEPTmu$ is --in a certain sense-- the best possible fixed assignment policy.
The next section shows that the performance of $\LEPTmu$ can not be better than $\frac{4}{3}$ in the
class~$\mathcal{F}$. In Section~\ref{sec:stochastic_dominance} we give improved upper bounds when the processing time come from a specific family of distributions.
In particular,
we prove that the \FLEPT\ policy
is $1.2334$-approximative when the 
processing times are
lognormally distributed
with a squared coefficient of variation $\Delta\leq\frac{1}{4}$, 
a reasonable assumption in the context of surgery scheduling~\cite{strum1998surgical,S+18}.
Other authors suggested to use Gamma or Weibull
distributed random variables~\cite{joustra2013can,choi2012analysis},
or $3$-parameters lognormal distribitions~\cite{stepaniak2009modeling}
to approximate surgery durations; our approach can also handle these cases, see Table~\ref{tab:approx}. Finally, we give a distribution-free
bound for instances with a bounded Pietra or Gini index in Theorem~\ref{theo:pietra}.


\section{The price of non-splittability}\label{sec:pons}

From now on we fix some instance $(\vec{P},m)$. We begin with a convenient notation followed by a basic proposition.

\begin{notation}
	Let $\rho$ denote the expected workload averaged over all machines, in particular, $$\rho:=\frac{1}{m}\sum_{j\in\mathcal{J}} \E[P_j].$$
\end{notation}

\begin{proposition}\label{prop:order}
The following chain of inequalities holds:
 \begin{equation*}
 OPT_{\mathcal{F}} \geq OPT_{\mathcal{P}} \geq \E[OPT(\vec{P})] \geq OPT_{\mathcal{R}} = \E\Big[\max(\sum_{j\in\mathcal{J}} P_j, m)\Big] \geq m \max(\rho,1). 
 \end{equation*}
 \end{proposition}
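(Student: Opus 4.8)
The plan is to prove the five relations one link at a time, from left to right, each being short. The leftmost inequality $OPT_{\mathcal{F}}\ge OPT_{\mathcal{P}}$ is just monotonicity of the infimum under the inclusion $\mathcal{F}\subseteq\mathcal{P}$. For the second one, $OPT_{\mathcal{P}}\ge\E[OPT(\vec P)]$, the key observation is that for every policy $\Pi\in\mathcal{P}$ and every scenario $\vec p$ the planning rule produces a feasible deterministic schedule, and since a machine processes its assigned jobs one at a time we have $X_i^\Pi(\vec p)\ge\sum_{j\xrightarrow{\Pi(\vec p)}i}p_j$; hence
\[
\phi(\Pi,\vec p)=\sum_{i\in\mathcal{M}}\max\bigl(X_i^\Pi(\vec p),1\bigr)\ \ge\ \sum_{i\in\mathcal{M}}\max\Bigl(\textstyle\sum_{j\xrightarrow{\Pi(\vec p)}i}p_j,\,1\Bigr)\ \ge\ OPT(\vec p),
\]
where the last step uses that $A^\Pi(\vec p)$ is one particular fixed assignment. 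Taking expectations over $\vec P$ and then the infimum over $\Pi\in\mathcal{P}$ gives the claim.

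The heart of the argument is the block $\E[OPT(\vec P)]\ge OPT_{\mathcal{R}}=\E[\max(\sum_j P_j,m)]$. First I would record the elementary pointwise inequality
\[
\sum_{i\in\mathcal{M}}\max(x_i,1)\ \ge\ \max\Bigl(\sum_{i\in\mathcal{M}}x_i,\ m\Bigr)\qquad\text{for all }x_1,\dots,x_m\ge 0,
\]
obtained by bounding each summand below once by $x_i$ and once by $1$ and adding the two resulting inequalities. Applied to $x_i=\sum_{j\xrightarrow{\Pi(\vec p)}i}p_j$ for a non-idling fixed assignment $\Pi\in\mathcal{F}$ it already yields $OPT(\vec p)\ge\max(\sum_j p_j,m)$, hence, after taking expectations, $\E[OPT(\vec P)]\ge\E[\max(\sum_j P_j,m)]$. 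The same pointwise inequality applied to $x_i=\sum_j a_{ij}p_j$ and combined with $\sum_i a_{ij}=1$ gives $\Phi(\Pi)\ge\E[\max(\sum_j P_j,m)]$ for every $\Pi\in\mathcal{R}$, so $OPT_{\mathcal{R}}\ge\E[\max(\sum_j P_j,m)]$. For the matching upper bound I would exhibit the even-split fractional policy $a_{ij}=1/m$: in every scenario each machine carries load exactly $\frac1m\sum_j p_j$, so its cost is $\sum_{i\in\mathcal{M}}\max(\frac1m\sum_j p_j,1)=\max(\sum_j p_j,m)$, whence $OPT_{\mathcal{R}}\le\E[\max(\sum_j P_j,m)]$. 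This establishes the equality and closes the block.

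Finally, the rightmost inequality $\E[\max(\sum_j P_j,m)]\ge m\max(\rho,1)$ follows from Jensen's inequality applied to the convex function $t\mapsto\max(t,m)$, together with $\E[\sum_j P_j]=\sum_j\E[P_j]=m\rho$. The only point demanding a little care is that a policy in $\mathcal{R}$ must fix the fractions $a_{ij}$ in advance rather than scenario by scenario; but the even-split policy balances the load identically in every scenario, so this is harmless, and I do not expect any genuine obstacle.
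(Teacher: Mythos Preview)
Your proof is correct and follows essentially the same approach as the paper's: the inclusion $\mathcal{F}\subseteq\mathcal{P}$, a realization-wise comparison for the second inequality, the even-split witness for $OPT_{\mathcal{R}}$, and Jensen for the final step. The only organizational difference is that for $\E[OPT(\vec P)]\ge OPT_{\mathcal{R}}$ the paper observes that the fractional problem is the continuous relaxation of the integer assignment problem realization-wise, whereas you go straight to the common lower bound $\max(\sum_j p_j,m)$ via the pointwise inequality $\sum_i\max(x_i,1)\ge\max(\sum_i x_i,m)$ and then match it with the even-split policy; both routes are equally short and yield the same thing.
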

\begin{proof}
The first inequality follows immediately since $\mathcal{F} \subseteq \mathcal{P}$. 

Next, for all policies $\Pi \in \mathcal{P}$ and all realizations $\vec{p}$
it holds 
$
  \phi(\Pi,\vec{p})\geq OPT(\vec{p}),
$
by definition of an optimal policy for the deterministic processing times $\vec{p}$.
Taking the expectation on both sides yields 
the second inequality.

Before we go on to the next inequality, we first show that 
$OPT_{\mathcal{R}} = \E[\max(\sum_{j\in\mathcal{J}} P_j, m)]$.
To do so we show that for any realization $\vec{p}$ an optimal fractional assignment policy assigns all jobs uniformly to all machines. 
More precisely, we show that $a_{ij}=\frac{1}{m}$ for all $i\in \mathcal{M}$ and $j\in \mathcal{J}$ solves the following
problem of finding the optimal fractional assignment:
\begin{equation}\label{fracP}
 \underset{0\leq a_{ij}\leq 1}{\vec{\operatorname{minimize}}} \quad\sum_{i\in \mathcal{M}}\max(\sum_{j\in\mathcal{J}} a_{ij} p_j, 1),\qquad \text{such that} \quad\sum_{i\in \mathcal{M}} a_{ij} = 1, \quad \forall j\in \mathcal{J}.
\end{equation}
 A trivial lower bound on the optimal value of Problem~\eqref{fracP} is $\max(\sum_{j\in\mathcal{J}} p_j, m)$. This is true since for any feasible fractional assignment $(a_{ij})_{i\in\mathcal{M},j\in\mathcal{J}}$,
 $\sum_{i\in \mathcal{M}}\max(\sum_{j\in\mathcal{J}} a_{ij} p_j, 1)\geq \sum_{i\in \mathcal{M}} \sum_{j\in\mathcal{J}} a_{ij} p_j = \sum_{j\in\mathcal{J}} p_j$,
 and similarly,
 $\sum_{i\in \mathcal{M}}\max(\sum_{j\in\mathcal{J}} a_{ij} p_j, 1)\geq \sum_{i\in \mathcal{M}} 1 = m$. 
 Choosing all fractions to be $\frac{1}{m}$ we obtain 
 $
  \sum_{i\in \mathcal{M}} \max(\sum_{j\in\mathcal{J}} \frac{1}{m} p_j, 1) = m\cdot \max(\sum_{j\in\mathcal{J}} \frac{1}{m} p_j, 1) = \max(\sum_{j\in\mathcal{J}} p_j, m)
 $
 which exactly matches the lower bound and hence, it must be optimal. Since this holds for any realization we can take the expected value resulting into the desired identity.

 In order to show $\E[OPT(\vec{P})] \geq OPT_{\mathcal{R}}$, we observe that for any realization $\vec{p}$,
 Problem~\eqref{fracP} is the continuous relaxation of the problem with binary variables for finding the optimal 
 assignments for the deterministic problem with processing times $\vec{p}$. Hence, by again taking expectations this yields the inequality. 
 
 \sloppy{Finally, the last inequality is Jensen's inequality applied to the convex function \mbox{$x\mapsto\max(x,m)$}.} 
\end{proof}

In the next proposition, we show the intuitive fact that among non-idling policies, the worst case
is to assign all jobs to the same machine.
\begin{proposition}\label{prop:naive}
Let $\Pi\in\mathcal{P}$ be non-idling and let $\Pi_1$ be the fixed assignment policy that schedules all jobs on
machine $1$. Then, we have $\Pi \leq \Pi_1$.
\begin{figure}
	\includegraphics[scale=0.8]{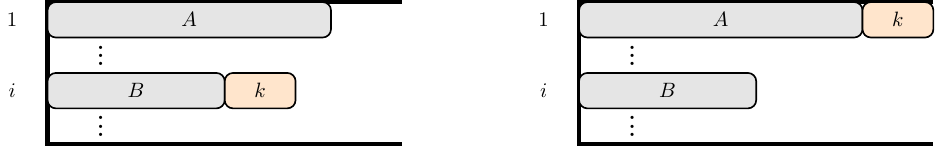} 
	\caption{{\small Illustration of the proof of Proposition~\ref{prop:naive}.\label{nonidling}}}
\end{figure}
\end{proposition}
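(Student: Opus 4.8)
The plan is to prove the statement realization-wise: for every fixed scenario $\vec{p}\in\R^n_{\geq 0}$ one shows $\phi(\Pi,\vec{p})\leq\phi(\Pi_1,\vec{p})$, and then take expectations to conclude $\Phi(\Pi)\leq\Phi(\Pi_1)$. So fix a realization $\vec{p}$. Since $\Pi$ is non-idling, its workloads satisfy $X_i^\Pi(\vec{p})=\sum_{j\xrightarrow{\Pi(\vec{p})}i}p_j$, and these values form a partition of the total processing time $T:=\sum_{j\in\mathcal{J}}p_j$ into $m$ nonnegative parts $x_1,\ldots,x_m$ with $\sum_i x_i=T$. The cost of $\Pi$ in this scenario is $\sum_{i\in\mathcal{M}}\max(x_i,1)$, whereas the cost of $\Pi_1$ is $\max(T,1)+(m-1)$, corresponding to putting everything on machine $1$ and leaving the other $m-1$ machines at load $0$.

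The key step is therefore the purely deterministic inequality: for any $x_1,\ldots,x_m\geq 0$ with $\sum_i x_i=T$,
\begin{equation*}
\sum_{i\in\mathcal{M}}\max(x_i,1)\ \leq\ \max(T,1)+(m-1).
\end{equation*}
I would prove this by a merging/exchange argument that concentrates all mass onto one coordinate. Concretely, I claim that for any two loads $a,b\geq 0$ we have $\max(a,1)+\max(b,1)\leq\max(a+b,1)+1$: this is immediate by checking the (at most three relevant) cases according to whether $a+b\le 1$, or exactly one of $a,b$ exceeds... more carefully, one distinguishes $a+b\le1$ (LHS $=2=$ RHS), and $a+b>1$ (then $\max(a+b,1)+1=a+b+1$, while $\max(a,1)+\max(b,1)\le (a+1)+(b+1)$ is too weak, so instead use $\max(a,1)+\max(b,1)\le \max(a,1)+b+1\cdot[\,b\le1\,]+\ldots$)  — the clean way is: WLOG $a\ge b$; if $b\le 1$ then $\max(a,1)+\max(b,1)=\max(a,1)+1\le\max(a+b,1)+1$ since $\max(\cdot,1)$ is nondecreasing; if $b>1$ then $a\ge b>1$ so the LHS is $a+b=\max(a+b,1)\le\max(a+b,1)+1$. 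Applying this merge inequality $m-1$ times, folding $x_2,\ldots,x_m$ one at a time into the first coordinate, yields $\sum_i\max(x_i,1)\le\max(T,1)+(m-1)$, which is exactly $\phi(\Pi_1,\vec{p})$. Since this holds for every $\vec{p}$, taking expectations gives $\Phi(\Pi)\le\Phi(\Pi_1)$, i.e.\ $\Pi\le\Pi_1$, and Figure~\ref{nonidling} illustrates the single folding step.

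I do not expect a serious obstacle here; the only thing requiring minor care is the two-load merge inequality and making sure the case analysis is complete (in particular the sub-case $b>1$, where both slacks vanish). One should also note that the non-idling hypothesis is used only to express $X_i^\Pi(\vec{p})$ as the sum of processing times assigned to machine $i$ — without it an idling policy could only do worse on each machine anyway, but the stated form of the proposition is about non-idling $\Pi$, so the identity for $X_i^\Pi$ is exactly what is needed. Finiteness of $\E[P_j]$ guarantees all expectations are well-defined when we pass from the realization-wise bound to $\Phi$.
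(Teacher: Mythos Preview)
Your proof is correct and follows the same overall strategy as the paper: fix a realization, show the cost inequality by repeatedly merging loads, then take expectations. The difference is one of granularity. The paper moves one \emph{job} at a time onto the currently fullest machine, and its case analysis uses that the receiving machine has the largest load (so that in the case $\sum_{j\in A}p_j+p_k>1$ one also has $\sum_{j\in B}p_j>1$). You instead abstract away the job structure entirely and prove the two-load inequality $\max(a,1)+\max(b,1)\le\max(a+b,1)+1$ for arbitrary $a,b\ge0$, which lets you fold whole machine loads together in any order without tracking which machine is fullest. This is a slightly cleaner route: it isolates the elementary convexity/monotonicity fact that drives the argument and avoids the bookkeeping of identifying a last job and a maximal machine at each step. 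Your case split (WLOG $a\ge b$; then $b\le1$ versus $b>1$) is complete and the inequality is verified in each case, so the proof goes through.
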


\begin{proof}[Proof of Proposition~\ref{prop:naive}]
	To prove this result, we examine the change in the objective value of $\Pi$ when we move one job
	to the machine with highest load in $\Pi$, for a realization $\vec{p}$ of the processing times.
	W.l.o.g.\ let machine $1$ be the one with highest workload in $\Pi(\vec{p})$. Consider another machine $i\in \mathcal{M}\setminus \{1\}$ on which at least one job is scheduled.
	Let $k$ be the last job on machine $i$, 
	i.e., $C_k^\Pi(\vec{p})=X_i^\Pi(\vec{p})$. For the sake of simplicity, we define
	$A:= \{j\in \mathcal{J}|j\xrightarrow{\Pi(\vec{p})} i\}\setminus \{k\}$ and  $B:= \{j\in \mathcal{J}|j\xrightarrow{\Pi(\vec{p})} 1\}$.
	We consider another schedule $\Pi'(\vec{p})$ which coincides with $\Pi(\vec{p})$ except that job $k$ is scheduled on machine $1$ right after all jobs in $B$.
	We obtain
	\begin{alignat*}{2}
	& &&\phi(\Pi,\vec{p})- \phi(\Pi',\vec{p})\\
	& = && \max(\sum_{j\in A} p_j + p_k, 1) + \max(\sum_{j\in B} p_j, 1) - \Bigl(\max(\sum_{j\in A} p_j, 1) + \max(\sum_{j\in B} p_j + p_k, 1)\Bigr)\\
	& = && \begin{cases}
	1 + \max(\displaystyle{\sum_{j\in B}} p_j, 1) - \Bigl(1 + \max(\sum_{j\in B} p_j + p_k, 1)\Bigr) \quad &\ \text{if } \displaystyle{\sum_{j\in A}} p_j + p_k\leq 1\\
	\displaystyle{\sum_{j\in A}} p_j + p_k + \sum_{j\in B} p_j - \Bigl(\max(\sum_{j\in A} p_j, 1) + \sum_{j\in B} p_j + p_k\Bigr) \quad &\ \text{otherwise }
	\end{cases}\\
	& \leq && 0.
	\end{alignat*}
	
	Hence, iteratively moving some job $k$ to the fullest machine yields
	$\phi(\Pi,\vec{p})\leq\phi(\Pi_1,\vec{p})$. 
	Finally, the result follows by taking the expectation.
\end{proof}

We now prove that any non-idling policy is a 2-approximation in the class of non-anticipatory policies (and hence in the class of fixed-assignment policies).

\begin{proposition}\label{prop:naive2}
 Let $\Pi$ be \emph{any} non-idling policy. Then, 
 \begin{equation*}
  \Pi \leq 2\, OPT_{\mathcal{R}}.
 \end{equation*}
\end{proposition}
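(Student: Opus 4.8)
The plan is to reduce to the worst non-idling policy identified in Proposition~\ref{prop:naive}, and then compare it directly against the closed form of $OPT_{\mathcal{R}}$ provided by Proposition~\ref{prop:order}. Write $W:=\sum_{j\in\mathcal{J}}P_j$ for the total processing time. First I would invoke Proposition~\ref{prop:naive}, which is applicable since $\Pi$ is non-idling, to get $\Pi\leq\Pi_1$, where $\Pi_1$ assigns every job to machine~$1$. Because $\Pi_1$ is a non-idling fixed-assignment policy with machines $2,\ldots,m$ empty (each still paying its regular unit cost), its objective has the explicit form $\Phi(\Pi_1)=\E[\max(W,1)]+(m-1)$.

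Next I would bound $\Phi(\Pi_1)$ termwise against $2\,OPT_{\mathcal{R}}$, using the identity $OPT_{\mathcal{R}}=\E[\max(W,m)]$ from Proposition~\ref{prop:order}. Two pointwise inequalities suffice: $\max(W,1)\leq\max(W,m)$ since $m\geq 1$, and $m-1<m\leq\max(W,m)$ always. Taking expectations and adding the two gives $\Phi(\Pi_1)\leq\E[\max(W,m)]+\E[\max(W,m)]=2\,OPT_{\mathcal{R}}$. Chaining with the first step yields $\Pi\leq 2\,OPT_{\mathcal{R}}$; combined with $OPT_{\mathcal{P}}\geq OPT_{\mathcal{R}}$ (again Proposition~\ref{prop:order}), this shows $\Pi$ is a $2$-approximation in $\mathcal{P}$, and a fortiori in $\mathcal{F}$.

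I do not expect a genuine obstacle here: the argument is just a chaining of inequalities already available. The only point requiring a little care is the bookkeeping of the $m-1$ fixed costs of the emptied machines in $\Pi_1$, together with the observation that each such unit cost is absorbed by the single quantity $\E[\max(W,m)]\geq m$ — this is precisely what prevents the approximation factor from degrading with $m$ and keeps it at $2$. It is also worth remarking, with an eye on the matching lower bound of Section~\ref{sec:pons}, that this crude estimate is essentially tight only in the regime where $W$ is small relative to $m$, i.e.\ where the machine costs are dominated by their regular capacities.
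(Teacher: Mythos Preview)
Your proof is correct and follows essentially the same route as the paper: reduce to $\Pi_1$ via Proposition~\ref{prop:naive}, write $\Phi(\Pi_1)=\E[\max(W,1)]+(m-1)$, and bound both summands by $OPT_{\mathcal{R}}=\E[\max(W,m)]$ using Proposition~\ref{prop:order}. The only cosmetic difference is that the paper keeps the slightly sharper intermediate bound $\Pi_1\leq 2\,OPT_{\mathcal{R}}-1$ before relaxing to $2\,OPT_{\mathcal{R}}$.
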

\begin{proof}

  Let $\Pi$ be a non-idling policy and $\Pi_1$ be the naive fixed assignment policy in which all jobs are scheduled on one machine without idle time.
  Proposition~\ref{prop:naive} yields
  that $\Pi\leq\Pi_1$, and we have
  \begin{equation*}
  \Pi_1
  = \E[\max(\sum_{j\in \mathcal{J}} P_j,1)]+(m-1)\leq \E[\max(\sum_{j\in \mathcal{J}} P_j,m)] + m -1. 
  \end{equation*}
  We know that $m\leq\E[\max(\sum_{j\in \mathcal{J}} P_j,m)]=OPT_\mathcal{R}$ from Proposition~\ref{prop:order}, so we have
  $$\Pi\leq\Pi_1 \leq 2\, OPT_\mathcal{R} -1\leq 2\, OPT_\mathcal{R}.$$
\end{proof}

Consequently, we are only interested in finding $\alpha-$approximation algorithms for $\alpha<2$, since a $2-$approximation algorithm
performs no better (in the worst case) than the naive policy that puts all jobs on a single machine. Before we state the main result, we need the next technical Lemma.
\begin{lemma}\label{lemm:poisson-max}
 Let $Y\sim$ Poisson($\lambda$) for some $\lambda\in\mathbb{N}$. Then,
 \begin{equation*}
  \frac{1}{\lambda}\E[\max(Y, \lambda)] = 1 + \frac{e^{-\lambda}\lambda^\lambda}{\lambda!}.
 \end{equation*}
\end{lemma}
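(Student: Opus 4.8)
The plan is to write $\E[\max(Y,\lambda)] = \lambda + \E[(Y-\lambda)^+]$ and then compute the expected overshoot $\E[(Y-\lambda)^+] = \sum_{k>\lambda}(k-\lambda)\,\Pr[Y=k]$ in closed form. First I would split $k-\lambda$ and write
\[
\E[(Y-\lambda)^+] = \sum_{k=\lambda+1}^\infty k\,\frac{e^{-\lambda}\lambda^k}{k!} \;-\; \lambda\sum_{k=\lambda+1}^\infty \frac{e^{-\lambda}\lambda^k}{k!}.
\]
In the first sum, the standard identity $k\cdot\frac{\lambda^k}{k!} = \lambda\cdot\frac{\lambda^{k-1}}{(k-1)!}$ lets me reindex $k\mapsto k-1$, turning it into $\lambda\sum_{k=\lambda}^\infty \frac{e^{-\lambda}\lambda^{k}}{k!}$. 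Subtracting the second sum, which is $\lambda\sum_{k=\lambda+1}^\infty \frac{e^{-\lambda}\lambda^k}{k!}$, almost everything cancels and only the $k=\lambda$ term of the first series survives:
\[
\E[(Y-\lambda)^+] = \lambda\cdot\frac{e^{-\lambda}\lambda^{\lambda}}{\lambda!}.
\]
Dividing by $\lambda$ and adding the $1$ coming from $\lambda/\lambda$ gives the claimed formula.

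The only place one needs to be slightly careful is the interchange of the two infinite sums (equivalently, splitting $\sum (k-\lambda)\Pr[Y=k]$ into a difference of two series): this is legitimate because each series converges absolutely, $\sum_k k\,\Pr[Y=k]=\E[Y]=\lambda<\infty$ and $\sum_k \Pr[Y=k]=1$, so rearrangement and termwise subtraction are valid. I would state this in one line rather than belabor it. The hypothesis $\lambda\in\mathbb{N}$ is what makes $\lambda$ an integer cut-point so that the telescoping between the two reindexed tails is exact; for non-integer $\lambda$ one would not get such a clean expression.

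An alternative I might mention is to use the identity $\max(Y,\lambda) = \lambda + (Y-\lambda)^+$ together with the known fact that for integer $\lambda$ the Poisson distribution satisfies $\E[(Y-\lambda)^+] = \lambda\,\Pr[Y=\lambda]$ (a discrete analogue of Stein-type identities), but since the direct computation above is short and self-contained, I would simply carry it out. I do not anticipate any real obstacle here; the result is an elementary manipulation of the Poisson pmf, and the main point is just to keep the reindexing bookkeeping straight so the cancellation is transparent.
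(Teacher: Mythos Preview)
Your proof is correct and uses essentially the same telescoping mechanism as the paper: split the sum, apply $k\cdot\lambda^k/k!=\lambda\cdot\lambda^{k-1}/(k-1)!$ to reindex, and cancel. The only cosmetic difference is that the paper decomposes $\max(k,\lambda)=k+(\lambda-k)^+$ and works with the finite undershoot sum $\sum_{k=0}^{\lambda}(1-k/\lambda)e^{-\lambda}\lambda^k/k!$, whereas you decompose $\max(Y,\lambda)=\lambda+(Y-\lambda)^+$ and work with the overshoot tail; both collapse to the single term $e^{-\lambda}\lambda^\lambda/\lambda!$ in the same way.
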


\begin{proof}
	The proof simply works by exploiting the analytical form of Poisson probabilities:
	\begin{alignat*}{2}
	\frac{1}{\lambda}\E[\max(Y, \lambda)]& &&= \frac{1}{\lambda}\sum_{k=0}^{\infty}\max(k, \lambda)\cdot \frac{e^{-\lambda}\lambda^k}{k!}\\
	& &&= \frac{1}{\lambda}\sum_{k=0}^{\infty} k\cdot \frac{e^{-\lambda}\lambda^k}{k!} + \frac{1}{\lambda}\sum_{k=0}^{\infty} \max(0, \lambda - k)\cdot \frac{e^{-\lambda}\lambda^k}{k!}\\
	& &&= 1 + \sum_{k=0}^{\lambda} \Bigl(1 - \frac{k}{\lambda}\Bigr)\cdot \frac{e^{-\lambda}\lambda^k}{k!}\\
    & &&= 1 + e^{-\lambda}\cdot \Bigl(\sum_{k=0}^{\lambda} \frac{\lambda^k}{k!} - \sum_{k=1}^{\lambda} \frac{\lambda^{k-1}}{(k-1)!}\Bigr)\\
	& &&= 1 + \frac{e^{-\lambda}\lambda^\lambda}{\lambda!},
	\end{alignat*}
	where the last step follows from the property of a telescoping sum.
\end{proof} 

The last proposition also shows that the price of non-splittability is upper bounded by $2$. In fact, this bound is tight.
\begin{theorem}\label{theo:VoP2}
	The price of non-splittability of SEBP is $2$.
\end{theorem}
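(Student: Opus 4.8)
The inequality $\PoNS\le 2$ is already in hand: by Proposition~\ref{prop:naive2} the naive fixed‑assignment policy $\Pi_1$ (all jobs on machine~$1$), which belongs to $\mathcal{P}$, satisfies $OPT_{\mathcal{P}}\le\Phi(\Pi_1)\le 2\,OPT_{\mathcal{R}}$ for every instance; in fact the proof there even gives $OPT_{\mathcal{P}}\le 2\,OPT_{\mathcal{R}}-1$, so the supremum is never attained. Hence the plan is just to prove the matching lower bound, i.e.\ to exhibit a family of instances $(I_m)_{m\ge 2}$ with $OPT_{\mathcal{P}}(I_m)/OPT_{\mathcal{R}}(I_m)\to 2$.

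The instances will be deterministic — a degenerate distribution is still a distribution, so these are legitimate instances of SEBP. For $m\ge 2$, let $I_m$ consist of $m$ machines and $n=m+1$ jobs: one \emph{big} job of size~$m$ and $m$ \emph{dummy} jobs of size~$0$ (this respects $n>m$; if one prefers strictly positive processing times, give the dummy jobs size $\delta>0$ and let $\delta\downarrow 0$ at the end, which changes nothing below). Since all processing times are deterministic, $OPT_{\mathcal{R}}(I_m)$ equals the closed‑form value from Proposition~\ref{prop:order}, namely $\max(\sum_{j}p_j,m)=\max(m,m)=m$ — concretely, the fractional policy splits the big job into $m$ pieces of size~$1$, one per machine. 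Likewise, a scheduling policy acting on deterministic data is simply a schedule, so $OPT_{\mathcal{P}}(I_m)=OPT(\vec{p})$, the optimum of the underlying deterministic extensible bin packing instance.

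It then remains to evaluate $OPT(\vec{p})$. Since a job cannot be split, whichever machine receives the big job has workload at least~$m$, hence cost at least~$m$, while each of the remaining $m-1$ machines has cost at least~$1$; therefore $OPT(\vec{p})\ge m+(m-1)=2m-1$. This bound is attained by placing the big job alone on machine~$1$ and the dummy jobs anywhere, so $OPT_{\mathcal{P}}(I_m)=2m-1$ and
$$
\frac{OPT_{\mathcal{P}}(I_m)}{OPT_{\mathcal{R}}(I_m)}=\frac{2m-1}{m}=2-\frac1m ,
$$
which tends to~$2$ as $m\to\infty$. Combined with $\PoNS\le 2$ this yields $\PoNS=2$.

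There is essentially no technical obstacle here; the only points requiring care are the bookkeeping ones just mentioned (deterministic instances are admissible, $n>m$ is respected, and zero‑size jobs can be avoided by the $\delta\downarrow 0$ perturbation). It may be worth remarking that the extremal behaviour is purely combinatorial — a single job whose size equals the total regular capacity already forces the whole factor‑$2$ gap between the non‑splittable and the fully splittable models — so, in contrast to the other ratios studied in this paper, the price of non‑splittability is not a stochastic phenomenon.
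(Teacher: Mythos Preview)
Your proof is correct, and your lower-bound construction is genuinely different from the paper's. The paper builds a \emph{stochastic} family: $n=m$ i.i.d.\ jobs with $P_j\sim\frac{m}{\lambda}\operatorname{Bernoulli}(\frac{\lambda}{m})$, computes $OPT_{\mathcal P}(I)=2m-\lambda$ directly, and then evaluates $OPT_{\mathcal R}(I)=\E[\max(\sum_j P_j,m)]$ via the Poisson limit theorem together with Lemma~\ref{lemm:poisson-max}, finally taking a double limit $m\to\infty$ and $\lambda\to\infty$ to push the ratio to~$2$. Your deterministic ``one job of size $m$'' instance bypasses all of this machinery and yields the clean ratio $2-\frac{1}{m}$ immediately; it also respects $n>m$, which the paper's own instance technically does not. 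What the paper's construction buys is that it realizes the extremal behaviour with jobs of unit \emph{expected} size and genuine randomness, which ties in with the Bernoulli/Poisson theme that recurs in Sections~\ref{sec:1+e-1} and~\ref{sec:pofa}; what your construction buys is brevity and the observation you make at the end, namely that the factor~$2$ is already forced by non-preemption of a single large job and has nothing to do with stochasticity.
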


\begin{proof}[Proof of Theorem~\ref{theo:VoP2}]
It follows from Propositions~\ref{prop:order} and~\ref{prop:naive2} that $OPT_\mathcal{P} \leq OPT_\mathcal{F} \leq 2 OPT_\mathcal{R}$. 
 
Let $\lambda\in\mathbb{N}$ and consider the instance $I$ with $n=m\geq\lambda$ independent and identically distributed jobs in which the processing time
of each job $j$ takes the value $\frac{m}{\lambda}$ with probability $\frac{\lambda}{m}$ and $0$ otherwise. In other words, for all $j\in\mathcal{J}$
we have
 $P_j \sim \frac{m}{\lambda}\operatorname{Bernoulli}(\frac{\lambda}{m}).$
As $n=m$, an optimal non-idling policy clearly assigns each job to a different machine. 
This yields
 \begin{equation*}
	 OPT_{\mathcal{P}}(I) = m\cdot\E[\max(P_1,1)] = m\cdot\Bigl((1-\frac{\lambda}{m})\cdot 1 + \frac{\lambda}{m}\cdot\frac{m}{\lambda}\Bigr) = 2 m-\lambda.  
 \end{equation*}

 For the objective value of an optimal fractional assignment policy we can use Proposition~\ref{prop:order}. 
 We will also use the fact that the sum of i.i.d.\ Bernoulli random variables is binomially distributed, i.e.,
 $Y:=\frac{\lambda}{m}\cdot\sum_{j\in \mathcal{J}} P_j \sim \operatorname{Binomial} (m,\frac{\lambda}{m})$. 
 Moreover, it is folklore that $Y$ converges in distribution to $Z\sim$ Poisson($\lambda$) as $m \to \infty$.

Therefore, we have $\frac{\lambda}{m}OPT_{\mathcal{R}}(I) = \frac{\lambda}{m}\E\Big[\max(\sum_{j\in\mathcal{J}} P_j, m)\Big]=\mathbb{E}[\max(Y,\lambda)]$,
which converges in distribution to $1+\frac{e^{-\lambda} \lambda^\lambda}{\lambda !}$ as $m\to\infty$ by Lemma~\ref{lemm:poisson-max}.
Putting everything together, the ratio $OPT_{\mathcal{P}}(I)/OPT_{\mathcal{R}}(I)$ converges to $
2 (1+\frac{e^{-\lambda} \lambda^\lambda}{\lambda !})^{-1}$ as $m\to\infty$, and this quantity can be made arbitrarily close to $2$ by choosing $\lambda$ large enough.
\end{proof}

\section{Approximation ratio of LEPT: The case of short jobs}\label{sec:1+e-1}

In this section, we show that $LEPT_\mathcal{F}$ is an $(1+e^{-1})$-approximation algorithm
when the instance only contains \emph{short jobs}.
\begin{definition}\label{def:short}
	We say job $j$ is \emph{short} if its processing time $P_j$ is less than or equal to $1$ almost surely, i.e.,
	 \begin{equation*}
	 		\mathbb{P}[ 0\leq P_j\leq 1]=1.
	 \end{equation*}
\end{definition}

In order to prove the performance guarantee of $LEPT_\mathcal{F}$ we make use of several lemmas.
The first lemma gives a tight bound on the expected cost incurred on one machine.

\begin{lemma}\label{lemm:twopoints}
Let $k$ be some positive integer and let all jobs $j\in[k]$ be short.
Then,
\begin{equation*}
 \E\Bigl[\max(\sum_{j=1}^k P_j,1)\Bigr] \leq \sum_{j=1}^k \mathbb{E}[P_j]\ +\ \prod_{j=1}^k (1- \mathbb{E}[P_j]).
\end{equation*}
 Moreover, this bound is tight and attained for
 the two point distributions $P_j^* \sim \operatorname{Bernoulli}(\mathbb{E}[P_j])$.
\end{lemma}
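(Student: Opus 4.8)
The plan is to reduce the claim to a bound on the "underload" term. Since $\max(x,1)=x+(1-x)^+$ for every real $x$, linearity of expectation gives
$$\E\Bigl[\max\bigl(\textstyle\sum_{j=1}^k P_j,1\bigr)\Bigr] = \sum_{j=1}^k \E[P_j] + \E\Bigl[\bigl(1-\textstyle\sum_{j=1}^k P_j\bigr)^+\Bigr],$$
so it suffices to show $\E\bigl[(1-\sum_{j=1}^k P_j)^+\bigr]\le \prod_{j=1}^k(1-\E[P_j])$.

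The next step is a realization-wise inequality: for any $p_1,\dots,p_k\in[0,1]$,
$$\Bigl(1-\sum_{j=1}^k p_j\Bigr)^{\!+}\ \le\ \prod_{j=1}^k (1-p_j).$$
This follows from the Weierstrass product inequality $\prod_{j=1}^k(1-p_j)\ge 1-\sum_{j=1}^k p_j$ (a one-line induction: at step $k$ multiply the inductive bound by $1-p_k\ge0$ and drop the nonnegative cross term $p_k\sum_{j<k}p_j$), combined with the fact that each factor $1-p_j$ is nonnegative, so the product is nonnegative and therefore dominates $\max(1-\sum_j p_j,\,0)$. This is precisely where the shortness assumption $P_j\le 1$ almost surely enters; without it the product can be negative and the pointwise bound breaks.

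Taking expectations of the pointwise inequality and invoking mutual independence of the $P_j$ converts the product of random variables into a product of means:
$$\E\Bigl[\bigl(1-\textstyle\sum_{j=1}^k P_j\bigr)^+\Bigr]\ \le\ \E\Bigl[\textstyle\prod_{j=1}^k(1-P_j)\Bigr]\ =\ \prod_{j=1}^k\bigl(1-\E[P_j]\bigr),$$
which closes the proof of the inequality. For tightness I would substitute $P_j^\ast\sim\operatorname{Bernoulli}(\E[P_j])$: then $\sum_j P_j^\ast$ equals $0$ with probability $\prod_j(1-\E[P_j])$ and is at least $1$ otherwise, so $(1-\sum_j P_j^\ast)^+$ is $1$ exactly on the all-zero event and $0$ elsewhere, making both sides of the bound equal.

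I do not expect a genuine obstacle here; the only points requiring care are flagging that the shortness hypothesis is exactly what makes the pointwise step valid, and keeping the positive part $(\cdot)^+$ (rather than $1-\sum p_j$) throughout, since the Weierstrass inequality alone is too weak without also using nonnegativity of the product.
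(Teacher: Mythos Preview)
Your proof is correct, and it takes a genuinely different route from the paper's own argument. The paper does not use the decomposition $\max(x,1)=x+(1-x)^+$ or the Weierstrass product inequality at all; instead, it fixes one variable $Y=P_j$ at a time, observes that $y\mapsto\E_Z[\max(y+Z,1)]$ is convex on $[0,1]$ (where $Z$ is the sum of the remaining, independent, variables), and uses the chord inequality to replace $Y$ by $Y^*\sim\operatorname{Bernoulli}(\E[Y])$ without decreasing the expectation. Iterating this gives $\E[\max(\sum P_j,1)]\le\E[\max(\sum P_j^*,1)]$, and the right-hand side is then computed exactly via the law of total expectation. Your approach is more elementary and shorter: a single pointwise bound plus independence to factor the product. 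The paper's approach, on the other hand, yields the slightly stronger intermediate statement that the Bernoulli distributions are the extremal ones among \emph{all} short distributions with the given means (not merely that the bound is attained there), and its convexity viewpoint foreshadows the second-order stochastic dominance arguments used later in Section~\ref{sec:stochastic_dominance}.
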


\begin{proof}
	
	Let $Y$ and $Z$ be random variables with $\mathbb{P}[ 0\leq Y\leq 1]=1.$ Observe that $0\leq \E[Y]\leq 1$.
	We are going to show that $\E[\max(Y+Z,1)]$ can be bounded from above by choosing
	the two point distribution $Y^*\sim \operatorname{Bernoulli}(\E[Y])$, such that 
	$\mathbb{P}[Y^*=0] = (1-\E[Y])$  and $\mathbb{P}[Y^*=1] = \E[Y].$
	To do so, we define the function $g:[0,1]\to \R$, 
	$y\mapsto \E_Z[\max(y+Z,1)].$
	This function is convex, since it is the
	expectation of a pointwise maximum of two affine functions~\cite{BV04}.
	Therefore, for all $y\in [0,1]$ we have
	$
	g(y) \leq g(0) + y (g(1) - g(0)).
	$
	Then, by definition of $g$,
	\begin{align*}
	\E[\max(Y+Z,1)] = \E_Y[g(Y)] &\leq g(0) + \E_Y[Y]\cdot(g(1) - g(0))\\
	&= \E_{Y^*}[g(Y^*)]=\E[\max(Y^*+Z,1)]. 
	\end{align*}
	Using this bound for all $j\in[k]$,
	we obtain
	$\E\Bigl[\max(\sum_{j=1}^k P_j,1)\Bigr] \leq \E\Bigl[\max(\sum_{j=1}^k P_j^*,1)\Bigr],$
	where $P_j^* \sim \operatorname{Bernoulli}(\mathbb{E}[P_j])$. Then, 
	by the law of total expectation, we have:
	\begin{align*}
	\E\Bigl[\max(\sum_{j=1}^k P_j^*,1)\Bigr] &= \E\Bigl[\sum_{j=1}^k P_j^* \Big|\sum_{j=1}^k P_j^*\geq 1\Bigr] \mathbb{P}[\sum_{j=1}^k P_j^*\geq 1] + \mathbb{E}[1]\ \mathbb{P}[\sum_{j=1}^k P_j^*< 1].
	\end{align*}
	Since the random variable $\sum_{j=1}^k P_j^*$ is a nonnegative integer,
	it cannot lie in the interval $(0,1)$,
	so the first term in the above sum is equal to $\E\Bigl[\sum_{j=1}^k P_j^*\Bigr]=\sum_{j=1}^k \E\Big[P_j\Big]$,
	and the second term is equal to $\mathbb{P}[P_1^*=\ldots=P_k^*=0]=\prod_{j=1}^k (1-\mathbb{E}[P_j])$.
\end{proof}

Before we proceed, we introduce further notation on the outcome of $\LEPTmu$.
\begin{notation}
	Let $J_i:=\{j\in \mathcal{J} : j\xrightarrow{\LEPTmu} i\}$ denote the jobs assigned to machine $i\in \mathcal{M}$ by $\LEPTmu$ and $n_i:=|J_i|$.  Without loss of generality we assume $n_i\geq 1$ for all $i\in\mathcal{M}$,
	as otherwise it is clear that $n<m$, and hence, $\LEPTmu$ is an optimal policy. Moreover, we define the expected workload of machine $i$ to be $$x_i:=\E[X_i^{\LEPTmu}]=\displaystyle{\sum_{j\in J_i}} \mathbb{E}[P_j].$$
\end{notation}
The next lemma gives bounds on the expected workload of any machine in an $\LEPTmu$ schedule. Interestingly, the gap between the lower
and upper bounds becomes smaller when the number of jobs scheduled on a machine grows.

\begin{lemma}\label{lemm:bound_ell}
 There exists $\ell > 0$ such that for all $i\in\mathcal{M}$,
\begin{equation*}
 \ell \leq x_i \leq \frac{n_i}{n_i-1} \ell,
\end{equation*}
 where we use the convention $\frac{n_i}{n_i-1} = \frac{1}{0}:=+\infty$ whenever $n_i=1$.
\end{lemma}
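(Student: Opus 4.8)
The plan is to exploit the greedy structure of $\LEPTmu$: it scans jobs in nonincreasing order of $\E[P_j]$ and places each on a machine of currently minimal expected load. I would take
$$\ell := \min_{i\in\mathcal{M}} x_i,$$
so that the lower bound $\ell\le x_i$ is immediate, and $\ell>0$ since every $J_i$ is nonempty and we may assume w.l.o.g.\ that $\E[P_j]>0$ for all $j$ (jobs with null expectation have $P_j=0$ almost surely and are irrelevant; should removing them leave fewer than $m$ jobs, $\LEPTmu$ is optimal and there is nothing to prove). The whole content is then the upper bound $x_i\le\frac{n_i}{n_i-1}\,\ell$, which is vacuous when $n_i=1$, so fix a machine $i$ with $n_i\ge2$.

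First I would control the last job $k$ that $\LEPTmu$ assigns to machine $i$. Since the other $n_i-1$ jobs of $J_i$ are all scanned before $k$ and the scan order is nonincreasing in expected size, each of them has expected size at least $\E[P_k]$; summing,
$$x_i-\E[P_k]=\sum_{j\in J_i\setminus\{k\}}\E[P_j]\ \ge\ (n_i-1)\,\E[P_k],\qquad\text{so}\qquad \E[P_k]\le \frac{x_i}{n_i}.$$

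Then I would use the greedy choice itself: at the moment $\LEPTmu$ assigns $k$, machine $i$ has expected load exactly $x_i-\E[P_k]$ (all of $J_i\setminus\{k\}$ is already on it and nothing more will be added), and $k$ is put on a machine of minimal expected load, so every machine $i'$ has expected load $\ge x_i-\E[P_k]$ at that moment; as expected loads only grow during the run,
$$x_{i'}\ \ge\ x_i-\E[P_k]\ \ge\ x_i\Bigl(1-\tfrac1{n_i}\Bigr)=\tfrac{n_i-1}{n_i}\,x_i\qquad\text{for every }i'\in\mathcal{M}$$
(trivially so for $i'=i$). Taking the minimum over $i'$ gives $\ell\ge\frac{n_i-1}{n_i}\,x_i$, i.e.\ $x_i\le\frac{n_i}{n_i-1}\,\ell$, as desired.

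I do not anticipate a real obstacle here: this is the classical LPT load-balancing estimate, transferred to expected processing times. The only steps needing a little care are the choice of $k$ as the \emph{last} job on machine $i$ — this is exactly what makes ``the jobs on $i$ scanned before $k$'' coincide with all of $J_i\setminus\{k\}$, and hence yields the sharp factor $\tfrac{n_i-1}{n_i}$ rather than a weaker one — and disposing of the degenerate instances so that $\ell>0$.
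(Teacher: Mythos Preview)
Your proof is correct and follows essentially the same approach as the paper: both set $\ell=\min_i x_i$ and combine the sorting bound $\E[P_k]\le x_i/n_i$ (for the last job $k$ on machine~$i$) with the greedy-choice observation that every machine has load at least $x_i-\E[P_k]$ at the moment $k$ is placed. The paper phrases this as an induction over the assignment steps whereas you argue directly from the last job on each machine, but that is only a cosmetic difference.
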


\begin{proof}
	We set $\ell:= \min\{x_i : i\in \mathcal{M}\}$, which is strictly positive by our assumption $n>m$. Then, the first inequality follows immediately. 
	Next, we will show that in each step that $\LEPTmu$ assigns a job to a machine the second inequality is fulfilled. Let $j$ denote the job which is put on machine $i$ in the current step. Furthermore, let $\ell'$ and $\ell$ denote the minimum expected load  among all machines before and after the allocation, respectively. Trivially, $\ell'\leq\ell$ is true. Moreover, let $x_i'$ and $x_i$ denote the expected workload of $i$ before and after assigning $j$ to it, respectively. Clearly, we have
	\begin{equation*}
	x_i=x_i'+\E[P_j].  
	\end{equation*}
	Observe, that $\ell' = x_i'$, because $\LEPTmu$ assigns $j$ to the machine with the smallest expected load. In addition, let $n_i$ denote the number of jobs running on machine $i$ after the insertion of $j$. Since $\LEPTmu$ sorts jobs in decreasing order of their expected processing times, it holds
	\begin{equation*}
	\E[P_j] \leq \frac{x_i'}{n_i-1} = \frac{\ell'}{n_i-1}.
	\end{equation*}
	Consider a machine other than $i$. If the inequality of the statement was fulfilled in an earlier step, then by setting the new $\ell$ it still is true. In the beginning, when we have no job at all, the inequality is true, so we only have to take care of machine~$i$.\\
	Finally, we obtain on machine $i$
	\begin{alignat*}{2}
	&\frac{x_i}{\ell} && = \frac{x_i'+\E[P_j]}{\ell} \leq \frac{x_i'+\E[P_j]}{\ell'} \leq 1 + \frac{\E[P_j]}{\ell'} \leq 1 + \frac{\ell'}{\ell' (n_i-1)} = \frac{n_i}{(n_i-1)}.
	\end{alignat*}
\end{proof}

We also need the following result on a specific convex optimization problem.
\begin{lemma}\label{lemm:boundCOP}
	Let $h:[0,1]\to\R$ be convex, $m\in \mathbb{Z}$ and $\sigma\in\mathbb{R}_{\geq 0}$. Moreover. let $v$ denote the optimal value of the following optimization problem
	\begin{subequations}
		\begin{alignat}{2}
		\underset{\vec{z} \geq 0}{\vec{\operatorname{maximize}}} \ &\quad \sum_{i=1}^m\  h(z_i) \tag{COP}\label{COP} \\
		s.t.\ &\quad \sum_{i=1}^m z_i= \sigma \nonumber\\
		&\quad  0\leq \vec{z} \leq 1.\nonumber
		\end{alignat}
	\end{subequations}
	Then, we have
	$$
	v\leq (m-\sigma)h(0) + \sigma h(1).
	$$
\end{lemma}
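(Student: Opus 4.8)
The plan is to prove Lemma~\ref{lemm:boundCOP} by showing that the maximum of a convex function over the polytope $\{\vec z : \sum_i z_i = \sigma,\ 0 \le \vec z \le 1\}$ is attained at a vertex, and that every vertex of this polytope has a very rigid structure: all but at most one coordinate is either $0$ or $1$.

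First I would invoke the standard fact that a convex function on a compact convex polytope attains its maximum at an extreme point; since $h$ is convex on $[0,1]$, the separable sum $\vec z \mapsto \sum_{i=1}^m h(z_i)$ is convex on $\R^m$, and the feasible region of~\eqref{COP} is a nonempty compact polytope (nonempty because $0 \le \sigma$; if $\sigma > m$ the problem is infeasible and there is nothing to prove, so assume $0 \le \sigma \le m$). Hence there is an optimal solution $\vec z^*$ that is a vertex of this polytope. Next I would characterize the vertices: the feasible polytope is the hypercube $[0,1]^m$ intersected with a single hyperplane, so a vertex must have at least $m-1$ of the box constraints tight, i.e.\ at least $m-1$ coordinates equal to $0$ or $1$; the remaining (at most one) coordinate $z_k^*$ is then forced to equal $\sigma$ minus an integer, hence lies in $[0,1]$. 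Write $a = \#\{i : z_i^* = 1\}$ and $b = \#\{i : z_i^* = 0\}$, so that either $a+b = m$ (and $\sigma = a$), or $a + b = m-1$ with one fractional coordinate $z_k^* = \sigma - a \in [0,1]$.

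Then I would bound $\sum_i h(z_i^*)$. In the fractional case, using convexity of $h$ on the fractional coordinate, $h(z_k^*) = h\bigl((1 - z_k^*) \cdot 0 + z_k^* \cdot 1\bigr) \le (1 - z_k^*) h(0) + z_k^* h(1)$, so
\begin{align*}
\sum_{i=1}^m h(z_i^*) &= a\, h(1) + b\, h(0) + h(z_k^*) \\
&\le \bigl(a + z_k^*\bigr) h(1) + \bigl(b + 1 - z_k^*\bigr) h(0) = \sigma\, h(1) + (m - \sigma)\, h(0),
\end{align*}
since $a + z_k^* = \sigma$ and $b + 1 - z_k^* = m - a - z_k^* = m - \sigma$. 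In the integral case ($a + b = m$, $\sigma = a$) the same identity holds directly: $\sum_i h(z_i^*) = a\, h(1) + b\, h(0) = \sigma\, h(1) + (m-\sigma) h(0)$, so the bound holds with equality. In either case $v = \sum_i h(z_i^*) \le (m - \sigma) h(0) + \sigma h(1)$, as claimed.

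The only delicate point is the vertex characterization — making precise that a vertex of $[0,1]^m \cap \{\mathbf 1^\top \vec z = \sigma\}$ has at most one non-integral coordinate. This is a routine linear-algebra argument (a vertex in $\R^m$ must be the unique solution of $m$ linearly independent tight constraints; the equality constraint supplies one, so at least $m-1$ of the box inequalities $z_i \ge 0$ or $z_i \le 1$ must be active), but it is worth stating carefully. Everything else is an elementary convexity estimate. An alternative route avoiding explicit vertex enumeration would be to argue directly by an exchange/rounding argument: if two coordinates $z_i, z_j$ are both strictly between $0$ and $1$, replace them by $z_i + t, z_j - t$ for $t$ chosen to push one of them to a boundary; convexity of $h$ guarantees the objective does not decrease, and iterating reduces to the claimed form. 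I would present whichever version is shorter, most likely the vertex argument.
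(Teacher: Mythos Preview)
Your proof is correct, but it takes a more elaborate route than the paper's. The paper does not pass through vertices at all: it simply applies the secant inequality $h(z) \le h(0) + z\bigl(h(1)-h(0)\bigr)$, valid for every $z\in[0,1]$ by convexity, to each coordinate and sums. Since $\sum_i z_i = \sigma$, this immediately gives $\sum_i h(z_i) \le m\,h(0) + \sigma\bigl(h(1)-h(0)\bigr) = (m-\sigma)h(0) + \sigma h(1)$ for \emph{every} feasible $\vec z$, hence for the optimum. Your argument reaches the same conclusion by first locating the optimum at a vertex, characterizing the vertex structure, and then applying the secant inequality only to the single fractional coordinate; this works, but the vertex machinery is unnecessary because the pointwise bound already holds uniformly over the feasible set. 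The advantage of the paper's approach is brevity and that it sidesteps the feasibility/compactness and linear-algebra details you flagged as delicate; the advantage of yours is that it actually identifies where the maximum sits, which would matter if one wanted tightness or a sharper bound depending on the fractional part of $\sigma$.
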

\begin{proof}
	Because $h$ is convex we have for all $z_i\in[0,1]$
	$$
	h(z_i)\leq h(0) + z_i(h(1)-h(0)).
	$$
	Then, if $\sum_{i=1}^m z_i=\sigma$, summing the above inequality over $i\in\{1,\ldots,m\}$ yields the desired result:
	$$
	\sum_{i=1}^{m} h(z_i)\leq m\cdot h(0) + \sigma(h(1)-h(0)).
	$$
\end{proof}
Moreover, we make use of the following technical convexity result.
\begin{lemma}\label{lemma:convex}
	The function $h:[0,1]\to\R $ with $y\mapsto (1-y)^{1+\frac{\ell}{y}}$, where $h(0)=e^{-\ell}$ using a continuous extension, is convex.
\end{lemma}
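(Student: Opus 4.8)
The plan is to reduce the statement to the single–variable inequality $h''\ge 0$ on the open interval $(0,1)$. This suffices: $h$ is continuous on all of $[0,1]$, because the prescribed value $h(0)=e^{-\ell}$ is exactly $\lim_{y\to 0^+}h(y)$ (indeed $(1+\tfrac{\ell}{y})\ln(1-y)\to-\ell$ as $y\to 0^+$), and $\lim_{y\to 1^-}(1-y)^{1+\ell/y}=0=h(1)$ since the exponent tends to $1+\ell>0$; and a function that is continuous on a closed interval and has nonnegative second derivative on its interior is convex on the closed interval.

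To get at $h''$ I would write $h=e^{f}$ on $(0,1)$ with $f(y)=\bigl(1+\tfrac{\ell}{y}\bigr)\ln(1-y)$, so that $h''=\bigl(f''+(f')^{2}\bigr)e^{f}$ and everything comes down to proving $f''+(f')^{2}\ge 0$. Differentiating $f$ twice — it is convenient to abbreviate $L:=\ln(1-y)$, so that $L'=-\tfrac{1}{1-y}$ and $L''=-(L')^{2}$ — and collecting terms, one obtains the factorisation
\[
 f''(y)+f'(y)^{2}\;=\;\frac{\ell}{y^{4}}\Bigl(\,Q_0(y)\;+\;\ell\,(yL'-L)^{2}\,\Bigr),\qquad Q_0(y):=y^{3}(L')^{2}-2y^{2}L'-2y^{2}L'L+2yL .
\]
The point of this step — and the reason the lemma holds — is that the coefficient of $\ell$ inside the bracket is a perfect square, hence $\ge 0$, so (using $\ell>0$) it is enough to prove $Q_0(y)\ge 0$ for $y\in(0,1)$.

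Finally, substituting $L'=-\tfrac1{1-y}$ into $Q_0$ and simplifying gives
\[
 Q_0(y)=\frac{y}{(1-y)^{2}}\,R(y),\qquad R(y):=2y-y^{2}+2(1-y)\ln(1-y),
\]
so it remains to check $R\ge 0$ on $[0,1)$. I would do this by differentiating twice: $R(0)=0$; $R'(y)=-2y-2\ln(1-y)$, so $R'(0)=0$; and $R''(y)=\tfrac{2y}{1-y}\ge 0$ on $[0,1)$. Hence $R'\ge 0$ and then $R\ge 0$ on $[0,1)$, which yields $Q_0\ge 0$, hence $f''+(f')^{2}\ge 0$, hence $h''\ge 0$, and the lemma follows.

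The only real obstacle is bookkeeping: carrying out the two differentiations of $f$ without error and verifying the displayed factorisation. Once that is in hand, the conceptual content is just the two observations that the $\ell$-linear part of $f''+(f')^{2}$ is a square and that the $\ell$-free part reduces to the elementary inequality $R\ge 0$, itself immediate from $R''\ge 0$.
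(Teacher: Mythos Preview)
Your proof is correct, but it follows a genuinely different route from the paper. The paper computes $h''$ directly, obtains $h''(y)=\frac{\ell(1-y)^{\ell/y-1}}{y^4}\,h_2(y)$ for an explicit polynomial-in-$\log(1-y)$ expression $h_2$, expands $h_2$ as a power series using $\log(1-y)=-\sum_{k\ge 1} y^k/k$, and then shows that every coefficient is nonnegative by proving the combinatorial inequality $\gamma_k+\gamma_{k-2}-2\gamma_{k-1}\ge 0$ for $\gamma_k=\sum_{i=1}^{k-1}\frac{1}{i(k-i)}$.

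Your approach is more structural and more elementary: writing $h=e^{f}$ and examining $f''+(f')^2$, you recognise that the $\ell$-linear part is a perfect square $(yL'-L)^2$, so the whole burden falls on the $\ell$-free part $Q_0$, which you reduce to $R(y)=2y-y^2+2(1-y)\ln(1-y)\ge 0$ and dispatch with two derivatives. This avoids the power-series expansion and the $\gamma_k$ identity entirely; the key insight the paper does not exploit is precisely the perfect-square factorisation of the $\ell$-coefficient. Conversely, the paper's series argument has the mild advantage that it never needs the split into $\ell$-linear and $\ell$-free parts and would in principle scale to messier expressions where no such algebraic structure is visible; but for this particular lemma your argument is cleaner.
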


\begin{proof}
In order to show this, we compute its second derivative
\begin{equation*}
h''(y) = \frac{\ell (1-y)^{\frac{\ell}{y}-1}}{y^4} h_2(y),
\end{equation*}
where $h_2(y) := y^2 (\ell-y+2)+\ell (y-1)^2 \log ^2(1-y)-2 (\ell+1) (y-1) y \log (1-y).$
Now, we use the fact that $\log (1-y) = -\sum_{k=1}^\infty \frac{y^k}{k}$ for all $y\in [0,1)$. 
Hence, $\log^2 (1-y) = \sum_{k=2}^\infty \gamma_k y^k$, where $\gamma_k:=\sum_{i=1}^{k-1} \frac{1}{i(k-i)}$. After some calculus,
the terms of order $2$ and $3$ vanish and we obtain the following series representation of $h_2$ over $[0,1)$:
\begin{equation*}
h_2(y) = (\frac{\ell}{4}+\frac{1}{3}) y^4 + \sum_{k=5}^\infty  (\frac{2(\ell+1)}{(k-1)(k-2)} + \ell( \gamma_k + \gamma_{k-2} - 2 \gamma_{k-1})) y^k.
\end{equation*}

We are going to show that $\gamma_k + \gamma_{k-2} - 2 \gamma_{k-1}\geq 0$ for $k\geq 5$ implying that $h''(y)\geq 0$ for all $y\in [0,1)$. To do so, we rewrite the sums using the partial fraction decomposition. As a consequence, we obtain
\begin{alignat*}{2}
&\gamma_k + \gamma_{k-2} - 2 \gamma_{k-1} && = \frac{2}{k}\sum_{i=1}^{k-1} \frac{1}{i} + \frac{2}{k-2}\sum_{i=1}^{k-3} \frac{1}{i} - \frac{4}{k-1}\sum_{i=1}^{k-2} \frac{1}{i}\\
& && = \frac{2}{k}\left(\frac{1}{k-2} + \frac{1}{k-1}\right) - \frac{4}{(k-1)(k-2)} + \left(\frac{2}{k} + \frac{2}{k-2} - \frac{4}{k-1}\right)\sum_{i=1}^{k-3} \frac{1}{i} \\
& && = - \frac{6}{k(k-1)(k-2)} + \frac{4}{k(k-1)(k-2)}\sum_{i=1}^{k-3} \frac{1}{i} \\
& && \geq 0.
\end{alignat*}
The last inequality results from the fact that for all $k\geq 5$ we have $4\sum_{i=1}^{k-3}\frac{1}{i} \geq 6$.
Hence, $h$ is convex on $[0,1)$, and even on $[0,1]$ by continuity.
\end{proof}

Finally, we are ready to prove the main result of this section.
\begin{theorem}\label{theo:1+e-1}
 For short jobs only it holds
 \begin{equation*}
   \frac{\LEPTmu}{m\max(\rho,1)} \leq 1+e^{-1}.
 \end{equation*}
In particular, $\LEPTmu$ is an $(1+e^{-1})$-approximation algorithm in the class $\mathcal{P}$, over
the set of instances with short jobs only.
\end{theorem}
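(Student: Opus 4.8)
The plan is to bound the cost machine by machine and then aggregate. By Proposition~\ref{prop:order} we have $OPT_{\mathcal{P}} \geq m\max(\rho,1)$, so it suffices to show $\LEPTmu \leq (1+e^{-1}) m\max(\rho,1)$. For a fixed machine $i$, Lemma~\ref{lemm:twopoints} gives
\[
\E\Bigl[\max\bigl(\textstyle\sum_{j\in J_i} P_j,1\bigr)\Bigr] \leq x_i + \prod_{j\in J_i}(1-\E[P_j]).
\]
By the AM--GM inequality (or concavity of $\log$), $\prod_{j\in J_i}(1-\E[P_j]) \leq \bigl(1-\tfrac{x_i}{n_i}\bigr)^{n_i}$ since the $\E[P_j]$ for $j\in J_i$ sum to $x_i$. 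Now I would like to relate $n_i$ and $x_i$ using Lemma~\ref{lemm:bound_ell}, which gives an $\ell>0$ with $\ell \leq x_i \leq \tfrac{n_i}{n_i-1}\ell$, i.e.\ $n_i \geq \tfrac{x_i}{x_i-\ell}$ (this also forces $x_i>\ell$ when $n_i$ is finite; the case $n_i=1$, where $x_i=\ell$ is allowed, must be handled separately). Since $t\mapsto (1-x_i/t)^t$ is increasing in $t$ for $t > x_i$, plugging in the lower bound on $n_i$ yields
\[
\prod_{j\in J_i}(1-\E[P_j]) \leq \Bigl(1-\frac{x_i}{x_i/(x_i-\ell)}\Bigr)^{x_i/(x_i-\ell)} = (1-(x_i-\ell))^{x_i/(x_i-\ell)}.
\]
Writing $y := x_i - \ell \in [0,1-\ell]$ (note $x_i\le 1$ is not guaranteed, so I would need to be a bit careful: if $x_i>1$ the machine cost is just $x_i$ and contributes nothing problematic — actually in that regime $\max(\sum P_j,1)$ has expectation bounded by $x_i + \prod(1-\E P_j)$ still, and the product term is tiny; I expect the worst case to be $x_i\le 1$), this is $(1-y)^{1+\ell/y}$, which is exactly the function $h$ shown to be convex in Lemma~\ref{lemma:convex}.

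So the per-machine bound becomes $\E[\max(\sum_{j\in J_i}P_j,1)] \leq x_i + h(y_i)$ with $y_i = x_i - \ell$. Summing over $i$,
\[
\LEPTmu \leq \sum_{i\in\mathcal{M}} x_i + \sum_{i\in\mathcal{M}} h(y_i) = m\rho + \sum_{i\in\mathcal{M}} h(y_i),
\]
using $\sum_i x_i = m\rho$. Now I would maximize $\sum_i h(y_i)$ subject to $0\le y_i \le 1-\ell \le 1$ and $\sum_i y_i = \sum_i x_i - m\ell = m\rho - m\ell =: \sigma$. Since $h$ is convex, Lemma~\ref{lemm:boundCOP} applies and gives $\sum_i h(y_i) \leq (m-\sigma)h(0) + \sigma h(1) = (m-\sigma)e^{-\ell} + 0$, because $h(1) = 0^{1+\ell} = 0$ and $h(0)=e^{-\ell}$. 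Hence
\[
\LEPTmu \leq m\rho + (m - (m\rho - m\ell))e^{-\ell} = m\rho + m(1-\rho+\ell)e^{-\ell}.
\]
(One subtlety: Lemma~\ref{lemm:boundCOP} needs $\sigma\le m$, i.e.\ $\rho-\ell\le 1$; I would check this holds, e.g.\ because each $y_i\le 1$. Also $\sigma\ge 0$ requires $\rho\ge\ell$, which is clear since $\ell=\min_i x_i \le \frac1m\sum_i x_i=\rho$.)

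It remains to show $m\rho + m(1-\rho+\ell)e^{-\ell} \leq (1+e^{-1})m\max(\rho,1)$. Divide by $m$. If $\rho \geq 1$, the right side is $(1+e^{-1})\rho$ and I need $\rho + (1-\rho+\ell)e^{-\ell} \leq (1+e^{-1})\rho$, i.e.\ $(1-\rho+\ell)e^{-\ell} \leq e^{-1}\rho$; since $\rho\ge1$ and $\ell\le\rho$, the left side is at most $(1+\ell)e^{-\ell}\le$ something I'd bound, and this should go through. If $\rho<1$, the right side is $(1+e^{-1})$ and I need $\rho + (1-\rho+\ell)e^{-\ell}\le 1+e^{-1}$; treating the left as a function of $\ell\in[0,\rho]$ and of $\rho\in[0,1]$, its maximum over this region should be attained at $\rho=1,\ell=1$ (or $\ell\to$ its extreme), giving exactly $1 + e^{-1}\cdot 1 \cdot$... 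I would verify the elementary calculus here. \textbf{The main obstacle} I anticipate is precisely this last step: carefully checking that the two-variable function $(\ell,\rho)\mapsto \rho + (1-\rho+\ell)e^{-\ell}$ (with the constraint $0\le\ell\le\min(\rho,1)$, $\rho\ge 0$) never exceeds $(1+e^{-1})\max(\rho,1)$, and correctly handling the boundary/degenerate cases ($n_i=1$ so $x_i=\ell$ and $y_i=0$; machines with $x_i>1$; the relation $x_i>\ell$ needed to invoke monotonicity of $t\mapsto(1-x_i/t)^t$). The structural part — chaining Lemmas~\ref{lemm:twopoints}, \ref{lemm:bound_ell}, \ref{lemma:convex}, \ref{lemm:boundCOP} — is straightforward once the function $h$ is identified as the right object.
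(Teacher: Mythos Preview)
Your approach is exactly the paper's, and the chaining of Lemmas~\ref{lemm:twopoints}, \ref{lemm:bound_ell}, \ref{lemma:convex}, \ref{lemm:boundCOP} is correct. However, you have the key inequality backwards: from $x_i\leq \frac{n_i}{n_i-1}\ell$ one gets $n_i(x_i-\ell)\leq x_i$, i.e.\ $n_i\leq \frac{x_i}{x_i-\ell}$ (an \emph{upper} bound on $n_i$, not a lower bound). This is precisely what is needed, since $t\mapsto(1-x_i/t)^t$ is increasing; plugging in the upper bound on $n_i$ then gives the upper bound you wrote. With the inequality as you stated it the step would fail.

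Relatedly, your range $y_i\in[0,1-\ell]$ is not justified (it would require $x_i\leq 1$, which you correctly note need not hold). The right argument is: short jobs give $\E[P_j]\leq 1$, hence $x_i\leq n_i$; combining with $n_i\leq \frac{x_i}{x_i-\ell}$ yields $x_i-\ell\leq 1$, i.e.\ $y_i\in[0,1]$, which is exactly the hypothesis of Lemma~\ref{lemm:boundCOP}. Finally, the two-variable optimization you flag as an obstacle is handled cleanly in the paper by first maximizing over $\ell\in(0,\rho]$: the map $\ell\mapsto(1-\rho+\ell)e^{-\ell}$ is nondecreasing on $[0,\rho]$, so the bound becomes $m(\rho+e^{-\rho})$, and then $\frac{\rho+e^{-\rho}}{\max(\rho,1)}$ is increasing on $[0,1]$ and decreasing on $[1,\infty)$, with maximum $1+e^{-1}$ at $\rho=1$.
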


\begin{proof}
By Lemma~\ref{lemm:twopoints} we can bound the expected cost incurred on machine $i$ as
\begin{equation}
\E[\max(X_i^{\LEPTmu},1)]  \leq \sum_{j\in J_i} \mathbb{E}[P_j] + \prod_{j\in J_i}(1-\mathbb{E}[P_j])
 \leq x_i + \left(1-\frac{x_i}{n_i}\right)^{n_i} \label{bound-machinei}
\end{equation}
where the last inequality follows from the Schur-concavity of $\vec{\mu}\mapsto \prod_{j\in J_i} (1-\mu_j) $
over $[0,1]^{n_i}$;
cf.~\cite[Proposition 3.E.1]{MOA79}.
Next, we apply Lemma~\ref{lemm:bound_ell}, so there exists an $\ell> 0$ such that 
$\ell \leq x_i \leq \frac{n_i}{n_i-1} \ell$.
Let $y_i:=x_i-\ell\geq 0$.
The second inequality can be rewritten as
\begin{equation}
n_i \leq \frac{x_i}{x_i-\ell}=1+\frac{\ell}{y_i},\label{ni}
\end{equation}
which remains valid for $y_i=0$ if we define $\ell/0:=+\infty$.
We know that $P_j\in[0,1]$ almost surely, in particular $\mathbb{E}[P_j]\leq 1$, and hence, $x_i\leq n_i$. For this reason, the above inequality implies $x_i\leq \frac{x_i}{x_i-\ell}$
and therefore, $y_i\leq 1$.
By combining~\eqref{bound-machinei} and~\eqref{ni}, and using the fact that $(1-\frac{x_i}{n_i})^{n_i}$ is a nondecreasing function of $n_i$, we obtain
\begin{equation}
\E[\max(X_i^{\LEPTmu},1)] \leq x_i + \left(1-(x_i-\ell)\right)^{\frac{x_i}{x_i-\ell}} = \ell + y_i + h(y_i), \label{bound2-machinei}
\end{equation}
where $h$ is defined as in Lemma~\ref{lemma:convex}. Note that
the $y_i$'s satisfy $y_i\in[0,1]$. 
Summing up the inequalities~\eqref{bound2-machinei} over all $i\in\mathcal{M}$ and using the fact that $\sum_{i\in \mathcal{M}} y_i = m(\rho-\ell)$
we have
\begin{equation*}
\LEPTmu = \sum_{i\in \mathcal{M}} \E[\max(X_i^{\LEPTmu},1)] \leq \rho m + \sum_{i\in \mathcal{M}} h(y_i). 
\end{equation*}
By convexity of $h$ due to Lemma~\ref{lemma:convex} we can use Lemma~\ref{lemm:boundCOP} by setting $\sigma:=m(\rho-\ell)$ yielding 
$$
\rho m + \sum_{i\in \mathcal{M}} h(y_i) \leq m\rho + m(1-\rho+\ell)e^{-\ell} \leq m(\rho+e^{-\rho}),
$$
where the last inequality follows from the fact that $(1+\ell-\rho)e^{-\ell}$ is a nondecreasing function of $\ell$ over $[0,\rho]$ as $0<\ell\leq\rho$.
As a consequence, we obtain
\begin{equation}
\frac{\LEPTmu}{m\max(\rho,1)} \leq \frac{\rho + e^{-\rho}}{\max(\rho,1)}. 
\label{betterbound}
\end{equation}
The theorem now follows since the above ratio is maximized for $\rho=1$.
This is true because 
$\frac{\rho + e^{-\rho}}{\max(\rho,1)} = \rho + e^{-\rho}$ on~$[0,1]$, hence increasing, 
and 
$\frac{\rho + e^{-\rho}}{\max(\rho,1)} = 1 + \frac{e^{-\rho}}{\rho}$ on~$[1,+\infty)$, hence decreasing.\\
Combining this result with the inequality $OPT_\mathcal{P}\geq m \max(1,\rho)$ from Proposition~\ref{prop:order} yields the approximation ratio of $\LEPTmu$.
\end{proof}
\begin{remark}
	For a specific $\rho$, the right hand side of (\ref{betterbound}) gives tighter bounds than $1+e^{-1}$.
\end{remark}

\section{The general case: Taking long jobs into account}\label{sec:long}
In this section, we show that $\LEPTmu$ has performance guarantee $(1+e^{-1})$ even for instances containing long jobs, i.e.,
jobs whose duration may exceed $1$ with positive probability. It can be shown --using a similar approach as in Theorem~\ref{theo:1+e-1}--
that $\frac{\LEPTmu}{OPT_{\mathcal{R}}}\leq1+e^{-\frac{1}{d_{\max}}}$ for instances where each job satisfies 
$P_j\in[0,d_{\max}]$ almost surely for some $d_{\max}\geq 1$, and that this bound is tight.
Letting $d_{\max}\to\infty$ just gives the trivial approximation guarantee of $2$, so we have to use a better lower bound
on $OPT_\mathcal{P}$ in order to prove that $\LEPTmu$ is a $(1+e^{-1})$-approximation algorithm. 
Our next candidate as
the lower bound on $OPT_{\mathcal{P}}$ is $\E[OPT(\vec{P})]$, cf.\ Proposition~\ref{prop:order}. Let us first introduce some further notation.
\begin{notation}
We denote the sum of expected processing times by $s:=\sum_{j\in\mathcal{J}} \E[P_j].$
We split each job into a truncated
part $P_j':=\min(P_j,1)$ and an excess
part $P_j'':=\max(P_j-1,0)$, so that $P_j=P_j'+P_j''$
and $P_j'\leq 1$. We further define 
the truncated load of machine $i$ according to $\LEPTmu$ by $\alpha_i:=\sum_{j\in J_i} \E[P_j']$ and the excess of machine $i$ by $\beta_i:= \sum_{j\in J_i} \E[P_j'']$. 
Moreover, $\beta$ is the overall excess, i.e., $\beta := \sum_{j\in\mathcal{J}} \E[P_j'']=\sum_{i\in \mathcal{M}}\beta_i$.
\end{notation}

Using this notation, one can easily observe that there exists $\ell> 0$ such that for all $i\in\mathcal{M}$
\begin{equation}
\ell \leq \alpha_i + \beta_i \leq \frac{n_i}{n_i-1}\ell, \quad \sum_{i\in\mathcal{M}} \alpha_i = s-\beta \quad \text{and }\quad \sum_{i\in\mathcal{M}} \beta_i = \beta,
\label{observation}
\end{equation}
where the first statement immediately follows by Lemma~\ref{lemm:bound_ell}.
First, we require a lemma that relates an instance to its truncated version with respect to the value of an optimal anticipative policy. 

\begin{lemma}\label{lemm:reduction}
		The truncated jobs with processing times $\vec{P}'$ are short and we have
		\begin{equation*}
		\E[OPT(\vec{P}')]=\E[OPT(\vec{P})]-\beta.
		\end{equation*} 
\end{lemma}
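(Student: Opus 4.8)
The plan is to prove the two assertions separately. The fact that the truncated jobs are short is immediate from the definition $P_j' := \min(P_j, 1) \leq 1$ almost surely, so no work is needed there. The substance of the lemma is the identity relating the optima of the deterministic problems, which I would establish realization-wise: I claim that for every realization $\vec{p}$,
\begin{equation*}
OPT(\vec{p}') = OPT(\vec{p}) - \sum_{j\in\mathcal{J}} p_j'',
\end{equation*}
where $p_j' = \min(p_j,1)$ and $p_j'' = \max(p_j - 1, 0)$; taking expectations then yields $\E[OPT(\vec{P}')] = \E[OPT(\vec{P})] - \beta$, since $\E[\sum_j P_j''] = \beta$ by definition.

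To prove the realization-wise identity, the key observation is that the excess parts $p_j''$ are \emph{unavoidable} costs: any job $j$ with $p_j > 1$ forces its machine's workload (and hence its cost) to be at least $p_j > 1$, no matter how the remaining jobs are placed. More precisely, I would argue both inequalities. For ``$\leq$'': take an optimal fixed assignment for $\vec{p}$ and use the same assignment for $\vec{p}'$. On each machine $i$ with job set $I_i$, the cost changes from $\max(\sum_{j\in I_i} p_j, 1)$ to $\max(\sum_{j\in I_i} p_j', 1)$. Since every machine receiving at least one long job has $\sum_{j\in I_i} p_j \geq 1$, and one checks case by case (splitting on whether $\sum_{j\in I_i}p_j' \ge 1$ or $<1$) that the per-machine decrease is at most $\sum_{j\in I_i} p_j''$, summing over $i$ gives $OPT(\vec{p}') \leq OPT(\vec{p}) - \sum_j p_j''$. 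For ``$\geq$'': take an optimal assignment for $\vec{p}'$ and use it for $\vec{p}$; the symmetric case analysis shows the per-machine cost \emph{increases} by at most $\sum_{j\in I_i} p_j''$, so $OPT(\vec{p}) \leq OPT(\vec{p}') + \sum_j p_j''$.

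The clean way to package both directions at once is to prove, for any fixed assignment with machine job-sets $I_1, \dots, I_m$, the exact identity
\begin{equation*}
\sum_{i\in\mathcal{M}} \max\Bigl(\sum_{j\in I_i} p_j,\, 1\Bigr) = \sum_{i\in\mathcal{M}} \max\Bigl(\sum_{j\in I_i} p_j',\, 1\Bigr) + \sum_{j\in\mathcal{J}} p_j''
\end{equation*}
whenever no machine holds more than one long job, or more carefully the inequality ``$\geq$'' in general; the point being that $\max(\sum_{j\in I_i} p_j, 1) = \sum_{j\in I_i} p_j'' + \max(\sum_{j\in I_i} p_j', 1)$ holds exactly when $\sum_{j\in I_i} p_j'' = 0$ or when $\sum_{j\in I_i}p_j' \ge 1$, and otherwise the left side is the larger. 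Since an optimal assignment for $\vec{p}$ will never co-locate a long job with anything that pushes an avoidable machine over capacity unnecessarily — and in any case the inequalities above are tight for the \emph{respective} optima — the two optimal values differ by exactly $\sum_j p_j''$.

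I expect the main obstacle to be bookkeeping the case analysis cleanly: one must be careful that the assignment optimal for $\vec p'$ need not be optimal for $\vec p$ and vice versa, so a single chain of equalities does not suffice — one genuinely needs the two inequalities, each obtained by transferring the other problem's optimal assignment and bounding the per-machine cost change. The elementary fact underpinning every case is $\max(a + b, 1) \le b + \max(a,1)$ for $b \ge 0$ (with equality when $a \ge 1$ or $b = 0$), applied with $a = \sum_{j \in I_i} p_j'$ and $b = \sum_{j \in I_i} p_j''$; I would state this as a one-line sub-claim and then dispatch both directions quickly.
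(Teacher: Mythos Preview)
Your approach is essentially the same as the paper's: work realization-wise, transfer the optimal assignment for one vector of processing times to the other, and combine the two resulting inequalities. However, you are overcomplicating the per-machine bookkeeping. You worry about the case where $\sum_{j\in I_i} p_j'' > 0$ but $\sum_{j\in I_i} p_j' < 1$, and hedge with phrases like ``whenever no machine holds more than one long job'' and ``the inequality `$\geq$' in general''. But this case is \emph{impossible}: if $\sum_{j\in I_i} p_j'' > 0$ then some $j\in I_i$ has $p_j > 1$, hence $p_j' = \min(p_j,1) = 1$, which already forces $\sum_{j\in I_i} p_j' \geq 1$. In your own notation, $b>0$ implies $a\geq 1$, so your elementary inequality $\max(a+b,1)\leq b+\max(a,1)$ is always an equality here.

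Consequently, for \emph{every} fixed assignment (not just optimal ones, and with no restriction on how many long jobs share a machine) the exact identity
\[
\max\Bigl(\sum_{j\in I_i} p_j,\,1\Bigr) \;=\; \sum_{j\in I_i} p_j'' \;+\; \max\Bigl(\sum_{j\in I_i} p_j',\,1\Bigr)
\]
holds on each machine. This is precisely what the paper states (``One can show that\ldots'') and uses. Once you have this exact identity, the two inequalities obtained by transferring optima are each tight for free, and the argument collapses to a couple of lines. Your plan is correct and would go through, but drop the case analysis and the caveats about co-locating long jobs: they are not needed.
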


\begin{proof}
	Let $\vec{p}$ be an arbitrary but fixed realization of the processing times for instance $I$, and
	let $\vec{p}'$ denote the corresponding truncated processing times, i.e., $p_j'=\min(p_j,1)$. Furthermore, let $\Pi$ be an arbitrary policy and $\Pi(\vec{p})$ be the assignment resulting from $\Pi$ for realization $\vec{p}$. Moreover, let $J_i(\vec{p})$ denote the set of jobs assigned to machine $i$ by $\Pi(\vec{p})$.
	The difference between the costs incurred by $\Pi(\vec{p})$ for the realizations $\vec{p}$ and $\vec{p}'$ on machine $i$ is
	\begin{equation*}\label{diff_omega}
	\phi(\Pi(\vec{p}),\vec{p})-\phi(\Pi(\vec{p}),\vec{p}')
	= \max( \sum_{j\in J_i(\vec{p})} p_j ,1) - \max( \sum_{j\in J_i(\vec{p})} \min(p_j,1) ,1). 
	\end{equation*}
	One can show that $\phi(\Pi(\vec{p}),\vec{p})-\phi(\Pi(\vec{p}),\vec{p}') = \sum_{j\in J_i(\vec{p})} \max(p_j-1,0).$
	Inserting $OPT(\vec{p}')$ and $OPT(\vec{p})$ for $\Pi$ and taking the expectation we obtain
	\begin{equation}\label{OPT_P'}
	\E[\phi(OPT(\vec{P}'),\vec{P})] = \E[\phi(OPT(\vec{P}'),\vec{P}')] + \beta
	\end{equation}
	and
	\begin{equation}\label{OPT_P}
	\E[\phi(OPT(\vec{P}),\vec{P}')] = \E[\phi(OPT(\vec{P}),\vec{P})] - \beta.
	\end{equation}
	Using optimality of $OPT(\vec{p}')$ for realization $\vec{p}'$ and \eqref{OPT_P'} we have 
	\begin{equation}
	\E[\phi(OPT(\vec{P}'),\vec{P}')]\leq \E[\phi(OPT(\vec{P}),\vec{P}')] = \E[\phi(OPT(\vec{P}),\vec{P})] - \beta. \label{ineqFA1} 
	\end{equation}
	Similarly, using optimality of $OPT(\vec{p})$ for realization $\vec{p}$ and \eqref{OPT_P} we obtain
	\begin{equation}
	\E[\phi(OPT(\vec{P}),\vec{P})]\leq \E[\phi(OPT(\vec{P}'),\vec{P})] = \E[\phi(OPT(\vec{P}'),\vec{P}')] + \beta. \label{ineqFA2} 
	\end{equation}
	Finally, by combining~\eqref{ineqFA1} and~\eqref{ineqFA2} we have
	$$
	\E[\phi(OPT(\vec{P}),\vec{P})] -\beta \leq \E[\phi(OPT(\vec{P}'),\vec{P}')]  \leq \E[\phi(OPT(\vec{P}),\vec{P})] - \beta,
	$$
	concluding the proof.
\end{proof}

Next, we use this Lemma to obtain a handy lower bound on $\E[OPT(\vec{P})]$.	
\begin{lemma}\label{lemm:LB}
	$\E[OPT(\vec{P})]\geq \max(s,m+\beta)$.
\end{lemma}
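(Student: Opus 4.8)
The plan is to establish the two lower bounds $\E[OPT(\vec P)]\geq s$ and $\E[OPT(\vec P)]\geq m+\beta$ separately, and then take their maximum.

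For the bound $\E[OPT(\vec P)]\geq s$, I would just chain the inequalities of Proposition~\ref{prop:order}: $\E[OPT(\vec P)]\geq OPT_{\mathcal R}=\E[\max(\sum_{j\in\mathcal J}P_j,m)]\geq\E[\sum_{j\in\mathcal J}P_j]=s$, where the last step uses $\max(x,m)\geq x$ pointwise.

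For the bound $\E[OPT(\vec P)]\geq m+\beta$, the key idea is to pass to the truncated instance via Lemma~\ref{lemm:reduction}, which gives $\E[OPT(\vec P)]=\E[OPT(\vec P')]+\beta$. It then suffices to show $\E[OPT(\vec P')]\geq m$, and in fact $OPT(\vec p')\geq m$ holds for \emph{every} realization $\vec p'$: since each of the $m$ machines incurs a cost $\max(X_i,1)\geq 1$, we get $\phi(\Pi,\vec p')\geq m$ for any fixed-assignment policy $\Pi$, hence $OPT(\vec p')\geq m$. Taking expectations yields $\E[OPT(\vec P')]\geq m$, and therefore $\E[OPT(\vec P)]\geq m+\beta$. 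Combining the two bounds gives $\E[OPT(\vec P)]\geq\max(s,m+\beta)$, as claimed.

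I do not expect a genuine obstacle here; the only point worth flagging is that one should resist trying to prove $\E[OPT(\vec P)]\geq m+\beta$ ``directly'', e.g.\ by attempting the pointwise estimate $\max(\sum_j P_j,m)\geq\max(\sum_j P_j',m)+\sum_j P_j''$, since this inequality can fail on realizations with $\sum_j P_j'<m\leq\sum_j P_j$. Routing through the reduction lemma of the previous step is precisely what makes the excess $\beta$ appear additively and the argument go through cleanly.
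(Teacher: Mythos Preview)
Your proposal is correct and follows essentially the same route as the paper: both hinge on Lemma~\ref{lemm:reduction} to write $\E[OPT(\vec P)]=\E[OPT(\vec P')]+\beta$ and then bound the truncated instance from below. The only organizational difference is that the paper applies Proposition~\ref{prop:order} to the truncated instance to get $\E[OPT(\vec P')]\geq\max(s-\beta,m)$ in one stroke and then adds $\beta$ back, using $\max(s-\beta,m)+\beta=\max(s,m+\beta)$, whereas you split the two bounds and derive $\E[OPT(\vec P)]\geq s$ directly on the original instance; this is a cosmetic difference, not a substantive one.
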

\begin{proof}
	We claim that $\E[OPT(\vec{P})]=\E[OPT(\vec{P}')]+\beta$.
	Then, the result follows from 
	$\E[OPT(\vec{P}')]\geq \max(\sum_{j\in\mathcal{J}} \E[P_j'],m)$ using Proposition~\ref{prop:order},
	and the identity $\sum_{j\in\mathcal{J}} \E[P_j'] = s - \beta$,
	implying that
	$$\E[OPT(\vec{P})]\geq \max(s-\beta,m)+\beta = \max(s,m+\beta).$$
\end{proof}

We continue with several lemmas obtaining upper bounds on $\LEPTmu$ sequentially. 
\begin{lemma}\label{lemm:v1}
	The cost of $\LEPTmu$ can be bounded from above by 
	$$\LEPTmu\leq s+\sum_{i\in\mathcal{M}} \left(1-\frac{\alpha_i}{n_i}\right)^{n_i}.$$
\end{lemma}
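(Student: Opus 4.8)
The plan is to bound the expected cost $\E[\max(X_i^{\LEPTmu},1)]$ incurred on each machine $i$ separately and then sum over $\mathcal{M}$, exploiting the split $X_i^{\LEPTmu}=\sum_{j\in J_i}P_j=\sum_{j\in J_i}P_j'+\sum_{j\in J_i}P_j''$ into truncated and excess parts. The first step is an elementary pointwise inequality: for all reals $a\geq 0$ and $b\geq 0$ one has $\max(a+b,1)\leq \max(a,1)+b$ (a one-line case distinction on whether $a\geq 1$ and whether $a+b\geq 1$). Applying it in each realization with $a=\sum_{j\in J_i}P_j'$ and $b=\sum_{j\in J_i}P_j''\geq 0$, and taking expectations, gives
\begin{equation*}
\E[\max(X_i^{\LEPTmu},1)]\leq \E\Bigl[\max\bigl(\textstyle\sum_{j\in J_i}P_j',1\bigr)\Bigr]+\beta_i .
\end{equation*}

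Next, since the truncated jobs satisfy $P_j'\leq 1$ almost surely, they are short, so Lemma~\ref{lemm:twopoints} applies to the family $\{P_j'\}_{j\in J_i}$ and yields
\begin{equation*}
\E\Bigl[\max\bigl(\textstyle\sum_{j\in J_i}P_j',1\bigr)\Bigr]\leq \sum_{j\in J_i}\E[P_j']+\prod_{j\in J_i}(1-\E[P_j'])=\alpha_i+\prod_{j\in J_i}(1-\E[P_j']).
\end{equation*}
Exactly as in the proof of Theorem~\ref{theo:1+e-1}, the Schur-concavity of $\vec{\mu}\mapsto\prod_{j\in J_i}(1-\mu_j)$ over $[0,1]^{n_i}$ (recall $\E[P_j']\in[0,1]$, so in particular $\alpha_i/n_i\leq 1$) gives $\prod_{j\in J_i}(1-\E[P_j'])\leq\bigl(1-\tfrac{\alpha_i}{n_i}\bigr)^{n_i}$. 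Combining the three displays,
\begin{equation*}
\E[\max(X_i^{\LEPTmu},1)]\leq \alpha_i+\beta_i+\Bigl(1-\frac{\alpha_i}{n_i}\Bigr)^{n_i}.
\end{equation*}

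Finally I would sum this over $i\in\mathcal{M}$ and use $\sum_{i\in\mathcal{M}}(\alpha_i+\beta_i)=\sum_{j\in\mathcal{J}}\bigl(\E[P_j']+\E[P_j'']\bigr)=\sum_{j\in\mathcal{J}}\E[P_j]=s$ to conclude $\LEPTmu\leq s+\sum_{i\in\mathcal{M}}\bigl(1-\tfrac{\alpha_i}{n_i}\bigr)^{n_i}$. There is no real obstacle here: the only points requiring a little care are the pointwise inequality $\max(a+b,1)\leq\max(a,1)+b$ and checking that the hypotheses of Lemma~\ref{lemm:twopoints} and of the Schur-concavity bound are met by the truncated jobs $P_j'$; everything else is bookkeeping with the notation $\alpha_i,\beta_i,s$.
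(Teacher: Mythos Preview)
Your proof is correct and follows essentially the same route as the paper: split each machine's load into truncated and excess parts, bound the truncated contribution via Lemma~\ref{lemm:twopoints} and the Schur-concavity argument, then sum. The only minor difference is that the paper actually establishes the pointwise \emph{equality} $\max(\sum_{j\in J_i} p_j,1)=\max(\sum_{j\in J_i} p_j',1)+\sum_{j\in J_i}\max(p_j-1,0)$ (by distinguishing whether some $p_j>1$), whereas you use the weaker inequality $\max(a+b,1)\leq\max(a,1)+b$; for this lemma the inequality suffices.
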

\begin{proof}
	The cost on machine~$i$ for realization $\vec{p}$ is
	\begin{equation*}
	\max(\sum_{j\in J_i} p_j,1) = \max(\sum_{j\in J_i} p_j' ,1) + \sum_{j\in J_i} \max(p_j-1,0),
	\end{equation*}
	where $p_j':=\min(p_j,1)$. This follows 
	by distinguishing between the cases
	where 
	\mbox{$\forall j\in J_i:\ p_j\leq 1$} or \mbox{$\exists j\in J_i: p_j>1$}.
	Summing up this equality over all machines and taking the expectation yields the identity
	$$
	\LEPTmu = \sum_{i\in\mathcal{M}} \E[\max(\sum_{j\in J_i} P_j',1)] + \beta.
	$$
	Since the reduced jobs $P_j'$ are short, we can  
	use Lemma~\ref{lemm:twopoints} to obtain  
	\begin{equation*}
	\LEPTmu \leq \sum_{j\in\mathcal{J}} \E[P_j'] + \sum_{i\in\mathcal{M}} \prod_{j\in J_i} \left(1- \E[P_j']\right)+\beta
	\leq s + \sum_{i\in\mathcal{M}} \left(1-\frac{\alpha_i}{n_i}\right)^{n_i},
	\end{equation*}
	where the second inequality follows from the identity $\sum_{j\in\mathcal{J}} \E[P_j'] = s - \beta$
	and the Schur-concavity of $\vec{\mu}\mapsto \prod_{j\in J_i} (1-\mu_j)$ similarly as in the proof of Theorem~\ref{theo:1+e-1}.
\end{proof}

Next, we will modify our arbitrary instance, which will only worsen the approximation ratio of $\LEPTmu$. To do so, we introduce a partition of the machines.
\begin{notation}\label{notation:machinepartition}
	We partition the machines into the following three types:
	\begin{align*}
	&\mathcal{M}_0:=\{i\in\mathcal{M}: \alpha_i>2\ell\}\\
	&\mathcal{M}_1:=\{i\in\mathcal{M}: \ell \leq \alpha_i\leq 2\ell\}\\
	&\mathcal{M}_2:=\{i\in\mathcal{M}: \alpha_i<\ell\}.
	\end{align*}
	Furthermore, let $m_k:=\vert\mathcal{M}_k\vert$ for $k\in\{0,1,2\}$.
\end{notation}

\begin{figure}
	\centering
	\begin{minipage}{.5\textwidth}
		\centering
		\includegraphics[scale=0.7]{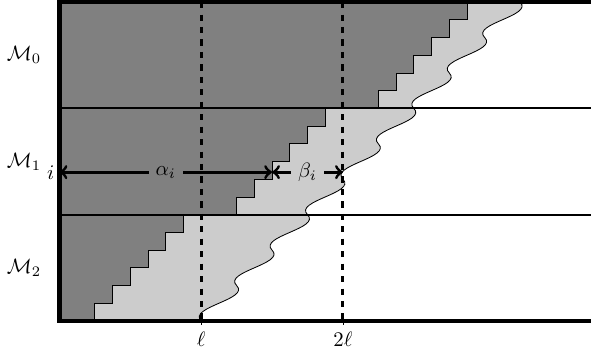}
	\end{minipage}%
	\begin{minipage}{.5\textwidth}
		\centering
		\includegraphics[scale=0.7]{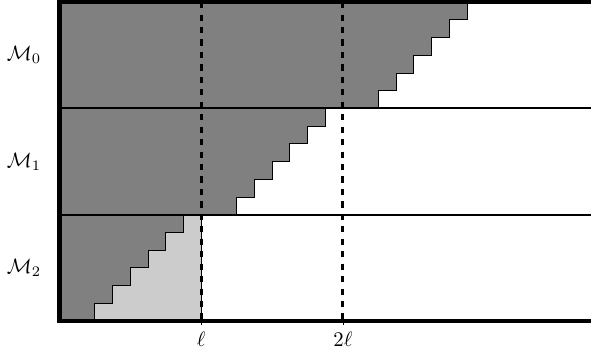}
	\end{minipage}
	\caption{\small Illustration of the machine partition in Notation~\ref{notation:machinepartition} on the left, with expected load $x_i=\alpha_i+\beta_i\geq \ell$
 on the horizontal axis. The dark segments indicate $\alpha_i$, the light gray segments indicate $\beta_i$. On the right the corresponding modification of Lemma~\ref{lemm:v2} from $\beta_i$ to $\hat\beta_i$. }
	\label{fig:machinepartition}
\end{figure}
\begin{lemma}\label{lemm:v2}
	Let
	\begin{alignat*}{2}
	&\hat\beta_i:=\begin{cases}
		0,  & \text{if }i\in \mathcal{M}_0\cup\mathcal{M}_1\\
		\ell - \alpha_i, & \text{if }i\in \mathcal{M}_2
	\end{cases}&& \qquad \hat\beta:=\sum_{i\in\mathcal{M}}\hat{\beta_i}
	\end{alignat*}
	\begin{alignat*}{3}
	&\tilde s:=s+\hat\beta-\beta &&\qquad  \hat s:=\tilde s -\sum_{i\in \mathcal{M}_0}\alpha_i &&\qquad \hat m:=m-m_0.
	\end{alignat*}
	Then we have for all $i\in\mathcal{M}$
	$$
	\sum_{i\in\mathcal{M}} \alpha_i = \hat s-\hat\beta \quad \text{and } \ell \leq \alpha_i + \hat\beta_i \leq \frac{n_i}{n_i-1}\ell
	$$
	and
	$$
	\frac{\LEPTmu}{\max(s,m+\beta)}\leq \frac{\hat s +  \sum_{i\in\mathcal{M}_1\cup\mathcal{M}_2}\  \left( 1- \frac{\alpha_i}{n_i} \right)^{n_i}}{\max(\hat s,\hat m+\hat\beta)}.
	$$ 
\end{lemma}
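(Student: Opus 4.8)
The plan is to verify the two displayed identities/inequalities first, and then establish the key inequality between ratios by a case analysis on which term dominates the denominator. For the first claim, note that by construction $\alpha_i + \hat\beta_i = \alpha_i$ on $\mathcal{M}_0\cup\mathcal{M}_1$ and $\alpha_i+\hat\beta_i=\ell$ on $\mathcal{M}_2$; in all three cases $\ell\le\alpha_i+\hat\beta_i$ (using $\alpha_i\ge\ell$ on $\mathcal{M}_0\cup\mathcal{M}_1$), and the upper bound $\alpha_i+\hat\beta_i\le\frac{n_i}{n_i-1}\ell$ follows because we have only decreased the load relative to $\alpha_i+\beta_i$, which already satisfied this bound by~\eqref{observation}. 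The identity $\sum_{i\in\mathcal{M}}\alpha_i=\hat s-\hat\beta$ is pure bookkeeping: $\hat s - \hat\beta = \tilde s - \sum_{i\in\mathcal{M}_0}\alpha_i - \hat\beta = (s+\hat\beta-\beta) - \sum_{i\in\mathcal{M}_0}\alpha_i - \hat\beta = s - \beta - \sum_{i\in\mathcal{M}_0}\alpha_i$, and since $\sum_{i\in\mathcal{M}}\alpha_i = s-\beta$ (from~\eqref{observation}), we must check $\sum_{i\in\mathcal{M}_1\cup\mathcal{M}_2}\alpha_i = s-\beta-\sum_{i\in\mathcal{M}_0}\alpha_i$, which is immediate. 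Here I would need to double-check whether the intended statement sums $\alpha_i$ only over $\mathcal{M}_1\cup\mathcal{M}_2$, matching $\hat s$ and $\hat m$; the notation suggests the machines in $\mathcal{M}_0$ are removed entirely, so I will interpret the identity accordingly.

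For the main inequality, start from Lemma~\ref{lemm:v1}, which gives $\LEPTmu\le s+\sum_{i\in\mathcal{M}}\bigl(1-\frac{\alpha_i}{n_i}\bigr)^{n_i}$. The strategy is to bound the contributions of the machines in $\mathcal{M}_0$ separately from those in $\mathcal{M}_1\cup\mathcal{M}_2$. For $i\in\mathcal{M}_0$ we have $\alpha_i>2\ell$, and combined with $\alpha_i\le\frac{n_i}{n_i-1}\ell$ this forces $n_i\ge 2$ and in fact $n_i$ reasonably large; the quantity $\bigl(1-\frac{\alpha_i}{n_i}\bigr)^{n_i}$ is small (it is a decreasing function of $n_i$ for fixed $\alpha_i$ in the relevant range, and with $\alpha_i>2\ell$ one can bound it by something like $e^{-\alpha_i}$ or absorb it into the $\alpha_i$ term). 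The point of passing from $s$ to $\hat s$ and from $\beta$ to $\hat\beta$ is: on $\mathcal{M}_2$ we "pad" the load up to $\ell$ (increasing $\beta_i$ to $\hat\beta_i=\ell-\alpha_i$, which only increases the numerator cost bound while keeping $\sum\alpha_i$ fixed), and on $\mathcal{M}_0$ we remove the machine, subtracting $\alpha_i$ from the load, which is legitimate because on those machines the term $\bigl(1-\frac{\alpha_i}{n_i}\bigr)^{n_i}$ is dominated by the $\alpha_i$ already counted. The goal is to show
$$
\LEPTmu \le \hat s + \sum_{i\in\mathcal{M}_0}\alpha_i + \sum_{i\in\mathcal{M}_1\cup\mathcal{M}_2}\bigl(1-\tfrac{\alpha_i}{n_i}\bigr)^{n_i}
$$
so that the numerator of the target ratio equals $\hat s$ plus the surviving sum, and then that $\max(s,m+\beta)\ge\max(\hat s,\hat m+\hat\beta)$; dividing yields the claim.

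The denominator comparison is where the case analysis bites. We need $\max(s,m+\beta)\ge\max(\hat s,\hat m+\hat\beta)$. Now $\hat m+\hat\beta = (m-m_0) + \hat\beta$; since $\hat\beta\le\sum_{i\in\mathcal{M}_2}\ell \le m_2\ell$ and each removed machine in $\mathcal{M}_0$ contributes at least $1$ to $m+\beta$ (as $\alpha_i+\beta_i\ge\ell$, but more to the point the original cost on such a machine is at least $1$), one expects $m+\beta\ge\hat m+\hat\beta$, and separately $s\ge\hat s$ since $\hat s=s+\hat\beta-\beta-\sum_{\mathcal{M}_0}\alpha_i$ and the subtracted quantity $\sum_{\mathcal{M}_0}\alpha_i+\beta-\hat\beta$ should be nonnegative (here $\beta\ge\hat\beta$ is not obvious and will need the structure of $\hat\beta_i$ versus $\beta_i$ on each machine class — on $\mathcal{M}_0\cup\mathcal{M}_1$ we have $\hat\beta_i=0\le\beta_i$, but on $\mathcal{M}_2$ we may have $\hat\beta_i=\ell-\alpha_i>\beta_i$, so $\beta\ge\hat\beta$ can fail). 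I expect the main obstacle to be precisely this: controlling the denominator after the modification, since padding $\mathcal{M}_2$ can increase $\hat\beta$ beyond $\beta$. The resolution will be to show that whatever is gained in $\hat\beta$ on $\mathcal{M}_2$ is more than compensated either by the drop from $m$ to $\hat m$ (removing $\mathcal{M}_0$) or by the relation $\alpha_i+\beta_i\ge\ell$ forcing $\beta_i$ large enough on the padded machines when $\alpha_i$ is very small — in other words, a machine can only need a lot of padding if its true load $\alpha_i+\beta_i$ was already near $\ell$, bounding the excess padding globally by something absorbed elsewhere. Carefully tracking these terms per machine class, and using $n_i\ge 2$ on $\mathcal{M}_0$ to bound $\bigl(1-\frac{\alpha_i}{n_i}\bigr)^{n_i}$ against $\alpha_i$, is the technical heart of the argument; the rest is arithmetic.
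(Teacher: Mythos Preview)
Your proposal has the right skeleton (start from Lemma~\ref{lemm:v1}, handle $\mathcal{M}_0$ separately, track numerator and denominator) but contains two genuine errors that would derail the argument.

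First, your claim that on $\mathcal{M}_0$ ``$\alpha_i>2\ell$ combined with $\alpha_i\le\frac{n_i}{n_i-1}\ell$ forces $n_i\ge 2$ and in fact $n_i$ reasonably large'' is backwards. From $2\ell<\alpha_i\le\frac{n_i}{n_i-1}\ell$ we get $\frac{n_i}{n_i-1}>2$, i.e.\ $n_i<2$, so $n_i=1$. This is not a technicality: the paper's proof hinges on the exact identity $\bigl(1-\frac{\alpha_i}{n_i}\bigr)^{n_i}=1-\alpha_i$ for $i\in\mathcal{M}_0$, so that $\alpha_i+(1-\alpha_i)=1$ and the contribution of $\mathcal{M}_0$ to the numerator is exactly $m_0$. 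One then subtracts $m_0$ from both numerator and denominator (using that the ratio is $\ge 1$). Your plan to bound $\bigl(1-\frac{\alpha_i}{n_i}\bigr)^{n_i}$ by $e^{-\alpha_i}$ or absorb it into $\alpha_i$ would not give this clean cancellation. Moreover, $n_{i_0}=1$ together with $\alpha_{i_0}\le n_{i_0}$ gives $\ell<\tfrac12$, which the paper uses to show $\tilde s\le m+\hat\beta$ (and $\hat s\le\hat m+\hat\beta$), so the denominator simplifies to $\hat m+\hat\beta$ before the subtraction.

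Second, your worry that ``on $\mathcal{M}_2$ we may have $\hat\beta_i=\ell-\alpha_i>\beta_i$, so $\beta\ge\hat\beta$ can fail'' is unfounded: by~\eqref{observation} we have $\alpha_i+\beta_i\ge\ell$ on every machine, hence on $\mathcal{M}_2$ in particular $\beta_i\ge\ell-\alpha_i=\hat\beta_i$. So $\beta\ge\hat\beta$ holds machinewise, and $\tilde s\le s$. The paper then uses the exact identity $\max(s,m+\beta)=\max(\tilde s,m+\hat\beta)+(\beta-\hat\beta)$ and subtracts $\beta-\hat\beta\ge 0$ from both numerator and denominator. Your alternative plan of proving $\max(s,m+\beta)\ge\max(\hat s,\hat m+\hat\beta)$ directly, while not wrong in spirit, misses this simple mechanism and leads you into the unnecessary case analysis you anticipate.
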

\begin{proof}
	Observe that $\beta \geq\hat\beta\geq 0$ and hence, $\tilde s \leq s$. By definition it holds $\sum_{i\in\mathcal{M}} \alpha_i = s-\beta =\tilde s-\hat\beta$. Moreover, we have $\ell\leq \alpha_i+\hat \beta_i\leq \frac{n_i}{n_i-1}\ell$ using the identity
	$$
	\alpha_i+\hat\beta_i=\begin{cases}
	\alpha_i  & \forall i\in\mathcal{M}_0\cup\mathcal{M}_1\\
	\ell & \forall i\in\mathcal{M}_2.
	\end{cases}
	$$
	Therefore, using Lemma~\ref{lemm:v1} we obtain
	\begin{alignat*}{2}
	&\frac{\LEPTmu}{\max(s,m+\beta)}&& \leq \frac{ s+\sum_{i\in\mathcal{M}}\  \left( 1- \frac{\alpha_i}{n_i} \right)^{n_i}}{\max( s,m+\beta)}\\
	& && = \frac{\tilde s+\beta-\hat\beta+\sum_{i\in\mathcal{M}}\  \left( 1- \frac{\alpha_i}{n_i} \right)^{n_i}}{\max(\tilde s,m+\hat\beta)+\beta-\hat\beta }\\
	& && \leq \frac{\tilde s+\sum_{i\in\mathcal{M}}\  \left( 1- \frac{\alpha_i}{n_i} \right)^{n_i}}{\max(\tilde s,m+\hat\beta) }.
	\end{alignat*}
	Next, we want to show that we can further reduce to an instance in which we do not have any machine of type $\mathcal{M}_0$. Assume that $ i_0\in\mathcal{M}_0\neq\emptyset$. Since we consider a truncated instance, we know by Lemma~\ref{lemm:bound_ell} that $n_{i_0}=1$ and as mentioned in Theorem~\ref{theo:1+e-1} that we also have $\alpha_{i_0}\leq n_{i_0}$. This yields
	$$
	2\ell< \alpha_{i_0}\leq n_{i_0}=1.
	$$
	Hence, $\ell$ is bounded from above by $\frac{1}{2}$.  Moreover, we know that $\tilde s \leq m+\tilde\beta$, because it holds
	\begin{equation*}
	 \tilde s =\sum_{i\in \mathcal{M}} \alpha_i+\hat \beta_i
	 \leq m_0 +2\ell m_1 + \ell m_2
	 < m_0 + m_1 + \frac{1}{2}m_2+\hat\beta
	\leq m \leq m +\hat \beta.
	\end{equation*}
	Observe that $\hat m>0$ since $\mathcal{M}_1\neq\emptyset$ as we can set $\ell:=\min_{i\in\mathcal{M}} \alpha_i+\beta_i$. Combining all the results above, we obtain
	\begin{alignat*}{2}
	& \frac{\tilde s+\sum_{i\in\mathcal{M}}\  \left( 1- \frac{\alpha_i}{n_i} \right)^{n_i}}{\max(\tilde s,m+\hat\beta) }&&=\frac{\tilde s+\sum_{i\in\mathcal{M}_0}\  \left( 1- \alpha_i \right) + \sum_{i\in\mathcal{M}_1\cup\mathcal{M}_2}\  \left( 1- \frac{\alpha_i}{n_i} \right)^{n_i}}{\max(\tilde s,m+\hat\beta) }\\
	& &&=\frac{\hat s + \sum_{i\in \mathcal{M}_0}\alpha_i + \sum_{i\in\mathcal{M}_0}\  \left( 1- \alpha_i \right) + \sum_{i\in\mathcal{M}_1\cup\mathcal{M}_2}\  \left( 1- \frac{\alpha_i}{n_i} \right)^{n_i}}{m+\hat \beta}\\
	& &&=\frac{\hat s + m_0+ \sum_{i\in\mathcal{M}_1\cup\mathcal{M}_2}\  \left( 1- \frac{\alpha_i}{n_i} \right)^{n_i}}{\hat m+ m_0+\hat \beta}\\
	& &&\leq\frac{\hat s +  \sum_{i\in\mathcal{M}_1\cup\mathcal{M}_2}\  \left( 1- \frac{\alpha_i}{n_i} \right)^{n_i}}{\hat m+ \hat \beta}\\
	& &&=\frac{\hat s +  \sum_{i\in\mathcal{M}_1\cup\mathcal{M}_2}\  \left( 1- \frac{\alpha_i}{n_i} \right)^{n_i}}{\max(\hat s,\hat m+\hat\beta)}
	\end{alignat*}
	where we used in the last step that $\hat s\leq \hat m+\hat\beta$, since we have
	$$
	\hat s = \tilde s - \sum_{i\in \mathcal{M}_0}\alpha_i=\sum_{i\in\mathcal{M}_1\cup\mathcal{M}_2}\alpha_i+\hat \beta_i\leq2\ell m_1 + \ell m_2 < m_1 + \frac{1}{2}m_2\leq \hat m \leq \hat m +\hat \beta.
	$$
\end{proof}
We now simplify the sum we just derived to obtain a more structured upper bound.

\begin{lemma}\label{lemm:v3}
	Let $v_1^*$ and $v_2^*$ denote the optimal value of the following convex optimization problems, respectively,
	{\small
	\begin{subequations}
		\begin{alignat}{4}
		\underset{\vec{\alpha}\geq 0}{\vec{\operatorname{maximize}}} \ &\quad \sum_{i\in\mathcal{M}_1}  \left( 1- (\alpha_i-\ell) \right)^{\frac{\alpha_i}{\alpha_i-\ell}} &&\qquad \quad \qquad\qquad\quad \underset{\vec{\alpha}\geq 0}{\vec{\operatorname{maximize}}} && \quad \sum_{i\in\mathcal{M}_2} e^{-\alpha_i} &&\tag{OP}\label{OP} \\
		s.t.\ &\quad \sum_{i\in\mathcal{M}_1} \alpha_i =\hat s-m_2\ell &&
		\hspace{3.8cm}s.t.\  && \quad \sum_{i\in\mathcal{M}_2} \alpha_i = m_2\ell - \hat\beta &&\nonumber\\
		&\quad \ell\leq \alpha_i\leq \ell+1 &&\forall i\in\mathcal{M}_1.  &&  \quad 0\leq \alpha_i \leq \ell &&\forall i\in\mathcal{M}_2.  \nonumber
		\end{alignat}
	\end{subequations}
	}
	Then, we have
	$$
	\sum_{i\in\mathcal{M}_1\cup\mathcal{M}_2}\  \left( 1- \frac{\alpha_i}{n_i} \right)^{n_i}\leq v_1^*+v_2^*.
	$$
\end{lemma}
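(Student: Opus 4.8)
The plan is to bound, for each machine $i\in\mathcal{M}_1\cup\mathcal{M}_2$ separately, the summand $\left(1-\frac{\alpha_i}{n_i}\right)^{n_i}$ by a quantity that depends only on $\alpha_i$ (and $\ell$), chosen so that the vectors $(\alpha_i)_{i\in\mathcal{M}_1}$ and $(\alpha_i)_{i\in\mathcal{M}_2}$ become \emph{feasible} points of the two problems in~\eqref{OP}; the claimed inequality then follows by optimality of $v_1^*$ and $v_2^*$. Throughout I use the facts carried over from Lemma~\ref{lemm:v2}: one has $\ell\leq\alpha_i+\hat\beta_i\leq\frac{n_i}{n_i-1}\ell$ for every $i$, the identity $\sum_{i\in\mathcal{M}}\alpha_i=\hat s-\hat\beta$ with $\hat\beta=\sum_{i\in\mathcal{M}_2}(\ell-\alpha_i)$, and (since each truncated job has expected size at most $1$) $\alpha_i\leq n_i$. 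A one-line computation from these identities gives $\sum_{i\in\mathcal{M}_1}\alpha_i=\hat s-m_2\ell$ and $\sum_{i\in\mathcal{M}_2}\alpha_i=m_2\ell-\hat\beta$, which are precisely the equality constraints of the first and second problem in~\eqref{OP}.

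For $i\in\mathcal{M}_1$ we have $\hat\beta_i=0$, hence $\ell\leq\alpha_i\leq\frac{n_i}{n_i-1}\ell$; rearranging yields $n_i(\alpha_i-\ell)\leq\alpha_i$, i.e. $n_i\leq\frac{\alpha_i}{\alpha_i-\ell}$ whenever $\alpha_i>\ell$, and combined with $\alpha_i\leq n_i$ this forces $\alpha_i-\ell\leq 1$, so $\alpha_i\in[\ell,\ell+1]$ and the box constraint of the first problem is respected. To bound the summand I invoke the monotonicity of $n\mapsto(1-c/n)^n$ on $n\geq c$ (already used in Theorem~\ref{theo:1+e-1}): enlarging $n_i$ up to $\frac{\alpha_i}{\alpha_i-\ell}$ gives $\left(1-\frac{\alpha_i}{n_i}\right)^{n_i}\leq\bigl(1-(\alpha_i-\ell)\bigr)^{\alpha_i/(\alpha_i-\ell)}=h(\alpha_i-\ell)$, with $h$ the convex function of Lemma~\ref{lemma:convex}; the degenerate case $\alpha_i=\ell$ (in particular $n_i=1$) is covered by $\left(1-\frac{\alpha_i}{n_i}\right)^{n_i}\leq e^{-\alpha_i}=e^{-\ell}=h(0)$, the continuous extension. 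Summing over $\mathcal{M}_1$ and using feasibility gives $\sum_{i\in\mathcal{M}_1}\left(1-\frac{\alpha_i}{n_i}\right)^{n_i}\leq v_1^*$.

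For $i\in\mathcal{M}_2$ one simply uses the universal estimate $\left(1-\frac{\alpha_i}{n_i}\right)^{n_i}\leq e^{-\alpha_i}$ together with $0\leq\alpha_i<\ell$; the vector $(\alpha_i)_{i\in\mathcal{M}_2}$ then satisfies both the box and the equality constraints of the second problem, so $\sum_{i\in\mathcal{M}_2}\left(1-\frac{\alpha_i}{n_i}\right)^{n_i}\leq\sum_{i\in\mathcal{M}_2}e^{-\alpha_i}\leq v_2^*$. Adding the two bounds proves the lemma. (The ``convexity'' label attached to the two problems is justified by Lemma~\ref{lemma:convex} for the first objective and by convexity of $x\mapsto e^{-x}$ for the second; it is not needed for the present inequality, only for the subsequent evaluation of $v_1^*+v_2^*$.)

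The one genuinely delicate point is the monotonicity replacement on $\mathcal{M}_1$: one must verify that after enlarging $n_i$ the base $1-(\alpha_i-\ell)$ remains nonnegative — this is exactly where $\alpha_i\leq n_i$ is used — and that the boundary/degenerate instances ($\alpha_i=\ell$, $n_i=1$) agree with the continuous extension $h(0)=e^{-\ell}$; everything else is bookkeeping with the identities from Lemma~\ref{lemm:v2}.
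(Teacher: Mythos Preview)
Your proof is correct and follows essentially the same route as the paper: derive the two equality constraints from the identities of Lemma~\ref{lemm:v2}, verify the box constraints $\ell\leq\alpha_i\leq\ell+1$ on $\mathcal{M}_1$ via $\alpha_i\leq n_i\leq\frac{\alpha_i}{\alpha_i-\ell}$, replace $n_i$ by $\frac{\alpha_i}{\alpha_i-\ell}$ on $\mathcal{M}_1$ using monotonicity (as in Theorem~\ref{theo:1+e-1}) and use the crude bound $(1-\alpha_i/n_i)^{n_i}\leq e^{-\alpha_i}$ on $\mathcal{M}_2$, then invoke optimality of $v_1^*,v_2^*$. The only quibble is the parenthetical ``in particular $n_i=1$'' in the degenerate case $\alpha_i=\ell$: this need not hold, but it is irrelevant since $(1-\ell/n_i)^{n_i}\leq e^{-\ell}=h(0)$ for every $n_i\geq 1$.
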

\begin{proof}
	By Lemma~\ref{lemm:v2} we can rewrite
	$$
	\sum_{i\in\mathcal{M}} \hat\beta_i = \hat\beta\quad \text{and} \quad \sum_{i\in\mathcal{M}} \alpha_i =  \hat s-\hat\beta
	$$
	as
	$$
	\sum_{i\in\mathcal{M}_2}\alpha_i=m_2\ell-\hat\beta\quad \text{and} \quad \sum_{i\in\mathcal{M}_1}\alpha_i=\hat s-m_2\ell.
	$$
	We immediately obtain the first constraints of both optimization problems~\eqref{OP}.
	It remains to show the constraint $\alpha_i\leq \ell+1$ for all $i\in\mathcal{M}_1$, as the others follow immediately from Lemma~\ref{lemm:v2}. Using the same ideas as in the proof of Theorem~\ref{theo:1+e-1} we have
	$$
	\alpha_i\leq n_i\leq \frac{\alpha_i}{\alpha_i-\ell}
	$$
	implying our desired constraint. The above inequality also yields the upper bound on the summands over $\mathcal{M}_1$ as in the same Theorem. We can bound the summands of the other sum more coarsely from above by the exponential function. Obviously, the optimal values of \eqref{OP} with variables $\vec{\alpha}$ can only increase the sum.
\end{proof}

Finally, we are ready to state the approximation ratio of $\LEPTmu$.
\begin{theorem}\label{GeneralApproximationTheorem}
	$\LEPTmu$ is a $(1+e^{-1})$-approximation algorithm.
\end{theorem}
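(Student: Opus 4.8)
The plan is to chain the lemmas of this section. By Proposition~\ref{prop:order} and Lemma~\ref{lemm:LB} we have $OPT_{\mathcal{P}}\ge\E[OPT(\vec P)]\ge\max(s,m+\beta)$, so it suffices to bound the ratio $\LEPTmu/\max(s,m+\beta)$ by $1+e^{-1}$. Lemma~\ref{lemm:v2} bounds this ratio by $\bigl(\hat s+\sum_{i\in\mathcal{M}_1\cup\mathcal{M}_2}(1-\alpha_i/n_i)^{n_i}\bigr)/\max(\hat s,\hat m+\hat\beta)$, and Lemma~\ref{lemm:v3} replaces the sum in the numerator by $v_1^*+v_2^*$, the optimal values of the two convex programs~\eqref{OP}. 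Hence everything reduces to estimating $v_1^*$ and $v_2^*$ explicitly and then carrying out a final optimization over the remaining free parameters $\ell,\hat s,\hat\beta,m_1,m_2$ subject to the structural constraints recorded in Lemmas~\ref{lemm:v2}--\ref{lemm:v3}.

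For $v_1^*$ I would substitute $y_i:=\alpha_i-\ell$, which by the constraint $\ell\le\alpha_i\le\ell+1$ lies in $[0,1]$; the objective becomes $\sum_{i\in\mathcal{M}_1}h(y_i)$ with $h$ the convex function of Lemma~\ref{lemma:convex}, and the equality constraint becomes $\sum_i y_i=\hat s-\hat m\ell$ (using $\hat m=m_1+m_2$ since $\mathcal{M}_0=\emptyset$). Lemma~\ref{lemm:boundCOP}, together with $h(1)=0$ and $h(0)=e^{-\ell}$, then gives $v_1^*\le(m_1-\hat s+\hat m\ell)\,e^{-\ell}$. For $v_2^*$ I would use the pointwise secant bound $e^{-\alpha}\le 1+\tfrac{\alpha}{\ell}(e^{-\ell}-1)$ on $[0,\ell]$ for the convex function $\alpha\mapsto e^{-\alpha}$; since the resulting bound is linear in $\sum_{i}\alpha_i=m_2\ell-\hat\beta$, we obtain $v_2^*\le m_2 e^{-\ell}+\tfrac{\hat\beta}{\ell}(1-e^{-\ell})$.

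Plugging these in, the ratio is at most $\bigl(\hat s+(m_1-\hat s+\hat m\ell)e^{-\ell}+m_2 e^{-\ell}+\tfrac{\hat\beta}{\ell}(1-e^{-\ell})\bigr)\big/\max(\hat s,\hat m+\hat\beta)$, and the task is to show this never exceeds $1+e^{-1}$. The main obstacle is exactly this last multivariate estimate: one has to exploit the constraints $\hat m\ell\le\hat s\le\hat m+\hat\beta$, $0\le\hat\beta\le m_2\ell$, $m_1\ge 1$ and $0<\ell$ to eliminate variables, checking that the partial derivatives in the ``easy'' directions have a fixed sign, until the problem collapses to a one‑dimensional inequality in $\ell$ (equivalently in $\hat s/\hat m$) that can be verified directly, just as $\rho+e^{-\rho}\le 1+e^{-1}$ was used at the end of Theorem~\ref{theo:1+e-1}. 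I expect the worst case to sit on the boundary with $\hat\beta=0$, $m_2=0$ and all $y_i=0$, so that $\hat s=\hat m\ell$ and the ratio becomes $(\ell+e^{-\ell})/\max(\ell,1)$, maximized at $\ell=1$; this recovers the tight short‑jobs configuration at $\rho=1$ and shows the bound $1+e^{-1}$ is best possible.
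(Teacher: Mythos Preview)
Your chain of lemmas and the bounds on $v_1^*$ and $v_2^*$ are correct and coincide exactly with what the paper does; after simplification your ratio
\[
\frac{\hat s+(\hat m-\hat s+\hat m\ell)e^{-\ell}+\tfrac{\hat\beta}{\ell}(1-e^{-\ell})}{\max(\hat s,\hat m+\hat\beta)}
\]
is the same expression the paper reaches. The gap is entirely in the ``last multivariate estimate'', which is less benign than you suggest.

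First, the constraint $\hat s\le\hat m+\hat\beta$ you plan to exploit is not generally available: Lemma~\ref{lemm:v2} keeps the $\max$ in the denominator precisely because either term can dominate (take e.g.\ two machines, four deterministic jobs of length $2$: then $\hat s=8>\hat m+\hat\beta=6$). The paper circumvents this by splitting the fraction, bounding the $\hat s(1-e^{-\ell})$ part by $1-e^{-\ell}$ via $\hat s\le\max(\hat s,\cdot)$, and only then replacing the denominator of the remainder by $\hat m+\hat\beta$.

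Second, your expectation that the worst case sits at $\hat\beta=0$ is wrong for large $\ell$, so the partial-derivative check you propose does \emph{not} give a fixed sign. After the split above, the paper sets $u:=\hat\beta/\hat m$ and obtains a function $f_\ell(u)$ whose derivative has the sign of $e^{\ell}-\ell^2-\ell-1$. This changes sign at a root $r\approx 1.79$, forcing a genuine case distinction: for $\ell\le r$ one has $f_\ell$ nonincreasing and $f_\ell(u)\le f_\ell(0)=1+\ell e^{-\ell}\le 1+e^{-1}$ (your anticipated boundary); but for $\ell\ge r$ one has $f_\ell$ nondecreasing, and the bound comes from $\lim_{u\to\infty}f_\ell(u)=1+\tfrac{1}{\ell}-e^{-\ell}-\tfrac{e^{-\ell}}{\ell}$, which is maximized at $\ell=r$ and, using the defining relation $e^{r}=r^{2}+r+1$, evaluates to $1+re^{-r}\le 1+e^{-1}$. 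Without this second branch the argument is incomplete.
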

\begin{proof}
	Combining the Lemmas~\ref{lemm:LB} to \ref{lemm:v3}, we obtain
	$$
	\frac{\LEPTmu}{OPT_{\mathcal{P}}}\leq \frac{\hat s+v_1^*+v_2^*}{\max(\hat s, \hat m+\hat \beta)},
	$$
	where we keep the notation of Lemma~\ref{lemm:v3}. Applying Lemma~\ref{lemm:boundCOP} to both sums by scaling and translating $\vec{\alpha}$, such that the variables are in the unit interval, yields
	\begin{alignat*}{2}
	&\frac{\hat s+v_1^*+v_2^*}{\max(\hat s,\hat m+\hat\beta)} && \leq \frac{\hat s + \left[(\hat s-\hat m\ell)\cdot0 + (\hat m-k-(\hat s-\hat m\ell))e^{-\ell}\right] + \left[\left(k-\frac{\hat\beta}{\ell}\right)e^{-\ell}+\frac{\hat\beta}{\ell}\cdot1\right]}{\max(\hat s,\hat m+\hat\beta)}\\
	& && = \frac{\hat s\left(1-e^{-\ell}\right)}{\max(\hat s,\hat m+\hat\beta)} + \frac{\left(\hat m+\hat m\ell-\frac{\hat\beta}{\ell}\right)e^{-\ell} +\frac{\hat\beta}{\ell}}{\max(\hat s,\hat m+\hat\beta)}\\
	& && \leq 1-e^{-\ell} + \frac{\left(\hat m+\hat m\ell-\frac{\hat\beta}{\ell}\right)e^{-\ell} +\frac{\hat\beta}{\ell}}{\hat m+\hat\beta}.
	\end{alignat*}
	Using the variable transformation $u=\frac{\hat\beta}{\hat m}\geq0$ for the fractional term we obtain
	\begin{equation*}
	 1-e^{-\ell} + \frac{\left(1+\ell-\frac{u}{\ell}\right)e^{-\ell} +\frac{u}{\ell}}{1+u} =1 + e^{-\ell}\cdot\frac{\frac{u}{\ell}e^{\ell}-u -\frac{u}{\ell}+\ell}{1+u},
	\end{equation*}
	where we define the right hand side as $f_\ell:\mathbb{R}_{\geq 0}\to\mathbb{R}$ for $\ell>0$. Straightforward calculation shows that its derivative is
	$$
	\frac{d}{du}f_\ell(u)=\underbrace{\frac{e^{-\ell}}{\ell(1+u)^2}}_{>0}\cdot\left(e^\ell -\ell^2-\ell-1\right).
	$$
	Hence,
	$$
	f_\ell\text{ is}\begin{cases}
	\text{nonincreasing, }  & \text{if }\ell\in(0,r]\\
	\text{nondecreasing, } & \text{if }\ell\in[r,\infty),
	\end{cases}
	$$
	where $r$ denotes the (strictly positive) root of the term in the brackets, i.e., $e^{r} -r^{2}-r-1=0$.
	As a consequence, for $\ell\in(0,r]$ we obtain for all $u\geq0$
	$$
	f_\ell(u)\leq f_\ell(0)=1 + \ell e^{-\ell}\leq 1+ e^{-1},
	$$
	where the last inequality holds as the map $\ell\mapsto 1+\ell e^{-\ell}$ attains its maximum at $\ell=1$ using basic calculus. On the other hand, for $\ell\in[r,\infty)$ we have for all $u\geq 0$
	$$
	f_{\ell}(u)\leq\lim_{u'\to \infty} f_{\ell}(u')=1+\frac{1}{\ell}-e^{-\ell}-\frac{e^{-\ell}}{\ell}\leq 1+\frac{1}{r}-e^{-r}-\frac{e^{-r}}{r}.
	$$
	The last inequality results uses the fact that the map $\ell\mapsto 1+\frac{1}{\ell}-e^{-\ell}-\frac{e^{-\ell}}{\ell}$ attains its maximum at $r$. We can further simplify this to
	\begin{equation*}
	1+\frac{1}{r}-e^{-r}-\frac{e^{-r}}{r}+0= 1+\frac{1}{r}-e^{-r}-\frac{e^{-r}}{r}-\frac{1}{r}+re^{-r}+e^{-r}+\frac{e^{-r}}{r}= 1+re^{-r}\leq 1 + e^{-1},
	\end{equation*}
	where the first equality follows by the definition of $r$ and the last inequality using the same argument as in the other case. This concludes the proof of the theorem.
\end{proof}

\section{The Price of Fixed Assignments}\label{sec:pofa}

In this section, we are going to show that the price of fixed assignments is equal to $1+e^{-1}$. 

\begin{theorem}\label{theorem:PoFA}
 The price of fixed assignments for SEBP is equal to $(1+e^{-1})$:
 \begin{equation*}
  \qquad\qquad\qquad\PoFA = 1+e^{-1}.
 \end{equation*}
\end{theorem}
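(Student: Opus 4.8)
The plan is to prove the two inequalities $\PoFA \le 1+e^{-1}$ and $\PoFA \ge 1+e^{-1}$ separately. The upper bound is immediate from what has already been established: since $\LEPTmu$ belongs to the class $\mathcal{F}$ of fixed assignment policies, Theorem~\ref{GeneralApproximationTheorem} gives $OPT_{\mathcal{F}}(I) \le \LEPTmu(I) \le (1+e^{-1})\,OPT_{\mathcal{P}}(I)$ for every instance $I$, hence $\PoFA \le 1+e^{-1}$. For the matching lower bound, I would exhibit, for integers $m,k$, the instance $I_{m,k}$ consisting of $n=mk$ i.i.d.\ short jobs with $P_j\sim\operatorname{Bernoulli}(1/k)$ on $m$ machines (so that $\rho=1$), and let $m$ and $k$ tend to infinity.

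The first step is to compute $OPT_{\mathcal{F}}(I_{m,k})$ exactly. For a fixed assignment placing $n_i$ jobs on machine $i$ (with $\sum_i n_i=mk$), the load of machine $i$ is $\operatorname{Binomial}(n_i,1/k)$, a nonnegative integer, so by Lemma~\ref{lemm:twopoints} (which is tight for Bernoulli jobs) its expected cost equals $\frac{n_i}{k}+(1-1/k)^{n_i}$; this formula also covers $n_i=0$. Summing over $i$ gives total cost $m+\sum_{i=1}^m(1-1/k)^{n_i}$. Since $x\mapsto(1-1/k)^x$ is convex, $\sum_i(1-1/k)^{n_i}$ is Schur-convex in $(n_i)$ and is therefore minimized by the balanced choice $n_i=k$; thus $OPT_{\mathcal{F}}(I_{m,k})=m\bigl(1+(1-1/k)^k\bigr)$.

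The second step is an upper bound on $OPT_{\mathcal{P}}(I_{m,k})$, obtained by analysing the list scheduling policy $\LEPTLS$ (the job order is irrelevant here, all jobs being exchangeable). Writing $N_1(\vec p)$ for the number of jobs attaining the value $1$ in a realization $\vec p$, I would argue that $\LEPTLS$ incurs cost exactly $\max(N_1(\vec p),m)=OPT(\vec p)$ in every realization: the $0$-jobs are processed instantaneously, and list scheduling hands out the $1$-jobs one per machine until every machine carries a $1$-job, stacking additional $1$-jobs only afterwards; consequently, when $N_1\ge m$ every machine has load $\ge 1$ and the total load is $N_1$, while when $N_1<m$ exactly $N_1$ machines have load $1$ and the remaining $m-N_1$ have load $0$. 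Hence $OPT_{\mathcal{P}}(I_{m,k})\le\E[\max(N_1,m)]$ with $N_1\sim\operatorname{Binomial}(mk,1/k)$ (in fact equality holds, by Proposition~\ref{prop:order}).

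Finally I would pass to the limit. As $k\to\infty$ with $m$ fixed, $(1-1/k)^k\to e^{-1}$ and $N_1\xrightarrow{d}Z\sim\operatorname{Poisson}(m)$; since $\Var(N_1)=m(1-1/k)\le m$ is bounded, the family $\{N_1\}$ is uniformly integrable, so $\E[\max(N_1,m)]\to\E[\max(Z,m)]=m\bigl(1+\tfrac{e^{-m}m^m}{m!}\bigr)$ by Lemma~\ref{lemm:poisson-max}. Therefore
\[
\PoFA\ \ge\ \sup_{m,k}\frac{OPT_{\mathcal{F}}(I_{m,k})}{OPT_{\mathcal{P}}(I_{m,k})}\ \ge\ \sup_{m}\ \lim_{k\to\infty}\frac{m\bigl(1+(1-1/k)^k\bigr)}{\E[\max(N_1,m)]}\ =\ \sup_m\frac{1+e^{-1}}{1+e^{-m}m^m/m!}\ =\ 1+e^{-1},
\]
where the last equality uses Stirling's formula $\tfrac{e^{-m}m^m}{m!}\sim\tfrac{1}{\sqrt{2\pi m}}\to 0$. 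Combined with the upper bound, this yields $\PoFA=1+e^{-1}$. The step I expect to require the most care is the verification that list scheduling attains cost $\max(N_1,m)$ in each realization — i.e.\ pinning down the behaviour of $\LEPTLS$ on $\{0,1\}$-valued jobs and checking that it is a legitimate non-anticipatory, non-idling policy in the sense of the model; the remaining ingredients are the convexity (Schur-convexity) argument and a routine interchange of limit and expectation.
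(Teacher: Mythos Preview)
Your proof is correct and follows essentially the same approach as the paper: the upper bound via Theorem~\ref{GeneralApproximationTheorem}, and the lower bound via the instances $I_{m,k}$ with $mk$ i.i.d.\ $\operatorname{Bernoulli}(1/k)$ jobs, computing $OPT_{\mathcal F}$ from the balanced assignment and bounding $OPT_{\mathcal P}$ by the list-scheduling policy, then letting $k\to\infty$ (binomial~$\to$~Poisson, Lemma~\ref{lemm:poisson-max}) and $m\to\infty$. You are in fact more careful than the paper on two points it glosses over --- the Schur-convexity justification that the balanced assignment is optimal in $\mathcal F$, and the uniform-integrability step for the limit $\E[\max(N_1,m)]\to\E[\max(Z,m)]$.
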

\begin{proof}
Let $I=(\vec{P},m)$ denote an instance of SEBP.
Proposition~\ref{prop:order} and Theorem~\ref{GeneralApproximationTheorem} yields
 \begin{equation*}
 \frac{OPT_{\mathcal{F}}(I)}{OPT_{\mathcal{P}}(I)} \leq \frac{\LEPTmu(I)}{\E[OPT(\vec{P})]}
 \leq 1+e^{-1}.
 \end{equation*}
Therefore, it remains to show that for all $\epsilon>0$ there exists an instance $I$ in which we have
\begin{equation*}
\frac{OPT_{\mathcal{F}}(I)}{OPT_{\mathcal{P}}(I)} \geq 1+e^{-1} - \epsilon. 
\end{equation*}
For this purpose, we consider an instance $I=(\vec{P},m)$ in which we have $n=km$ jobs for some $k\in\mathbb{N}$,
where $P_j\sim \operatorname{Bernoulli}(\frac{1}{k})$ for all $j\in\mathcal{J}$.
An optimal fixed assignment policy assigns each machine the same number of jobs, in this case $k$.
The cost on one machine is hence the expected value of $\max\left(Z,1\right)$,
where $Z:=\sum_{j=1}^{k} P_{j}\sim\operatorname{Binomial}(k,\frac{1}{k})$. So,
\begin{align*}
 OPT_{\mathcal{F}}(I) = m \cdot \E[\max(Z,1)] &=m\cdot \Big( \E\big[Z| Z\geq 1\big]\ \mathbb{P}[Z\geq 1] + \E\big[1| Z< 1\big]\ \mathbb{P}[Z<1] \Big)\\ 
 & =m\cdot \big( \E[Z] + \mathbb{P}[Z=0]\big)
  = m\cdot \left( 1+ \left(1-\frac{1}{k}\right)^k \right),
\end{align*}
which converges to $m(1+e^{-1})$ as $k\to\infty$.
On the other hand, an optimal policy in $\mathcal{P}$ lets a job run whenever a machine becomes idle. 
The cost of an optimal policy is hence $m$ whenever less than $m$ jobs have duration $1$,
and is equal to $\sum_{j=1}^{km} p_j$ otherwise. This shows that
$OPT_{\mathcal{P}}(I) = \E[\max(U,m)]$, where $U:= \sum_{j=1}^{km} P_j \sim \operatorname{Binomial}\left(km,\frac{1}{k}\right)$.
Now, we can argue as in Theorem~\ref{theo:1+e-1} that $U$ converges in distribution to $Y\sim\operatorname{Poisson}(m)$ as $k\to\infty$.
So, by Lemma~\ref{lemm:poisson-max}, we have
\begin{equation*}
OPT_{\mathcal{P}}(I) \to m \cdot \big(1 + \frac{e^{-m} m^m}{m!}\big)\quad \text{as}\quad k \to \infty.
\end{equation*}

Finally, we have shown that the ratio of $OPT_{\mathcal{F}}(I)$ to 
$OPT_{\mathcal{P}}(I)$ can be made arbitrarily close to $(1+e^{-1})\cdot \big(1 + \frac{e^{-m} m^m}{m!}\big)^{-1}$
by choosing $k$ large enough. We conclude by observing that $\lim_{m\to \infty} \frac{m^me^{-m}}{m!}=0$,
so this ratio can be arbitrarily close to $1+e^{-1}$.
\end{proof}
This proves that our analysis of $LEPT_\mathcal{F}$ is tight. It even shows that $LEPT_\mathcal{F}$ is
the best fixed assignment policy in the following sense: Since there exists instances for which the
ratio of an optimal fixed assignment policy to an optimal non-anticipatory policy is 
arbitrarily close to $1+e^{-1}$ and the fact that $LEPT_\mathcal{F}$ is a $1+e^{-1}$-approximation,
we cannot hope to find a policy $\Pi\in\mathcal{F}$ with a
better approximation guarantee in the class $\mathcal{P}$.

\section{Performance of LEPT in the class of fixed assignment policies}\label{sec:ratioF}

It would also be interesting to characterize the approximation guarantee of $\LEPTmu$ in the class of fixed assignment policies.
The next proposition gives a lower bound:
\begin{proposition}
For all $\epsilon>0$,
there exists an instance $I$ of SEBP such that $\frac{\LEPTmu}{OPT_{\mathcal{F}}(I)}=\frac{4-\epsilon}{3}$.
\end{proposition}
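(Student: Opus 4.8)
The plan is to exhibit a one-parameter family of instances on which $\LEPTmu$ is outperformed by some other fixed assignment, with the ratio increasing up to $4/3$, and then to invoke continuity of the ratio in the parameter to realize the prescribed value $(4-\epsilon)/3$.

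The idea is to force $\LEPTmu$ — which produces the LPT assignment for the mean processing times and hence balances the expected loads — into an assignment that is not optimal for the stochastic objective. Since the cost of a machine is its expected load plus its waste $\E[(1-X_i)^+]$, and the latter never exceeds $1$, minimizing the total waste does not reward balancing but rather a clever segregation of jobs by their variability: a ``rare-but-total'' shortfall on a few machines is cheaper than a ``frequent-but-partial'' shortfall on many. So I would mix a batch of small, near-deterministic jobs (slightly the larger in expectation, so that $\LEPTmu$ places them first, one per machine) with a batch of highly dispersed two-point jobs, which $\LEPTmu$ then distributes uniformly on top; on the $\LEPTmu$ schedule every machine then has a non-negligible probability of ending below $1$. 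A competing fixed assignment instead groups the small jobs so as to cover some machines with probability $1$ (no waste), and piles the dispersed jobs together on the remaining ones, where the low-load event now requires all of them to vanish simultaneously and is therefore rare; because the per-machine waste is capped at $1$, the total waste of this segregated assignment is strictly smaller, and by tuning the dispersion, the number of dispersed jobs per machine and the means, the quotient of the two total costs can be driven to $4/3$.

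Operationally: (i) fix the parameters with all relevant expectations distinct so that the sorting inside $\LEPTmu$ is forced; (ii) trace the greedy rule to identify $J_1,\dots,J_m$ and write $\LEPTmu(I)=\sum_j\E[P_j]+(\text{total waste in the }\LEPTmu\text{ schedule})$ in closed form, the per-machine waste being an explicit finite sum since every summand is two-point (cf.\ Lemma~\ref{lemm:twopoints}); (iii) upper-bound $OPT_{\mathcal{F}}(I)$ by the cost of the segregated assignment described above, again computed machine by machine as expected load plus explicit waste, using the Schur-convexity estimate from the proof of Theorem~\ref{theo:1+e-1} to control the residual waste in the limit; (iv) divide the two bounds, let the parameters grow, and apply the intermediate value theorem to the remaining continuous parameter.

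The main obstacle is the quantitative design in steps (i)--(iii). $\LEPTmu$ is in fact quite resilient — because an almost-empty machine still costs a full unit, ``concentrating'' jobs to suppress variance-induced waste only rarely pays off, so on most instances $\LEPTmu$ already realizes an optimal fixed assignment — so the instance must simultaneously exploit (a) the uneven grouping produced by an LPT-type rule and (b) a variability profile that hurts precisely the machines $\LEPTmu$ loads up while sparing those of the competitor; keeping $\sum_j\E[P_j]$ small (it appears in both numerator and denominator) and making the construction insensitive to $\LEPTmu$'s tie-breaking are further constraints, and establishing that the segregated assignment really beats $\LEPTmu$ by the claimed margin requires a sharp enough upper bound on $OPT_{\mathcal{F}}$, which is where the two-point structure and the Schur-convexity machinery of Section~\ref{sec:1+e-1} come in.
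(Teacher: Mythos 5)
There is a genuine gap: what you have written is a strategy outline, not a proof. The entire content of the proposition is the existence of a concrete instance with ratio exactly $\frac{4-\epsilon}{3}$, and your steps (i)--(iii) — choosing the parameters, tracing the greedy rule, and verifying that the segregated assignment beats $\LEPTmu$ by the claimed margin — are precisely the steps you defer and then explicitly flag as ``the main obstacle.'' Nothing in the proposal establishes that your family of instances (near-deterministic small jobs plus dispersed two-point jobs spread uniformly by $\LEPTmu$) actually pushes the ratio up to $4/3$ rather than stalling at some smaller constant; without that computation the intermediate-value argument has nothing to interpolate towards. So as it stands the claim is asserted, not proved.

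For comparison, the paper's construction is far smaller and exploits a slightly different mechanism than the one you describe. Take $m=2$, $n=3$, with $P_1=P_2=1$ deterministic and $P_3=\frac{1}{\epsilon}X$ for $X\sim\operatorname{Bernoulli}(\epsilon)$, so that all three means equal $1$. Since $\LEPTmu$ sees only the means, an adversarial tie-breaking puts the two deterministic jobs on machine $1$ and the spiky job alone on machine $2$, giving cost $2+\bigl((1-\epsilon)\cdot 1+\epsilon\cdot\tfrac{1}{\epsilon}\bigr)=4-\epsilon$; the competing fixed assignment pairs one deterministic job with the spiky job, so \emph{every} machine has load at least $1$ almost surely, the waste is exactly zero, the cost is $1+\bigl((1-\epsilon)+\epsilon(1+\tfrac{1}{\epsilon})\bigr)=3=m\max(\rho,1)$, and optimality in $\mathcal{F}$ follows from Proposition~\ref{prop:order}. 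Note that the optimal policy here does not ``pile the dispersed jobs together so that the all-vanish event is rare''; it eliminates waste entirely by covering each machine with a deterministic unit job, while $\LEPTmu$ is hurt because its tie-breaking among equal-mean jobs strands a machine whose only job is almost surely absent. No limiting argument or intermediate value theorem is needed: the parameter $\epsilon$ in the distribution of $P_3$ directly produces the ratio $\frac{4-\epsilon}{3}$.
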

\begin{proof}
 We construct an instance with $m=2$ machines and $n=3$ jobs. The first two jobs are deterministic and have duration $P_1=P_2=1$.
 The distribution of the third job is $P_3=\frac{1}{\epsilon} X$, where $X\sim\operatorname{Bernoulli}(\epsilon)$, so $\E[P_3]=1$.
 We assume that the $\LEPTmu$ policy assigns both deterministic jobs to the first machine and the stochastic job
 to the other machine, which gives $\LEPTmu=2+(1-\epsilon)+\frac{\epsilon}{\epsilon}=4-\epsilon$.
 In contrast, for any policy $\Pi^*$ which assigns the two deterministic jobs on different machines,
 we have $\Pi^*=1+(1-\epsilon) + (1+\frac{1}{\epsilon})\epsilon = 3$.
 The policy $\Pi^*$ reaches the lower bound $m\max(\rho,1)$ of Proposition~\ref{prop:order}, 
 hence it is optimal.
\end{proof}

This shows that the best approximation ratio
for $\LEPTmu$ in the class of fixed assignment policies lies between $\frac{4}{3}\approx 1.333$ and $1+e^{-1}\approx 1.368$.

\section{Restriction to a family of processing time distributions}\label{sec:stochastic_dominance}

The results of this section heavily rely on the notion
of second-order stochastic dominance, which was
introduced in the late 60's
to model the preferences of decision-makers 
regarding different gambles. We first give a short introduction with the necessary background on this topic.

\begin{definition}
	Let $Y$ and $Z$ be random variables with finite expectation. We say that $Y$ has second-order stochastic dominance over $Z$, and we write $Y\succeq_{(2)}Z$ if and only if
	$$
\int_{-\infty}^x (F_Z(t)-F_Y(t))\ dt \geq 0,\quad \forall x\in\R,
$$
with $F_Y$ and $F_Z$ the cumulative distribution functions of $Y$ and $Z$, respectively.
\end{definition}

Using the well-known fact that $\E[Y]-\E[Z] = \int_{-\infty}^{\infty} \big(F_Z(t)-F_Y(t)\big) dt$, it is easy to see
that a simple
sufficient condition for $Y\succeq_{(2)} Z$ is
that $\E[Y]\geq\E[Z]$ and that $F_Y$ and $F_Z$ are \emph{single-crossing}, i.e., 
for some $x_0$ we have $F_Y(t)-F_Z(t)\leq 0$
on the interval $(-\infty,x_0)$ and 
$F_Y(t)-F_Z(t)\geq 0$ over $[x_0,\infty)$:
\begin{equation}\label{singlecrossing}
\Big(\E[Y]\geq \E[Z] \ \text{ and }\ \exists x_0\in\R: \forall t\in\R,\big(F_Y(t)-F_Z(t)\big) (t-x_0)\geq 0 \Big)
\implies Y\succeq_{(2)} Z.
\end{equation}

In this work, we mostly use the
$\preceq_{(2)}$-ordering to compare random variables with the same mean. In this case, we shall see that $\preceq_{(2)}$ is linked to another dominance relation
relying on the concept of Lorenz curves.
\begin{definition}
	Let $Y$ be a random variable with finite expectation. The Lorenz curve of $Y$ is a function $L_Y:[0,1]\to [0,1]$ with 
	$$
	L_Y(p):= \frac{1}{\E[Y]} \int_0^p F_Y^{-1}(z) dz,
	$$
	where $F_Y^{-1}(z) = \inf \{y: F_Y(y)\geq z\}$ is the quantile function of $Y$.
\end{definition}
The Lorenz curve $L_Y$ of a random variable $Y$
is convex and nondecreasing, and it satisfies $L_Y(0)=0$, $L_Y(1)=1$. It was introduced by Lorenz~\cite{Lor1905}
to compare the distribution of income across
different countries: for a population with continuous distribution of income $F_Y$, $L_Y(p)$
represents the percentage of the total wealth owned by the bottom $100p$\% of all individuals. 
The situation where all individuals own the same wealth corresponds
to a deterministic variable ($Y=a$ for some $a\geq 0$) with Lorenz curve $L_Y(p)=p$, called the \emph{line of perfect equality}.
Based on the Lorenz curve we can define another dominance relation.
\begin{definition}
	Let $Y$ and $Z$ be nonnegative random variables with finite expectation. 
	We say that $Y$ Lorenz dominates $Z$,
	and we write $Y\succeq_L Z$, if and only if for all $p\in[0,1]$,
	$$
	L_Y(p)\leq L_Z(p).
	$$
\end{definition}
\begin{figure}[t]
	\scalebox{0.9}{\hspace{4.1cm}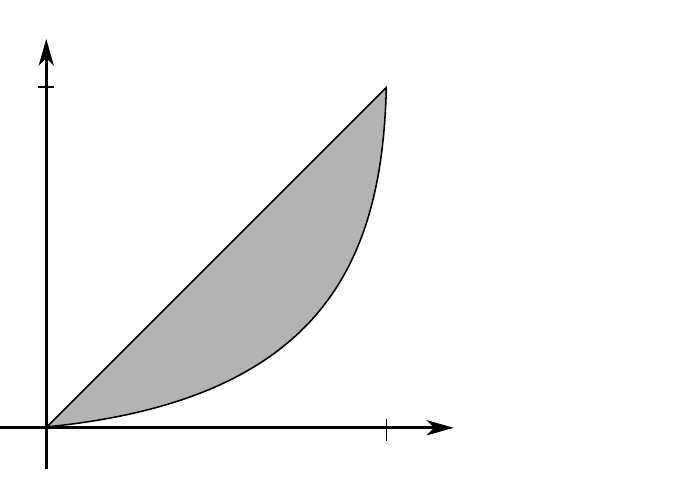}
	\caption{\small Lorenz curve of a nonnegative random variable $Y$. The Gini index $G_Y\in[0,1]$ corresponds to twice
		the shaded area, and the Pietra index $P_Y$ corresponds to the maximal vertical distance between the
		Lorenz curve and the line of perfect equality. \label{fig:lorenz}}
\end{figure}

We next summarize equivalent characterizations of $\preceq_{(2)}$ and $\preceq_L$ under the assumption that $Y$ and $Z$ have equal means:
\begin{proposition} \label{prop:equiv_stochdom}
 Let $Y$ and $Z$ be nonnegative random variables with $\E[Y]=\E[Z]<\infty$. The following statements are equivalent:
 
 \begin{enumerate}[label=\emph{(\roman*)},ref=(\roman*)]
  \item \label{stoch_l1}$Y\succeq_{(2)} Z$ 
  \item \label{stoch_l4}$Y \preceq_L\ Z$
  \item \label{stoch_l2}$\E[u(Y)]\geq \E[u(Z)]$, for all nondecreasing concave utility functions $u$
  \item \label{stoch_l3}$\E[f(Y)]\leq \E[f(Z)]$, for all nondecreasing convex utility functions $f$
  \item \label{stoch_l5}$\E[f(Y)]\leq \E[f(Z)]$, for all convex utility functions $f$ 
  \item \label{stoch_l6}$\E[\max(Y-a,0)]\leq \E[\max(Z-a,0)]$, for all $a\in\mathbb{R}$.
 \end{enumerate}
\end{proposition}

Note that the assumption that $Y$ and $Z$ are nonnegative and have same mean is not required for each of the equivalences listed above;
we refer to~\cite[Chapter 17]{MOA79} for a more detailed discussion of these results.
\ref{stoch_l1}$\Leftrightarrow$\ref{stoch_l2} is proved in the seminal papers by Hadar and Russel~\cite{HR69} and Rothschild and Stiglitz~\cite{RS70}. \ref{stoch_l1}$\Leftrightarrow$\ref{stoch_l3} is shown in Li and Wong~\cite{LW99}, and \ref{stoch_l1}$\Leftrightarrow$\ref{stoch_l4} is due to Atkinson~\cite{Atk70}. \ref{stoch_l4}$\Leftrightarrow$\ref{stoch_l5}$\Leftrightarrow$\ref{stoch_l6} can be found in~\cite{MOA79}.

We recall that in economics, risk aversion is commonly modeled by the fact that risk-averse agents seek to maximize a concave increasing utility function of their wealth, while risk-lovers have a convex increasing utility function. The above proposition tells us that $Y\succeq_{(2)} Z$ means that risk-averse expected-utility maximizers prefer gamble $Y$ over gamble $Z$,
while risk-lovers prefer $Z$ over $Y$, and explains why $Y\succeq_{(2)} Z$ can be interpreted as ``$Y$ is less dispersed than $Z$''.

\smallskip
The Lorenz curve of a nonnegative random variable with finite expectation can be used to define several dispersion indices.
\begin{definition}
	Let $Y$ be a nonnegative random variable with finite expectation. The Gini index $G_Y$ of $Y$ is defined to be twice the area between the line of perfect equality and the Lorenz curve of $Y$, and 
	the Pietra index $P_Y$ is defined to be the maximal distance between the line of perfect equality and the Lorenz curve, i.e.,
	$$
	G_Y := 2 \int_0^1 (p-L_Y(p))\, dp\qquad \text{and}\qquad 
	P_Y:= \max_{p\in[0,1]} p-L_Y(p).
	$$
\end{definition}
Both indices are depicted in Figure~\ref{fig:lorenz}. 
Many other equivalent expressions are known for the above indices, see e.g.~\cite{Eli18}, notably
$$
G_Y=\frac{1}{2\mu}\,\E\big[\,|Y^{(1)}- Y^{(2)}|\,\big] \qquad \text{and}\qquad
P_Y=\frac{1}{2\mu}\,\E\big[\,|Y- \mu|\,\big],
$$
where $\mu:=\E[Y]$ and $Y^{(1)},Y^{(2)}$ are independent copies of $Y$. Thus, it follows from Jensen's inequality that $0\leq P_Y \leq G_Y \leq 1$. 

\bigskip
Returning to SEBP, we will show in the next section that we can obtain improved performance guarantees
for \FLEPT\ when all processing time distributions come from a
certain family. We next introduce the concept of 
stochastically dominated family at the second-order, and show that 
most common families of nonnegative two-parameter probability distributions
satisfy this property when we bound their coefficient of variation.
Recall that the squared coefficient
of variation of a random variable $X$ with mean $\mu$ and variance $\sigma^2$ is
 $\Delta=\frac{\sigma^2}{\mu^2}$.

\begin{definition}
	Let $\mathcal{D}$ be a family of nonnegative random variables
	with finite expectation. 
	We say that
	$\mathcal{D}$ is second-order stochastically dominated (SSD)
	if there exists a nonnegative random variable $Z^{\mathcal{D}}$ such that\vspace{-0.5em} 
	\begin{equation}\label{dominated_family}
	\E[Z^{\mathcal{D}}] = 1
	\qquad \text{and} \qquad 
	\frac{X}{\E[X]} \succeq_{(2)} Z^{\mathcal{D}},\ \ \forall X\in\mathcal{D}.\vspace{-0.5em}
	\end{equation}
	When the above holds, we use the shorthand expression
	``$\mathcal{D}$ is an SSD family with minimal element $Z^{\mathcal{D}}$''. 
\end{definition}

\begin{table}[t]
  \begin{tabular}{cll}
  \hline
  \rule{0pt}{3ex}
   $\mathcal{D}_{\Delta}$ & Family & Minimal element $Z^{\mathcal{D}_{\Delta}}$\\\hline
  Lognormal & \multirow{2}{*}{$\left\{ \mathcal{LN}(\mu,\sigma): \mu\in\R, \sigma^2\leq \log(\Delta + 1)\right\}$}
            & \multirow{2}{*}{$\mathcal{LN}\Big(\log (\frac{1}{\sqrt{\Delta+1}}),\sqrt{\log (\Delta + 1)}\Big)$}\\
  $\mathcal{L}_\Delta$    &\\[0.3em]
  Gamma     & \multirow{2}{*}{$\left\{\operatorname{Gamma}(k,\theta): k\geq\frac{1}{\Delta}, \theta>0\right\}$}
            & \multirow{2}{*}{$\operatorname{Gamma}(\frac{1}{\Delta},\Delta)$}\\
$\mathcal{G}_{\Delta}$ &\\[0.3em]
  Weibull   & \multirow{2}{*}{$\left\{\operatorname{Weibull}(k,\lambda): \frac{\Gamma(1+2/k)}{\Gamma(1+1/k)^2}\leq \Delta+1\right\}$} 
            & \multirow{2}{*}{$\operatorname{Weibull}\left(k_{\Delta}, \frac{1}{\Gamma(1+1/k_\Delta)}\right)$}\\
$\mathcal{W}_\Delta$ & \\[0.3em]
 Uniform    & \multirow{2}{*}{$\left\{\mathcal{U}(a,b): 0\leq a\leq b \leq a+\sqrt{3\Delta}(a+b)\right\}$}
            \ \ \ & \multirow{2}{*}{$\mathcal{U}(1-\sqrt{3\Delta},1+\sqrt{3\Delta})$}\\
$\mathcal{U}_{\Delta}$  &\\[0.3em]
Two-Point & \multirow{2}{*}{$\left\{ a\cdot\operatorname{Bernoulli}(p) : a>0, \frac{1}{\Delta+1}\leq p \leq 1\right\}$}
            & \multirow{2}{*}{$ (\Delta+1)\cdot\operatorname{Bernoulli}(\frac{1}{\Delta+1})$}\\
$\mathcal{B}_{\Delta}$ & \\[0.3em]
$\alpha$-Triangular & \multirow{2}{*}{$\left\{\operatorname{Triang}\Big(a,b,c_{\alpha}\Big): 0\leq a\leq b \leq a + 3 \gamma^\alpha_\Delta\right\}$} 
                   & {\footnotesize $\operatorname{Triang}\Big(1-(1+\alpha)\gamma^\alpha_\Delta,\ 
		1+(2-\alpha)\gamma^\alpha_\Delta,
		$}   \\
$\mathcal{T}^\alpha_\Delta$ & & ~\quad\quad\ \ {\footnotesize $1+(2\alpha-1)\gamma^\alpha_\Delta\Big)$}\\\hline
  \end{tabular}
 \medskip
 \caption{\small\setstretch{1.1} Example of families of nonnegative random variables $\mathcal{D}_{\Delta}$ with squared coefficient of variation bounded by $\Delta$ that are SSD, together with their $\preceq_{(2)}$-minimal element of unit mean $Z^{\mathcal{D}_{\Delta}}$. For Weibull distribution, the coefficient $k_\Delta$
 is the unique positive solution of the equation $\frac{\Gamma(1+2/k)}{\Gamma(1+1/k)^2}= \Delta+1$. The $\alpha$-Triangular distributions
 $\operatorname{Triang}(a,b,c_{\alpha})$
 have a fixed shape parameter $\alpha\in[0,1]$, such that 
 the mode of these distributions is located at
 a fraction $\alpha$ of the support segment $[a,b]$, i.e.,
 $c_{\alpha}=(1-\alpha) a + \alpha b$.
 The $\preceq_{(2)}$-minimal element of these families depends on
 the parameter $\gamma^\alpha_\Delta:=\sqrt{\frac{2\Delta}{\alpha^2-\alpha+1}}.$
 \label{tab:stochdom}}
\end{table}

\begin{proposition}
Let $\mathcal{D}_{\Delta}$ be one of the families of nonnegative random variables with squared coefficient of variation bounded by $\Delta$ listed in Table~\ref{tab:stochdom}. Then, $\mathcal{D}_{\Delta}$ is SSD, with minimal element $Z^{\mathcal{D}_{\Delta}}$ given in the table.
\end{proposition}

\proof Let $\mathcal{D}$ be any of the
families listed in the table. 
It follows from standard formulas that if $X\in\mathcal{D}$
is a random variable with mean $\E[X]=\mu$
and squared coefficient of variation $\mathbb{V}[X]/\mu^2=\delta$, then it holds
$X \overset{d}{=} \mu Z^{\mathcal{D}_\delta}$,
where the equality holds in distribution.
Therefore, to show~\eqref{dominated_family},
it suffices to establish that 
$(Z^{\mathcal{D}_{\delta}})_{\delta \geq 0}$ is monotonically decreasing
for the relation of second-order stochastic dominance, i.e., we have $(\delta \leq \delta' \iff Z_{\delta}^{\mathcal{D}} \succeq_{(2)} Z_{\delta'}^{\mathcal{D}})$. 

For the cases of lognormal distributions,
gamma distributions, and Weibull distributions,
this monotonicity property
is proved in~\cite{Lev73}, \cite{ROS00} and~\cite{LP04}, respectively.
For uniform and two-point distributions, it is straightforward to verify that
$F_{Z^{\mathcal{D}_\delta}}$ and $F_{Z^{\mathcal{D}_{\delta'}}}$ satisfy the single-crossing property~\eqref{singlecrossing}. 
\qed

\begin{remark}
If $X'\overset{d}{=}a+X$ for some $a\geq 0$ and $X\in\mathcal{D}$, where $\mathcal{D}$ is an SSD family with minimal element $Z^{\mathcal{D}}$, then it holds
$\frac{X'}{\E[X']}\succeq_{(2)}\frac{X}{\E[X]}\succeq_{(2)} Z^{\mathcal{D}}$, which can be checked with the single-crossing property~\eqref{singlecrossing}.
 As a result, the families of Table~\ref{tab:stochdom} can be extended by allowing an additional nonnegative location parameter.
 For example, the set of two-point distributions 
 with arbitrary support points $a,b\geq 0$, $\{a+(b-a)\cdot\operatorname{Bernoulli}(p): 0\leq a < b, \frac{1}{\Delta+1}\leq p \leq 1\}$ is SSD with minimal element $Z^{\mathcal{B}_{\Delta}}$.
\end{remark}

We continue with a lemma giving an upper bound 
on the cost of a machine for a fixed assignment policy, when the processing times come from an SSD family. For the case of SEBP, we only need the following result for the function $f:x\mapsto \max(x,1)$, but
we give it in the following form as it holds for a larger class of  functions.

\begin{lemma}~\label{lemm:stoch_dominance_one_machine}
	For an SSD family $\mathcal{D}$ with minimal element $Z^{\mathcal{D}}$, let $P_1,\ldots,P_k\in\mathcal{D}$ with $\sum_{j=1}^k \E[P_j]=x$.
	Then, for every nondecreasing convex function $f$, it holds
	$$
	\mathbb{E}[f(P_1+\ldots+P_k)] \leq \E[f(x Z^{\mathcal{D}})].
	$$
\end{lemma}

\proof
	Denote by $\mu_j$ the expectation
	of $P_j$, so that the second-order dominance
	property~\eqref{dominated_family} reads
	$\frac{P_j}{\mu_j} \succeq Z^{\mathcal{D}}$, for all $j\in[k]$. Then,
	we know from Theorem~10 of the work by Li and Wong~\cite{LW99}
	that $\sum_{j=1}^k P_j \succeq_{(2)} \sum_{j=1}^k \mu_j Z^{(j)}$,
	where $Z^{(1)},\ldots,Z^{(k)}$ 
	are independent copies
	of $Z^{\mathcal{D}}$.
	Now, we can apply Theorem~12 of~\cite{LW99}, which
	states that any convex combination of independent copies
	of some random variable $X$ has second-order stochastic dominance over
	$X$ itself. Hence,
	$$
	\sum_{j=1}^k P_j \succeq_{(2)}
	\sum_{j=1}^k \mu_j Z^{(j)}
	= x \sum_{j=1}^k \frac{\mu_j}{x} Z^{(j)}
	\succeq_{(2)} x Z^{\mathcal{D}}.$$
	Finally, we observe that $\sum_{j=1}^k P_j$ and $x Z^{\mathcal{D}}$ have the same mean ($=x$), hence we obtain the desired result from Proposition~\ref{prop:equiv_stochdom}~\ref{stoch_l3}.\qed

\begin{lemma}\label{lemm:stochmin}
	Let $\mathcal{D}$ be an SSD family with minimal element $Z^{\mathcal{D}}$, and define $g^{\mathcal{D}}(x):=\E[\max(xZ^{\mathcal{D}},1)]$.
	Then, the function $x\mapsto 1+x-g^{\mathcal{D}}(x)$ is nondecreasing, and for all $X\in\mathcal{D}$ it holds
	$$
	1+\E[X]-g^{\mathcal{D}}(\E[X]) \leq \E[\min(X,1)].
	$$
\end{lemma}
\proof
	First, note that $g^{\mathcal{D}}$ is a convex function as the expectation of a convex function, hence,
	its right derivative $g_{\mathcal{D}}'^+(x)$ exists for all $x>0$ and is a nondecreasing function.
	Since $\max(t,1) \leq 1+t$ holds for all $t\geq 0$, we have
	\sloppy{$g^{\mathcal{D}}(x)=\E[\max(xZ^{\mathcal{D}},1)]\leq \E[1+xZ^{\mathcal{D}}]=1+x,\ $} for all $x\geq 0$. This implies that the right derivatives
	of $g^{\mathcal{D}}$ satisfy $g_{\mathcal{D}}'^{+}(x)\leq 1, \forall x>0$,
	hence the function $x\mapsto 1+x-g^{\mathcal{D}}(x)$ is nondecreasing.
	
	Using the equality $1+t=\max(t,1)+\min(t,1)$, we obtain
	\begin{align*}
	1 + \E[X] = \E\big[1+\E[X] Z^{\mathcal{D}}\big]  
	= &\ \underbrace{ \E[\max(\E[X] Z^{\mathcal{D}},1)]}_{=g^{\mathcal{D}}(\E[X])}\ + \ \underbrace{\E[\min(\E[X] Z^{\mathcal{D}},1)]}_{\leq \E[\min(X,1)]},
	\end{align*}
	where the inequality follows from Proposition~\ref{prop:equiv_stochdom}~\ref{stoch_l2}, 
	using the fact that
	$\mathcal{D}$ is an SSD family, so $\E[X] \cdot Z^{\mathcal{D}}\preceq_{(2)} X$, and that
	$t\mapsto \min(t,1)$ is concave nondecreasing.\qed

\medskip
Now, we are going to apply these lemmas in order
to get improved performance guarantees 
for SEBP instances with processing times
in an SSD family.

\begin{theorem}\label{theo:bound_gdelta}
	Let $P_1,\ldots,P_n\in\mathcal{D}$
	for an SSD family
	$\mathcal{D}$ with minimal element $Z^{\mathcal{D}}$,
	and let $g^{\mathcal{D}}(x):= \E[\max(xZ^{\mathcal{D}},1)]$. 
	Then, we have
	$$\frac{\LEPTmu}{\OPT} \leq \sup_{t\in[0,1]} (2-\frac{1}{t}) g^{\mathcal{D}}(t) + (\frac{1}{t}-1) g^{\mathcal{D}}(2t).$$
	Numerical values of this bound for several distribution
	families $\mathcal{D}$  are indicated in Table~\ref{tab:approx}.
\end{theorem}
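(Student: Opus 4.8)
The plan is to upper‑bound $\LEPTmu$ machine by machine via Lemma~\ref{lemm:stoch_dominance_one_machine}, exploit the load structure of Lemma~\ref{lemm:bound_ell}, and divide by the elementary lower bound $OPT_\mathcal{P}\ge\max(s,m)$ of Proposition~\ref{prop:order}, where $s=\sum_j\E[P_j]=m\rho$. Since $t\mapsto\max(t,1)$ is convex and nondecreasing, Lemma~\ref{lemm:stoch_dominance_one_machine} gives $\E[\max(X_i^{\LEPTmu},1)]\le g^{\mathcal{D}}(x_i)$ with $x_i:=\sum_{j\in J_i}\E[P_j]$, hence $\LEPTmu\le\sum_{i\in\mathcal{M}}g^{\mathcal{D}}(x_i)$. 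I would then record the elementary properties of $g^{\mathcal{D}}$ that the rest of the argument rests on: writing $g^{\mathcal{D}}(x)=x+\psi(x)$ with $\psi(x):=\E[(1-xZ^{\mathcal{D}})^{+}]$, the function $\psi$ is convex, nonincreasing and $[0,1]$-valued, so $g^{\mathcal{D}}$ is convex and nondecreasing, satisfies $x\le g^{\mathcal{D}}(x)\le 1+x$, and $x\mapsto g^{\mathcal{D}}(x)/x=1+\psi(x)/x$ is nonincreasing.

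Next I would reduce to the case where every machine carries at least two jobs, so that Lemma~\ref{lemm:bound_ell} provides a single $\ell>0$ with $\ell\le x_i\le 2\ell$ for all $i$. A job $j$ with $\E[P_j]>2\ell$ must be alone on its machine (otherwise that machine $i$ would satisfy $x_i\le 2\ell<\E[P_j]\le x_i$); since $\LEPTmu$ sorts jobs by nonincreasing $\E[P_j]$, such a job is assigned first and stays the most loaded machine forever, so deleting it together with its machine leaves exactly $\LEPTmu$ run on the reduced instance $I'$, with $\LEPTmu(I)=\LEPTmu(I')+\E[\max(P_j,1)]$. Combined with an inequality $OPT_\mathcal{P}(I)\ge OPT_\mathcal{P}(I')+\E[\max(P_j,1)]$ and the trivial chain $\LEPTmu(I)\ge OPT_\mathcal{F}(I)\ge OPT_\mathcal{P}(I)$, deleting such a job cannot decrease the ratio; iterating, we may assume $x_i\in[\ell,2\ell]$ for all $i$.

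In this regime I would bound each term $g^{\mathcal{D}}(x_i)$, $x_i\in[\ell,2\ell]$, by the chord of the convex function $g^{\mathcal{D}}$ joining $(\ell,g^{\mathcal{D}}(\ell))$ and $(2\ell,g^{\mathcal{D}}(2\ell))$, then sum over $i$ using $\sum_i x_i=s$ (this is exactly Lemma~\ref{lemm:boundCOP} after rescaling the variables $z_i=(x_i-\ell)/\ell$ into $[0,1]$), obtaining $\LEPTmu\le(2m-\tfrac s\ell)g^{\mathcal{D}}(\ell)+(\tfrac s\ell-m)g^{\mathcal{D}}(2\ell)$. Dividing by $\max(s,m)$ and writing $\sigma:=s/m$ (so $\ell\le\sigma\le 2\ell$), the bound becomes $F(\sigma,\ell):=(2-\tfrac\sigma\ell)g^{\mathcal{D}}(\ell)+(\tfrac\sigma\ell-1)g^{\mathcal{D}}(2\ell)$ when $\sigma\le1$, and $F(\sigma,\ell)/\sigma$ when $\sigma\ge1$. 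A one-line computation gives $\partial_\sigma F=\tfrac1\ell\bigl(g^{\mathcal{D}}(2\ell)-g^{\mathcal{D}}(\ell)\bigr)\ge0$ and $\sigma^{2}\,\partial_\sigma(F/\sigma)=g^{\mathcal{D}}(2\ell)-2g^{\mathcal{D}}(\ell)=\psi(2\ell)-2\psi(\ell)\le0$, so in the first case it suffices to take $\sigma=\min(1,2\ell)$ and in the second $\sigma=\max(1,\ell)$. In each resulting subcase the bound collapses either to $g^{\mathcal{D}}(1)$, or, after substituting $t=\ell$, to $(2-\tfrac1t)g^{\mathcal{D}}(t)+(\tfrac1t-1)g^{\mathcal{D}}(2t)$ with $t\in[\tfrac12,1]$; for $\ell>1$ one uses the monotonicity of $g^{\mathcal{D}}(x)/x$ to get $g^{\mathcal{D}}(\ell)/\ell\le g^{\mathcal{D}}(1)$, which is the same expression at $t=1$. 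This yields $\tfrac{\LEPTmu}{OPT_\mathcal{P}}\le\sup_{t\in[0,1]}(2-\tfrac1t)g^{\mathcal{D}}(t)+(\tfrac1t-1)g^{\mathcal{D}}(2t)$.

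The delicate point, which I expect to absorb most of the technical work, is the inequality $OPT_\mathcal{P}(I)\ge OPT_\mathcal{P}(I')+\E[\max(P_j,1)]$ for a ``big'' job $j$ with $\E[P_j]>2\ell$: the matching identity for $\LEPTmu$ is routine from the sorting rule, but a big job need not satisfy $P_j\ge1$ almost surely. When it does, the inequality follows from a list-scheduling exchange argument showing that in some deterministic optimum such a job occupies a machine alone; in general I would handle it through the truncation device $P_j=\min(P_j,1)+(P_j-1)^{+}$ of Section~\ref{sec:long} together with the sharper bound $\E[OPT(\vec P)]\ge\max(s,m+\beta)$ of Lemma~\ref{lemm:LB}, along the lines of the machine partition used in the proof of Lemma~\ref{lemm:v2}. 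Everything else reduces to elementary convexity and one‑variable calculus.
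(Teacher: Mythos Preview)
Your overall architecture---bound each machine by $g^{\mathcal D}(x_i)$, reduce to $x_i\in[\ell,2\ell]$, apply the chord estimate of Lemma~\ref{lemm:boundCOP}, and optimize over $\rho$---matches the paper's. The gap is precisely where you flag it: the instance-reduction inequality
\[
OPT_\mathcal{P}(I)\ \geq\ OPT_\mathcal{P}(I')+\E[\max(P_j,1)]
\]
is \emph{false} in general, and consequently so is the claim that deleting a big job cannot decrease the ratio. Take $m=2$, $P_1\in\{0,10\}$ each with probability~$\tfrac12$ (so $\E[P_1]=5$), and $P_2=0.5$, $P_3=0.7$ deterministic. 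Then $\LEPTmu$ puts job~$1$ alone on $M_1$, jobs $2,3$ on $M_2$; we have $\ell=1.2$, so $\E[P_1]>2\ell$ and job~$1$ is ``big''. The reduced instance $I'$ (one machine, jobs $2,3$) has $OPT_\mathcal{P}(I')=1.2$, and $\E[\max(P_1,1)]=5.5$, summing to $6.7$. But the adaptive policy that starts job~$1$ on $M_1$ and, upon observing $P_1=0$, splits jobs $2,3$ over the two machines (otherwise puts both on $M_2$) has expected cost $\tfrac12\cdot 2+\tfrac12\cdot 11.2=6.6<6.7$. Hence $OPT_\mathcal{P}(I)<OPT_\mathcal{P}(I')+\E[\max(P_1,1)]$, and indeed $\LEPTmu(I)/OPT_\mathcal{P}(I)\geq 6.7/6.6>1=\LEPTmu(I')/OPT_\mathcal{P}(I')$, so the ratio strictly drops under your deletion.

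The paper avoids this by \emph{not} deleting instances. It works with the lower bound $\E[OPT(\vec P)]\geq\max(s,m+\beta)$ throughout, partitions the machines carrying a single job according to whether $\alpha_i=\E[\min(X_i,1)]$ lies above $2\tilde\ell$, in $[\tilde\ell,2\tilde\ell]$, or below $\tilde\ell$, and then shifts loads $x_i\to x_i'$ and excesses $\beta_i\to\beta_i'$ so that all $x_i'\in[\tilde\ell,2\tilde\ell]$ while the ratio only increases. The key step you are missing is Lemma~\ref{lemm:stochmin}: for a single-job machine in $\mathcal M_2$ (where $\alpha_i<\tilde\ell<x_i$), the paper uses $1+x-g^{\mathcal D}(x)$ nondecreasing together with $1+\E[X_i]-g^{\mathcal D}(\E[X_i])\leq\E[\min(X_i,1)]=\alpha_i$ to obtain $1+\beta_i\leq g^{\mathcal D}(\tilde\ell)+\beta_i'$, which is exactly what lets you replace the true cost $1+\beta_i$ on such a machine by $g^{\mathcal D}(\tilde\ell)$ plus a residual excess that is absorbed into the denominator via the $m+\beta$ bound. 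Your fallback ``truncation plus Lemma~\ref{lemm:LB} along the lines of Lemma~\ref{lemm:v2}'' gestures at the right section, but the reduction there is tailored to the explicit bound $(1-\alpha_i/n_i)^{n_i}$; for the present theorem you genuinely need the $g^{\mathcal D}$-specific estimate of Lemma~\ref{lemm:stochmin}, without which the machines in $\mathcal M_2$ cannot be handled. Once that reduction is in place, your convexity and one-variable calculus for the remainder is essentially what the paper does.
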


\proof
Denote by $\ell$ the minimum expected load
of a machine, as in Lemma~\ref{lemm:bound_ell}.
The main work in this proof will be to show that
there exists a subset of machines $\mathcal{M'}$ of cardinality $m'$, and some $x_i'\in[\ell,2\ell]$,  $\forall i\in\mathcal{M}'$, such that
\begin{equation}\label{startratio}
\frac{\LEPTmu}{\OPT} \leq \frac{ \sum_{i\in\mathcal{M}'} g^{\mathcal{D}}(x_i') }{\max(\sum_{i\in\mathcal{M'}} x_i',m')}.
\end{equation}
Let us first prove the theorem assuming that the above claim is valid. 
We express each $x_i$ as a convex combination of $\ell$ and $2\ell$, writing $x_i'=(2-\frac{x_i'}{\ell})\cdot \ell + (\frac{x_i'}{\ell}-1)\cdot 2\ell$. By convexity of $g^{\mathcal{D}}$,
we have $g^{\mathcal{D}}(x_i')\leq (2-\frac{x_i'}{\ell})\cdot g^{\mathcal{D}}(\ell) + (\frac{x_i'}{\ell}-1)\cdot g^{\mathcal{D}}(2\ell)$. Now, let
$s':=\sum_{i\in\mathcal{M}'} x_i'$. Summing over all $i\in\mathcal{M'}$ we have
$$\hspace{-0.5mm}
	\frac{\LEPTmu}{OPT} \leq \frac{ (2m'-\frac{s'}{\ell}) g^{\mathcal{D}}(\ell) +  (\frac{s'}{\ell}-m') g^{\mathcal{D}}(2\ell) }{\max(s',m')} = \frac{ (2-\frac{\rho}{\ell}) g^{\mathcal{D}}(\ell) + (\frac{\rho}{\ell}-1) g^{\mathcal{D}}(2\ell) }{\max(\rho,1)},
$$
where we have set $\rho:=s'/m'$.	We will now show that the above bound is maximized for $\rho=1$.
To see this, notice that
$f:x\mapsto\max(x,1)$ satisfies $f(x)\leq f(2x)\leq 2f(x)$ for any $x\geq 0$. Hence, taking the expectation we obtain
	\sloppy{$g^{\mathcal{D}}(x)\leq g^{\mathcal{D}}(2x)\leq 2g^{\mathcal{D}}(x)$ for any $x\geq 0$.} Then, our claim follows 
	from the derivative of the above bound with respect
	to $\rho$, which is equal
	to $\frac{1}{\ell}( g^{\mathcal{D}}(2\ell)-g^{\mathcal{D}}(\ell))\geq 0$ for all $\rho\leq 1$,
	and is equal to $-\frac{1}{\rho^2} (2g^{\mathcal{D}}(\ell)-g^{\mathcal{D}}(2\ell))\leq 0$ for all $\rho\geq 1$.
	
	To obtain the statement of the lemma, we show that the supremum
	of the function $h^{\mathcal{D}}: \R_{\geq 0}\to\R_{\geq 0}$ with $t\mapsto (2-\frac{1}{t}) g^{\mathcal{D}}(t) + (\frac{1}{t}-1) g^{\mathcal{D}}(2t)$ is attained in the interval $[0,1]$. For $t>1$, we express $t$ as a convex combination of $1$ and $2t$, that is, $t = \frac{t}{2t-1}\cdot 1 + \frac{t-1}{2t-1} \cdot 2t$. We have
	$g^{\mathcal{D}}(t)\leq \frac{t}{2t-1} g^{\mathcal{D}}(1) + \frac{t-1}{2t-1} g^{\mathcal{D}}(2t)$ by convexity of $g^{\mathcal{D}}$. Multiplying both sides of this inequality by $2-\frac{1}{t}$, we obtain
	$$(2-\frac{1}{t}) g^{\mathcal{D}}(t) \leq g^{\mathcal{D}}(1) + (1-\frac{1}{t}) g^{\mathcal{D}}(2t) \implies
	h^{\mathcal{D}}(t)\leq g^{\mathcal{D}}(1)=h^{\mathcal{D}}(1).$$

It remains to show that~\eqref{startratio} holds
for some $\mathcal{M}'\subseteq \mathcal{M}$
of cardinality $m'$ and some $x'\in[\ell,2\ell]^{m'}$.
Applying Lemma~\ref{lemm:stoch_dominance_one_machine} to the function $f:x\mapsto\max(x,1)$,
we obtain $\FLEPT\leq \sum_{i\in\mathcal{M}} g^{\mathcal{D}}(x_i)$. Then, we readily observe that
if $x_i\leq 2\ell$ holds for all machines,
we could simply set $\mathcal{M}'=\mathcal{M}$ and $x_i'=x_i$, $\forall i\in\mathcal{M}$, and~\eqref{startratio} would
follow from Proposition~\ref{prop:order}.
Moreover, we know from
Lemma~\ref{lemm:bound_ell}
that $x_i\leq 2\ell$ holds whenever $n_i\geq 2$. Thus, we only need
to take special care of those machines where \FLEPT\
assigns a single job and $x_i > 2\ell$. To this end, we introduce a partition of the machines relying on the truncated loads $\alpha_i$'s:
\begin{align*}
&\mathcal{M}_0:=\{i\in\mathcal{M}:\ n_i=1,\ x_i>2\ell,\ \alpha_i>2\ell\}\\
&\mathcal{M}_1:=\{i\in\mathcal{M}:\ n_i=1,\ x_i>2\ell,\ \alpha_i\leq 2\ell\}\\
&\mathcal{M}_2:=\mathcal{M}\setminus(\mathcal{M}_0\cup \mathcal{M}_1).
\end{align*}
Define $x'_i:=\max(\ell,\alpha_i)$ for all $i\in\mathcal{M}_0\cup \mathcal{M}_1$,
and $x'_i:=x_i$ for all $i\in\mathcal{M}_2$. Furthermore, we define $\mathcal{M}':=\mathcal{M}\setminus\mathcal{M}_0$ with $m':=|\mathcal{M}'|$,
and we note that $x_i'$ lies in the interval $[\ell,2\ell]$ for all $i\in\mathcal{M}'$, as required.
	
Let us bound the cost induced by \FLEPT\ on each machine. Let $\delta_i:=x_i-x_i'\geq 0$, $\forall i \in\mathcal{M}$. We claim that $\E [\max(X^{\FLEPT}_i,1)]\leq g^{\mathcal{D}}(x_i')+\delta_i$ holds for each machine.
For the machines $i\in\mathcal{M}_2$, this results
from Lemma~\ref{lemm:stoch_dominance_one_machine},
together with $x_i'=x_i$ and $\delta_i=0$.
For the other machines, we have 
$\E [\max(X^{\FLEPT}_i,1)]=1+\beta_i$,
because these machines host a single job.
Now, we distinguish two cases: \texttt{(1)} If $\alpha_i \geq \ell$, then $1+\beta_i\leq g^{\mathcal{D}}(x_i')+\delta_i$
follows from $1\leq g^{\mathcal{D}}(x_i')$
and $\delta_i=x_i-\alpha_i=\beta_i$;
\texttt{(2)} If $\alpha_i<\ell$, then we obtain
from Lemma~\ref{lemm:stochmin} and $x_i>\ell$ that
$$
1+\ell-g^{\mathcal{D}}(\ell) \leq 1+x_i-g^{\mathcal{D}}(x_i) \leq \E[\min(X_i^{\FLEPT},1)] = \alpha_i,
$$
which implies $1+ \beta_i\leq \alpha_i +\beta_i - \ell + g^{\mathcal{D}}(\ell) =g^{\mathcal{D}}(\ell)+x_i-\ell=g^{\mathcal{D}}(x_i')+\delta_i$. Hence, the claim is proved, and
summing the bound over all machines yields
\begin{equation}\label{Uphi}
\FLEPT\leq \sum_{i\in\mathcal{M}} g^{\mathcal{D}}(x_i')+\delta_i.
\end{equation}
On the other hand, we have
\begin{equation}\label{Lopt}
OPT \geq \max(\sum_{i\in\mathcal{M}} x_i,m+\sum_{i\in\mathcal{M}} \beta_i)
\geq \max(\sum_{i\in\mathcal{M}} x_i',m) +\sum_{i\in\mathcal{M}}\delta_i
\end{equation}
where the first inequality is
Lemma~\ref{lemm:LB}, and
the second one follows from $\beta_i\geq \delta_i, \forall i$.
Now, we combine~\eqref{Uphi} and~\eqref{Lopt}, and use the fact that removing $\sum_{i\in\mathcal{M}} \delta_i\geq 0$ from both the numerator and the numerator can only worsen the ratio, to obtain
$$
\frac{\LEPTmu}{OPT} \leq \frac{\sum_{i\in\mathcal{M}} g^{\mathcal{D}}(x_i')}{\max(\sum_{i\in\mathcal{M}} x_i',m)}.
	$$
At this stage, observe that we have already proved that~\eqref{startratio} holds in the case where $\mathcal{M}_0$ is empty.
So it only remains to handle the case $\mathcal{M}_0\neq \emptyset$. In this case, we have $2\ell <\alpha_i \leq 1$ for all $i\in\mathcal{M}_0$, which implies $x_i'\leq 1$ for all machines $i\in\mathcal{M}$. Thus, 
$\sum_{i\in\mathcal{M}} x_i'\leq m$ and $\sum_{i\in\mathcal{M}'} x_i'\leq m'$. Altogether, we obtain
\begin{equation}
 \frac{\LEPTmu}{OPT} \leq
 \frac{\sum_{i\in\mathcal{M}} g^{\mathcal{D}}(x_i')}{m}
 \leq
 \frac{\sum_{i\in\mathcal{M}'} g^{\mathcal{D}}(x_i')}{m'}=
 \frac{\sum_{i\in\mathcal{M}'} g^{\mathcal{D}}(x_i')}{\max(\sum_{i\in\mathcal{M}'} x_i',\ m')},
\end{equation}
where we have used $g^{\mathcal{D}}(x_i')\leq 1$ for all $i\in\mathcal{M}_0$ in the second inequality, so we could remove the constant $|\mathcal{M}_0|=m-m'$ from both the numerator and the denominator in order to increase the ratio. This concludes the proof.
\qed	

\begin{table}
{\small
$$
{\renewcommand{\arraystretch}{1.2}
\begin{array}{l|ccccccc}
\hline
\Delta & 0 & \frac{1}{8}& \frac{1}{6}& \frac{1}{4}& \frac{1}{3}& \frac{1}{2}& 1 \\ \hline
\mathcal{L}_\Delta& ~~1.1716~~& ~~1.1990~~& ~~1.2112~~& ~~1.2334~~& ~~1.2526~~& ~~1.2843~~& ~~1.3485~~\\[1mm]
\mathcal{G}_\Delta& ~~1.1716~~& ~~1.2012~~& ~~1.2148~~& ~~1.2401~~& ~~1.2629~~& ~~1.3023~~& ~~1.3896~~\\[1mm]
\mathcal{W}_\Delta& ~~1.1716~~& ~~1.2044~~& ~~1.2186~~& ~~1.2450~~& ~~1.2685~~& ~~1.3080~~& ~~1.3896~~\\[1mm]
\mathcal{U}_\Delta& ~~1.1716~~& ~~1.2041~~& ~~1.2210~~& ~~1.2526~~& ~~1.2812~~& - & - \\[1mm]
\mathcal{B}_\Delta& ~~1.1716~~& ~~1.2222~~& ~~1.2385~~& ~~1.2702~~& ~~1.3005~~& ~~1.3573~~& ~~1.5000~~\\[1mm]
\mathcal{T}^0_\Delta& ~~1.1716~~& ~~1.2000~~& ~~1.2163~~& ~~1.2468~~& ~~1.2744~~& ~~1.3230~~& - \\[1mm]
\mathcal{T}^{\frac{1}{4}}_\Delta& ~~1.1716~~& ~~1.2012~~& ~~1.2170~~& ~~1.2468~~& - & - & - \\[1mm]
\mathcal{T}^{\frac{1}{2}}_\Delta& ~~1.1716~~& ~~1.2053~~& ~~1.2207~~& - & - & - & - \\[1mm]
\mathcal{T}^{\frac{3}{4}}_\Delta& ~~1.1716~~& ~~1.2079~~& - & - & - & - & - \\[1mm]
\mathcal{T}^1_\Delta& ~~1.1716~~& ~~1.2088~~& - & - & - & - & - \\
\hline
\end{array}
}
$$
}
\caption{\small\setstretch{1.1} Approximation guarantees from Theorem~\ref{theo:bound_gdelta} for several families of probability distributions (cf.\ Table~\ref{tab:stochdom} for a description of the symbols used in the first column) and upper bound $\Delta$ on the squared coefficient of variation. Note that $\Delta\leq \frac{1}{3}$ always holds for a nonnegative uniform random variable,
and $\Delta\leq\frac{\alpha^2-\alpha+1}{2(1+\alpha)^2}$ for a nonnegative $\alpha$-triangular distributed variable.
\label{tab:approx}}
\end{table}

\begin{remark}
	As Table~\ref{tab:approx} shows, the bound of Theorem~\ref{theo:bound_gdelta} is not tight. For example, the bound can get larger than $1+e^{-1}$ for large values of $\Delta$. 
	Moreover, if $\Delta=0$ (corresponding to the deterministic version EBP), the bound equals $4-2\sqrt{2}\simeq1.1716$, but a tight bound of $\frac{13}{12}\simeq 1.0833$ is known in this case~\cite{DKST98}.
\end{remark}

\begin{remark}
	For the case of lognormal processing times ($P_j\in\mathcal{L}_\Delta$, $\forall j$),
	we conjecture\footnote{We could check
		with a symbolic computation software
		that $t=\frac{1}{\sqrt{2}}$ is a local maximum,
		and the function to maximize seems to be unimodal,
		but we didn't invest more time to prove this.}
	that the supremum in the bound of Theorem~\ref{theo:bound_gdelta} is
	always reached at $t = \frac{1}{\sqrt{2}}$. If the
	conjecture is true, we would obtain the following closed
	form formula for an upper bound on the performance guarantee of $\LEPTmu$:
	$$
	\frac{\LEPTmu}{OPT_\mathcal{P}} \leq 
	2(\sqrt{2}-1)
	\left[
	1+\sqrt{2}
	\Phi_0\left(\frac{\ln(2(\Delta + 1))}{2 \sqrt{\ln(\Delta + 1)}} \right)
	-\Phi_0\left(\frac{\ln\frac{2}{\Delta + 1}}{2 \sqrt{\ln(\Delta + 1)}} \right)
	\right],
	$$
	where $\Phi_0$ denotes the cumulative distribution function of the standard normal law.
\end{remark}

We will now use Theorem~\ref{theo:bound_gdelta} to derive a distribution-free
bound that depends only on the Pietra index of the processing times.

\begin{theorem}\label{theo:pietra}
	Let $P_1,\ldots,P_n$ be nonnegative random variables with
	finite expectation and Pietra index at most $\varrho$. Then,
	$$
	\frac{\LEPTmu}{OPT} \leq 2 \left( 2-\sqrt{2} + \varrho(\sqrt{2}-1)\right).
	$$
\end{theorem}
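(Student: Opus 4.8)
The plan is to rerun the argument behind Theorem~\ref{theo:bound_gdelta}, but with the role of the ``family function'' $g^{\mathcal D}(x)=\E[\max(xZ^{\mathcal D},1)]$ taken over by the explicit piecewise‑linear function $g_\varrho\colon\R_{\geq 0}\to\R_{\geq 0}$, $g_\varrho(x):=\max(x,1)+\varrho\min(x,1)$ (so $g_\varrho(x)=1+\varrho x$ for $x\le 1$ and $g_\varrho(x)=x+\varrho$ for $x\ge 1$). Rather than try to exhibit a dominating distribution, I would isolate the facts about $g^{\mathcal D}$ that the proof of Theorem~\ref{theo:bound_gdelta} actually uses and re-establish them for $g_\varrho$ from the weaker hypothesis ``Pietra index at most $\varrho$'': namely (i) the one–machine estimate of Lemma~\ref{lemm:stoch_dominance_one_machine} applied to $f=\max(\cdot,1)$, (ii) the two properties of $g^{\mathcal D}$ in Lemma~\ref{lemm:stochmin}, and (iii) convexity together with $g^{\mathcal D}(x)\le g^{\mathcal D}(2x)\le 2g^{\mathcal D}(x)$ and $g^{\mathcal D}\ge 1$.

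\textbf{The one–machine bound.} Fix a machine $i$, set $S:=\sum_{j\in J_i}P_j$ and $x:=x_i=\E[S]$. Since $\E\lvert S-\E S\rvert\le\sum_{j\in J_i}\E\lvert P_j-\E P_j\rvert=\sum_{j\in J_i}2\,\E[P_j]\,P_{P_j}\le 2\varrho x$, the load $S$ itself has Pietra index at most $\varrho$ (this also covers single–job machines). Using $\max(a,1)=a+(1-a)^+$ one gets $\E[\max(S,1)]=x+\E[(1-S)^+]=x+\int_0^1 F_S(t)\,dt$. If $x\ge 1$: as $F_S$ is nondecreasing, its average over $[0,1]$ is at most its average over $[0,x]$, so $\int_0^1 F_S\le\tfrac1x\int_0^x F_S=\tfrac1x\E[(x-S)^+]=\tfrac1{2x}\E\lvert S-\E S\rvert\le\varrho$, hence $\E[\max(S,1)]\le x+\varrho=g_\varrho(x)$. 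If $x<1$: $\E[(1-S)^+]=(1-x)+\E[(S-1)^+]\le(1-x)+\E[(S-x)^+]=(1-x)+\tfrac12\E\lvert S-\E S\rvert\le(1-x)+\varrho x$, hence $\E[\max(S,1)]\le 1+\varrho x=g_\varrho(x)$. This is the analogue of Lemma~\ref{lemm:stoch_dominance_one_machine} for $f=\max(\cdot,1)$, and it is asymptotically tight (a single scaled Bernoulli job with a vanishing amount of mass escaping to infinity being extremal).

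\textbf{Properties of $g_\varrho$ and conclusion.} The function $g_\varrho$ is convex (its slope jumps from $\varrho$ to $1$ at $x=1$, and $\varrho<1$), satisfies $g_\varrho\ge 1$ and $g_\varrho(x)\le g_\varrho(2x)\le 2g_\varrho(x)$, and $x\mapsto 1+x-g_\varrho(x)$ is nondecreasing (equal to $(1-\varrho)x$ on $[0,1]$ and to the constant $1-\varrho$ on $[1,\infty)$). For the remaining part of Lemma~\ref{lemm:stochmin}, a single job $X$ with Pietra index $\le\varrho$ and mean $x_0$ obeys $1+x_0-g_\varrho(x_0)\le\E[\min(X,1)]$: for $x_0\ge 1$ the left side is $1-\varrho$ and $\E[\min(X,1)]=x_0-\E[(X-1)^+]=1-\int_0^1 F_X\ge 1-\varrho$ by the averaging step above; for $x_0<1$ the left side is $(1-\varrho)x_0$ and $\E[\min(X,1)]=x_0-\E[(X-1)^+]\ge x_0-\E[(X-x_0)^+]\ge(1-\varrho)x_0$. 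With (i)--(iii) in hand, the proof of Theorem~\ref{theo:bound_gdelta} (machine partition, load manipulations, the two applications of Lemma~\ref{lemm:boundCOP}, and the reduction of the supremum to $[0,1]$, all of which use $g^{\mathcal D}$ only through these properties) carries over verbatim with $g^{\mathcal D}$ replaced by $g_\varrho$ and yields $\tfrac{\LEPTmu}{OPT_{\mathcal P}}\le\sup_{t\in(0,1]}(2-\tfrac1t)g_\varrho(t)+(\tfrac1t-1)g_\varrho(2t)$. Evaluating: on $(0,\tfrac12]$ the expression is the constant $1+\varrho$; on $[\tfrac12,1]$ it equals $4-2\varrho-(1-\varrho)(2t+\tfrac1t)$, maximized at $t=\tfrac1{\sqrt2}$ where $2t+\tfrac1t=2\sqrt2$, giving $4-2\sqrt2+2\varrho(\sqrt2-1)=2\big(2-\sqrt2+\varrho(\sqrt2-1)\big)$; since $\varrho<1$ this exceeds $1+\varrho$, so it is the overall supremum, which is the asserted estimate (note $4-2\sqrt2=2(2-\sqrt2)$ recovers the $\Delta=0$ entry $1.1716$ of Table~\ref{tab:approx}).

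\textbf{Main obstacle.} The one delicate point is producing the one–machine bound in exactly the form $g_\varrho(x)=\max(x,1)+\varrho\min(x,1)$. The naive estimate $\E[\max(S,1)]\le\max(x,1)+\varrho x$, obtained by applying the triangle inequality directly inside $\E[(1-S)^+]$, is not good enough: the function $\max(x,1)+\varrho x$ violates ``$1+x-g(x)$ nondecreasing'', so Lemma~\ref{lemm:stochmin} and hence the accounting in the proof of Theorem~\ref{theo:bound_gdelta} would break down. The averaging argument for monotone distribution functions in the case $x\ge 1$ is precisely what upgrades the bound to the required $g_\varrho$. Everything else — checking that $g_\varrho$ slots into each use of $g^{\mathcal D}$, and the single‑variable optimization in the last step — is routine.
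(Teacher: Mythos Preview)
Your argument is correct and reaches the same final bound as the paper's proof, but by a genuinely different route. The paper stays within the second-order stochastic dominance framework: it truncates to $P_j\le\theta\,\E[P_j]$, uses the Lorenz-curve characterization (via Atkinson's equivalence~\eqref{Lorenz-stoch}) to exhibit an explicit three-point variable $Z^\varrho_\theta$ that second-order dominates the whole family from below, invokes Theorem~\ref{theo:bound_gdelta} as a black box, and then lets $\theta\to\infty$ to obtain $g_\infty(t)=\max(1+\varrho t,\varrho+t)$---which is exactly your $g_\varrho$. You instead bypass stochastic dominance entirely: subadditivity $\E\lvert S-\E S\rvert\le\sum_j\E\lvert P_j-\E P_j\rvert$ shows that the machine load inherits Pietra index at most $\varrho$, and your averaging step $\int_0^1 F_S\le\frac1x\int_0^x F_S$ for $x\ge1$ delivers the one-machine estimate directly in the sharp form $g_\varrho(x)=\max(x,1)+\varrho\min(x,1)$ required for the Lemma~\ref{lemm:stochmin} step. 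Your route is more elementary and self-contained (no Lorenz curves, no limiting argument), while the paper's is more modular (once Theorem~\ref{theo:bound_gdelta} is in place, one only needs to identify the dominating variable). Both land on the same one-variable optimization and the same value $4-2\sqrt2+2\varrho(\sqrt2-1)=2\bigl(2-\sqrt2+\varrho(\sqrt2-1)\bigr)$; note that the displayed statement carries $2\sqrt2-1$ where both proofs in fact produce $2-\sqrt2$, consistent with your sanity check against the $\Delta=0$ entry $4-2\sqrt2\approx1.1716$ of Table~\ref{tab:approx}.
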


\begin{remark}
	Due to the inequality $P_Y\leq G_Y$, the above result also holds if all processing times have Gini index at most $\varrho$.
\end{remark}

\proof{
	We first prove the result for the case where 
	all random variables have bounded support,
	and the result will follow
	by standard continuity arguments. 
	For a constant $\theta$ large enough (we require 
	$\theta\cdot(1-\varrho)>1$), define
	$\mathcal{D}^\varrho_\theta$
	as the set of all nonnegative random variables with finite expectation
	and Pietra index at most $\varrho$ such that 
	$Y\leq \theta \E[Y]$ holds almost surely.
	Henceforth we assume $P_j\in\mathcal{D}^\varrho_\theta,$ for all jobs $j$.
	
	Our assumption implies that $F_{P_j}^{-1}(1)\leq \theta \E[P_j]$, hence the left derivative of the Lorenz function
	at $p=1$ is $L_{p_j}'^-(1)=\frac{1}{\E[P_j]}F_{P_j}^{-1}(1)\leq \theta$. Using this and the convexity of $L_{P_j}$, we obtain
	$\sloppy{L_{P_j}(p) \geq 1+\theta (p-1)}$ for all $ p\in[0,1]$.
	By definition of the Pietra index, we know for all $p\in[0,1]$ that 
	$L_{P_j}(p)\geq p-\varrho$. So we have
	$$
	L_{P_j}(p) \geq \max(0,p-\varrho,1+\theta (p-1)),\quad \forall p\in[0,1].
	$$
	It is easy to see that the right-hand side of the above
	expression coincides with the Lorenz curve of the random variable
	$Z^\varrho_\theta$ such that
	$$
	\mathbb{P}[Z^\varrho_\theta=0] = \varrho,\qquad \mathbb{P}[Z^\varrho_\theta=1] = 1-\varrho\frac{\theta}{\theta-1},\quad \text{and}\quad \mathbb{P}[Z^\varrho_\theta=\theta]=\frac{\varrho}{\theta-1},
	$$
	where these probabilities are nonnegative since we assumed $\theta \cdot(1-\varrho)>1$.
	The Lorenz curve is scale-invariant by construction,
	so $P_j$ and 
	$\overline{P_j}:=\frac{P_j}{\E[P_j]}$
	have the same Lorenz curve. The above inequality
	indicates that $\overline{P_j} \preceq_{L} Z_\theta^\varrho$, 
	and so it holds $\overline{P_j} \succeq_{(2)} Z_\theta^\varrho$ 
	by Proposition~\ref{prop:equiv_stochdom}~\ref{stoch_l4}.
	This shows that the family $\mathcal{D}^\varrho_\theta$ is
	SSD with minimal element $Z^\varrho_\theta$, so by Theorem~\ref{theo:bound_gdelta}:
	$$\frac{\LEPTmu}{OPT} \leq \sup_{t\in[0,1]} (2-\frac{1}{t}) g_\theta(t) + (\frac{1}{t}-1) g_\theta(2t),$$
	where $g_\theta(t)=\E[\max(1,t\cdot Z^{\varrho}_\theta)]=
	\max(1,\varrho+1+\frac{\varrho\theta}{\theta-1}(t-1),\varrho+t)$. 
	It is easy to see that for all $\sloppy{t>0}$, $g_\theta(t)$ is nondecreasing with respect to $\theta$. As a consequence, we obtain
	$\sloppy{g_\theta(t)\leq g_\infty(t):=\max(1+\varrho t,\varrho+t)}$ for all $ t\geq 0$. Finally, simple calculus shows that the function
	$t\mapsto  (2-\frac{1}{t}) g_\infty(t) + (\frac{1}{t}-1) g_\infty(2t)$ reaches its maximum over $t\in[0,1]$ at 
	$t=\frac{1}{\sqrt{2}}$, and we get the desired result
	after substitution. \qed
	}

\medskip
 This theorem improves the bound of $1+e^{-1}$ from~\cite{SST18} for all instances with a Pietra index bounded by $\varrho \leq \frac{1+\sqrt{2}}{2}e^{-1}+\frac{1-\sqrt{2}}{2}\approx 0.237$.


\section{Conclusion and Future work}

We showed that $\LEPTmu$ is, in some sense, the best algorithm among the class of fixed assignment policies we can hope for. 
This result might inspire future work to consider the same or similar and related ratios for other scheduling problems,
in which we compare within or against several subclasses of policies,
in order to obtain more interesting and precise results on the performance of algorithms.

Moreover, we studied the worst-case behaviour of the $\LEPTmu$ policy
for instances with bounded Pietra index, or for 
second-order stochastically dominated families of random processing times. It would be interesting to investigate whether these techniques can be applied to other stochastic scheduling problems.

Another direction for future work on SEBP
is the study of the case of unequal bins,
which is relevant for the application to surgery scheduling, where
operating rooms may have different opening hours.
Since the class of fixed assignment policies is relevant for surgery scheduling,
another interesting open question is whether there exists a policy $\Pi\in\mathcal{F}$ 
with a performance guarantee $<\frac{4}{3}$ in the class $\mathcal{F}$.

Last but not least, a two-stage stochastic online extension of the EBP could yield a better understanding
of policies for the surgery scheduling problem with add-on cases (emergencies).


\end{document}